\newtheorem{Thm}{Theorem}[section]
\newtheorem{Prp}[Thm]{Proposition}
\newtheorem{Lem}[Thm]{Lemma}
\newtheorem{Cor}[Thm]{Corollary}
\newtheorem{Def}[Thm]{Definition}
\newtheorem{Rem}[Thm]{Remark}
\theoremstyle{definition}
\newcommand{\scal}[3][]{\ifthenelse{\equal{#1}{}}{
  \left\langle #2,\,#3 \right\rangle
}{\ifthenelse{\equal{#1}{(}}{
  \left( #2,\,#3 \right)
}{\ifthenelse{\equal{#1}{[}}{
  \left[ #2,\,#3 \right]
}{
  #1\left( #2,\,#3 \right)
}}}}
\newcommand{\arr}[2]{%
  \begin{array}{@{}#1@{}}#2\end{array}}
\newcommand{\abs}[1]{\left| #1 \right|}
\newcommand{\dd}[2][]{\frac{\partial #1}{\partial #2}}
\newcommand{\mA}{\mathcal A}
\newcommand{\mD}{\mathcal D}
\newcommand{\mO}{\mathcal O}
\newcommand{\mS}{\mathcal S}
\newcommand{\bN}{\mathbb N}
\newcommand{\bR}{\mathbb R}
\renewcommand{\div}{\mathrm{div}}
\newcommand{\Der}{\mathrm{Der}}
\newcommand{\Hom}{\mathrm{Hom}}
\newcommand{\dsvol}{\mathrm{dsvol}}
\newcommand{\dvol}{\mathrm{dvol}}
\newcommand{\ev}{\mathrm{ev}}
\newcommand{\id}{\mathrm{id}}
\newcommand{\sdet}{\mathrm{sdet}}
\newcommand{\spann}{\mathrm{span}}
\newcommand{\str}{\mathrm{str}}
\newcommand{\tr}{\mathrm{tr}}
\newcommand{\laplace}{\triangle}
\newcommand{\vx}{\vec x}
\renewcommand{\title}[1]{\vbox{\center\LARGE{\textsc{#1}}}\vspace{5mm}}
\renewcommand{\author}[1]{\vbox{\center\large{\textsc{#1}}}\vspace{5mm}}
\newcommand{\address}[1]{\vbox{\center\em#1}}
\newcommand{\email}[1]{\vbox{\center\tt#1}\vspace{5mm}}
\begin{document}

\title{Divergence Theorems and the Supersphere}

\author{Josua Groeger$^1$}

\address{Humboldt-Universit\"at zu Berlin, Institut f\"ur Mathematik,\\
  Rudower Chaussee 25, 12489 Berlin, Germany }

\email{$^1$groegerj@mathematik.hu-berlin.de}

\begin{abstract}
\noindent
The transformation formula of the Berezin integral holds, in the non-compact case, only up to boundary integrals,
which have recently been quantified by Alldridge-Hilgert-Palzer.
We establish divergence theorems in semi-Riemannian supergeometry by means of the flow of vector fields
and these boundary integrals, and show how superharmonic functions are related to conserved quantities.
An integration over the supersphere was introduced by Coulembier-De Bie-Sommen as a generalisation of the Pizzetti integral.
In this context, a mean value theorem for harmonic superfunctions was established.
We formulate this integration along the lines of the general theory and
give a superior proof of two mean value theorems based on our divergence theorem.
\end{abstract}

\section{Introduction}

The analogon of the classical integral transformation formula in supergeometry holds
for compactly supported quantities only.
The nature of the additional boundary terms occurring in the non-compact case was only recently studied in \cite{AHP12},
where it was observed that a global Berezin integral can be defined by introducing a retraction as an
additional datum.

In the present article, we establish divergence theorems in semi-Riemannian supergeometry
by means of the flow of vector fields as studied in \cite{MSV93} together with the main result of \cite{AHP12}
concerning the change of retractions. We focus on the non-degenerate case, but also yield
a divergence theorem for a degenerate boundary, thus generalising a result in \cite{Uen95}.
While the lack of boundary compatibility conditions in general leads to the aforementioned boundary terms,
we show, moreover, that divergence-free vector fields are conserved quantities in a very natural sense
for any boundary supermanifold.
As shown in \cite{Gro13}, such vector fields arise from Killing vector fields via superharmonic maps.
We apply that theory to superharmonic functions.

The supersphere occurs naturally in the theory of the supersymmetric quantum hall effect (\cite{Has08}, \cite{HT13})
and underlies certain field theories, see \cite{SW05}.
In a series of papers (\cite{DBS07}, \cite{CDS09}, \cite{Cou12}), an integration over the supersphere was introduced,
first by an extension of Pizzetti's formula to super-polynomials, then by a formula for general superfunctions
later expressed in terms of an embedding, while a particular case of this integral was already studied in \cite{Jar88}.
In this context, \cite{CDS10} established a super analogon of the classical mean value theorem for harmonic functions.

The second purpose of the present article is to formulate this supersphere integration
in terms of a retraction $\gamma$ along the lines of the general theory and finally to give
a new self-contained proof of two mean value theorems for harmonic functions based on our first divergence theorem,
thus avoiding the subtle points left open in the proof of \cite{CDS10}.
As a corollary, we yield a simple expression for the boundary term concerning the change of retractions
from $\gamma$ to the standard retraction.

\section{Integration on Semi-Riemannian Supermanifolds}
\label{secIntegration}

In this section, we will briefly recall elements of the theory of semi-Riemannian supermanifolds
and integration of Berezinian forms, with a detour to the divergence of vector fields.

Throughout the article, we consider supermanifolds and their morphisms in
the sense of Berezin-Kostant-Leites as in \cite{Lei80}.
A supermanifold is thus, in particular, a ringed space $(M,\mO_M)$,
and a morphism $\varphi:(M_0,\mO_M)\rightarrow(N_0,\mO_N)$ of supermanifolds
consists of two parts $\varphi=(\varphi_0,\varphi^{\sharp})$ with $\varphi^{\sharp}$
denoting a generalised pullback of superfunctions $f\in\mO_N$.
Modern monographs on the general theory of supermanifolds include \cite{Var04} and \cite{CCF11}.
Following the conventions of \cite{Gro11a} and \cite{Gro13}, we denote the (super) tangent sheaf of $M$,
i.e. the sheaf of superderivations of $\mO_M$, by $\mS M:=\Der(\mO_M)$. A vector field $X\in\mS M$ can,
in local coordinates $x$, be expanded as
\begin{align}
\label{eqnVectorField}
X=\dd{x^k}\cdot X^k\qquad\textrm{such that}\qquad X(x^j)=(-1)^{\abs{x^j}\abs{X^j}}X^j
\end{align}
An even superfunction $f\in\mO_N$ is called positive if the underlying function $f_0\in C^{\infty}(M_0)$ is.
In this case, $f$ has a unique positive square root $\sqrt{f}$. Considering $X(\sqrt{f}\cdot\sqrt{f})$
for a vector field $X$, we obtain the calculation rule
\begin{align}
\label{eqnCalculationRule}
X(\sqrt{f})=\frac{1}{2}\sqrt{f}^{-1}X(f)\qquad\textrm{and analogous for fractional powers of $f$}.
\end{align}
The differential of a morphism $\varphi:M\rightarrow N$
is the morphism of sheaves $d\varphi:(\varphi_0)_*\mS M\rightarrow\mS\varphi$ defined by $d\varphi(X):=X\circ\varphi^{\sharp}$,
where $\mS\varphi:=\Der(\mO_N,(\varphi_0)_*\mO_M)$ denotes the sheaf of derivations (vector fields) along $\varphi$.
A bilinear form $B\in\Hom_{\mO_N}(\mS N\otimes\mS N,\mO_N)$ is pulled back under $\varphi$ to a bilinear form
on $M$ as follows.
\begin{align}
\label{eqnPullback}
\scal[\varphi^*B]{X}{Y}:=\scal[B_{\varphi}]{d\varphi[X]}{d\varphi[Y]}\;,\qquad
\scal[B_{\varphi}]{\varphi^{\sharp}\circ A}{\varphi^{\sharp}\circ B}:=\varphi^{\sharp}\circ\scal[B]{A}{B}
\end{align}
For consistent notation, we also prescribe $\varphi^*(f):=\varphi^{\sharp}(f)$ for a superfunction $f\in\mO_N$.

A semi-Riemannian supermetric $g$ on $M$ is an even, nondegenerate and supersymmetric bilinear form.
It follows that the odd part of the dimension of a semi-Riemannian supermanifold $(M,g)$ is an even number
which, as in \cite{Cou12}, we shall denote by $\dim M=m|2n$.
Moreover, $M$ possesses a unique connection $\nabla$ which is torsion-free and metric with respect to $g$,
called the Levi-Civita connection \cite{Goe08}.

\begin{Lem}
\label{lemChristoffel}
Let $x=(x^1,\ldots,x^{m+2n})$ be local coordinates of $M$ and denote by $\partial_i:=\partial_{x^i}$ the induced
local vector fields. Then the Christoffel symbols for the Levi-Civita connection defined by
$\Gamma_{ij}^l\partial_l=\nabla_{\partial_i}\partial_j$ are
\begin{align*}
\Gamma_{ij}^m=\frac{1}{2}\left(\partial_ig_{jk}+(-1)^{\abs{i}\abs{j}}\partial_jg_{ik}-(-1)^{\abs{k}(\abs{i}+\abs{j})}\partial_kg_{ij}\right)g^{km}
\end{align*}
where the matrix element $g^{km}$ is defined by $g_{lk}g^{km}=\delta_l^m$. It satisfies
\begin{align*}
\abs{g^{km}}=\abs{k}+\abs{m}\quad\mathrm{and}\qquad g^{km}=(-1)^{\abs{k}\abs{m}+\abs{k}+\abs{m}}g^{mk}
\end{align*}
\end{Lem}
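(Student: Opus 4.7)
The plan is to mimic the classical Koszul argument, paying careful attention to the Koszul signs that appear when one commutes vector fields with metric entries. I would first translate the two defining properties of the Levi-Civita connection into local coordinate statements. Since $[\partial_i,\partial_j]=0$ for coordinate fields, the torsion-free condition $\nabla_X Y-(-1)^{\abs{X}\abs{Y}}\nabla_Y X=[X,Y]$ specialises to the symmetry
\begin{align*}
\Gamma_{ij}^l=(-1)^{\abs{i}\abs{j}}\Gamma_{ji}^l\;.
\end{align*}
The graded metric compatibility $\partial_i\scal{Y}{Z}=\scal{\nabla_{\partial_i}Y}{Z}+(-1)^{\abs{i}\abs{Y}}\scal{Y}{\nabla_{\partial_i}Z}$, applied to $Y=\partial_j$, $Z=\partial_k$, yields
\begin{align*}
\partial_ig_{jk}=\Gamma_{ij}^l g_{lk}+(-1)^{\abs{i}\abs{j}+\abs{l}(\abs{i}+\abs{k})}\Gamma_{ik}^l g_{jl}\;.
\end{align*}

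Next I would form the combination $\partial_i g_{jk}+(-1)^{\abs{i}\abs{j}}\partial_j g_{ik}-(-1)^{\abs{k}(\abs{i}+\abs{j})}\partial_k g_{ij}$, substitute the above expansion three times, and use the torsion symmetry of $\Gamma$ together with the supersymmetry $g_{ij}=(-1)^{\abs{i}\abs{j}}g_{ji}$ to collapse four of the six resulting $\Gamma$-terms. The remaining two combine, using torsion symmetry once more, into a single term $2\Gamma_{ij}^l g_{lk}$. Contracting with $g^{km}$ then gives the claimed Christoffel formula; the factor $\tfrac{1}{2}$ arises from this final step.

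For the second assertion, I would read off $\abs{g^{km}}=\abs{k}+\abs{m}$ directly from $g_{lk}g^{km}=\delta_l^m$, since $g_{lk}$ has parity $\abs{l}+\abs{k}$ and $\delta_l^m$ is nonzero only when $\abs{l}=\abs{m}$. To obtain the supersymmetry, I would multiply the defining identity from the left by $g^{jl}$ and apply the supersymmetry of $g_{lk}$, carefully tracking the Koszul signs introduced when moving $g^{jl}$ past $g_{lk}$; a second contraction then expresses $g^{mk}$ in terms of $g^{km}$ with precisely the sign $(-1)^{\abs{k}\abs{m}+\abs{k}+\abs{m}}$.

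The main obstacle is purely bookkeeping: the Koszul-type cancellation in the graded setting involves six signed terms, several of which require a commutation of $\Gamma_{\bullet\bullet}^l$ past $g_{l\bullet}$ of mixed parity, so the delicate point is to verify that the resulting signs line up to match the stated formula exactly, rather than producing a spurious overall sign.
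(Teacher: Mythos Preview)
Your proposal is correct and follows the standard route. For the Christoffel formula you spell out the Koszul computation that the paper simply outsources to \cite{GW12b}; your parity argument is the same as the paper's (evenness of $g^{-1}$). The only visible difference is in the supersymmetry of $g^{km}$: the paper checks directly that $(-1)^{\abs{k}\abs{m}+\abs{k}+\abs{m}}g^{mk}$ satisfies the defining relation $g_{lk}\,X^{km}=\delta_l^m$ and invokes uniqueness of the inverse, whereas you derive the relation by left-multiplying the defining equation by $g^{jl}$ and using the supersymmetry of $g_{lk}$. Both arguments are short and valid; the paper's version avoids having to first argue that the right inverse is also a left inverse, while yours is the more constructive derivation.
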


\begin{proof}
The Christoffel symbols are obtained by a standard calculation e.g. as in Prp. 2.12 of \cite{GW12b}.
Moreover, the parity of $g^{km}$ as stated follows at once from the evenness of $g^{-1}$. Finally,
the last equality holds by
\begin{align*}
g_{lk}\cdot(-1)^{\abs{k}\abs{m}+\abs{k}+\abs{m}}g^{mk}
&=(-1)^{\abs{k}\abs{m}+\abs{k}+\abs{m}+(\abs{l}+\abs{k})(\abs{m}+\abs{k})+\abs{k}\abs{l}}g^{mk}g_{kl}\\
&=(-1)^{\abs{m}+\abs{l}\abs{m}}g^{mk}g_{kl}
=(-1)^{\abs{m}+\abs{l}\abs{m}}\delta^m_l=\delta^m_l
\end{align*}
\end{proof}

By an extension of the Gram-Schmidt procedure as detailed e.g. in Sec. 2.8 of \cite{DeW84},
there is an adapted local basis $(e_1,\ldots,e_{t+s+2n})$ of $\mS M$ with $t+s=m$,
such that $g=g_0$ on the level of matrices, which we shall call an $OSp_{(t,s)|2n}$-frame,
with $g_0$ as follows.
\begin{align}
\label{eqnStandardMetric}
\renewcommand{\arraystretch}{1.5}
g_0&:=\left(\arr{@{\;}c@{\;}|@{\;}c@{\;}}
{G_{t,s}&0\\\hline 0&J_{2n}}\right)\quad\mathrm{where}\\
G_{t,s}&:=\left(\arr{cccc}{-1_{t\times t}&0\\0&1_{s\times s}}\right)\;,\quad
J_{2n}:=\left(\arr{ccc}{J_2&0&0\\0&\ddots&0\\0&0&J_2}\right)\;,\quad
J_2:=\left(\arr{cc}{0&-1\\1&0}\right)\nonumber
\end{align}
Moreover, we set
\begin{align*}
Je_k:=\left\{\arr{ll}{-e_k&k\leq t\\e_k&t<k\leq t+s\\e_{k+1}&k=t+s+2l-1\\-e_{k-1}&k=t+s+2l}\right.
\end{align*}
This is such that $\scal[g]{e_k}{Je_j}=(-1)^{\abs{e_k}}\delta_{kj}$ and, moreover, every $v\in M$ has the expansion
\begin{align}
\label{eqnFrameExpansion}
v=\scal[g]{v}{e_j}Je_j=(-1)^{\abs{e_j}}\scal[g]{v}{Je_j}e_j
\end{align}

In the following, we will assume $M$ to be superoriented in the sense that it has an atlas of coordinate charts
$U_i\cong\bR^{m|2n}$ such that, for every coordinate transformation
$\varphi=(\varphi_0,\varphi^{\sharp}):\bR^{m|2n}\rightarrow\bR^{m|2n}$,
both $\det(d\varphi_0)>0$ and $\sdet(d\varphi)>0$ (cf. Sec. 4 of \cite{Gro13}).
In accord with \cite{AHP12}, sections of the superdeterminant sheaf $\sdet M$
(also known as Berezinian sheaf, see Chp. 3 of \cite{DM99})
will be referred to as Berezinian forms. By definition, a Berezinian form $\omega\in\sdet M$ has
the local form $[dy]\cdot f$ with respect to coordinates $y$ and transforms according to
\begin{align}
\label{eqnBerezinTransformation}
[dy]\cdot f=[dx]\cdot\frac{[d\varphi^*(y)]}{[dx]}\cdot\varphi^*(f)\;,\qquad\frac{[d\varphi^*(y)]}{[dx]}:=\sdet(d\varphi)
\end{align}
where $x$ are different coordinates and $\varphi$ denotes the coordinate transformation.
When the reference to $\varphi$ is clear, we shall also abbreviate (\ref{eqnBerezinTransformation}) as
$[dy]\cdot f=[dx]\cdot\frac{[dy]}{[dx]}\cdot f$.

Let $\varphi:M\rightarrow N$ be an isomorphism of supermanifolds and $\omega\in\sdet N$.
In the following, we will tacitly assume that isomorphisms are orientation preserving. The pullback of $\omega$
under $\varphi$ is defined, locally for $\omega=[dy]\cdot f$, as
\begin{align}
\label{eqnBerezinPullback}
\varphi^*\omega:=[d\varphi^*(y)]\cdot\varphi^*(f)
\end{align}
Note that $\varphi^*(y)$ is a coordinate system. By (\ref{eqnBerezinTransformation}), this construction is well-defined.

We define the integral over a Berezinian form $[dx]\cdot f$ on a coordinate chart $M\supseteq U\cong\bR^{m|2n}$ with coordinates
$x=(u^1,\ldots,u^m,\theta^1,\ldots,\theta^{2n})$ to be the Berezin integral
\begin{align}
\label{eqnBerezinIntegral}
\int_U[dx]\,f:=\int_{\bR^{m|2n}} d^mud^{2n}\theta f:=(-1)^{s(m,2n)}\int_{\bR^m}d^mu\,f^{(1,\ldots,1)}
\end{align}
where $f^{(1,\ldots,1)}$ is the coefficient corresponding to the multiindex $I=(1,\ldots,1)$ in the expansion
$f=\theta^I\cdot f_I$ and $s(m,2n)$ is a sign, which is usually set by convention to
\begin{align}
\label{eqnBerezinSign}
s(m,2n):=n(2n-1)\qquad\textrm{such that}\qquad\int d^{2n}\theta=\dd{\theta^1}\ldots\dd{\theta^{2n}}
\end{align}
By the transformation formula of Berezin integration (Thm. 4.6.1 in \cite{Var04})
and our assumption of classical orientedness ($\det(d\varphi_0)>0$
for coordinate transformations $\varphi$), (\ref{eqnBerezinIntegral}) induces a well-defined
integral $\int_M\omega$ for $\omega\in\sdet M$, provided that $M$ is compact.
While the assumption of orientedness may be easily dropped by considering Berezin densities instead of forms,
compactness of $M$ is essential. The non-compact case was studied in \cite{AHP12}, and will be summarised
below.

By our second condition ($\sdet(d\varphi)>0$), the metric $g$ induces a canonical Berezinian form
$\dsvol_g$, referred to as super volume form, which has the local form
\begin{align}
\label{eqnVolumeForm}
\dsvol_g=[dx]\cdot\sqrt{\abs{\sdet g_x}}
\end{align}
in coordinates $x$ where $\sdet(g_x)$ denotes the super-determinant of the matrix
$g_x:=\scal[g]{\partial_{x^l}}{\partial_{x^k}}$. By (\ref{eqnBerezinPullback}), the pullback of $\dsvol_g$
under an automorphism $\varphi:M\rightarrow M$ reads
\begin{align}
\label{eqnSuperVolumeFormPullback}
\varphi^*\dsvol_g=\dsvol_g\cdot\frac{1}{\sqrt{\abs{\sdet g}}}\sdet(d\varphi)\varphi^*\left(\sqrt{\abs{sdet g}}\right)
\end{align}

We will also need to consider relative Berezinian forms. Letting $S$ be another supermanifold,
we define the sheaf of $S$-Berezinian forms by
\begin{align*}
\sdet_SM:=\sdet M\otimes_{\mO_M}\mO_{M\times S}
\end{align*}
An $S$-form $\omega\in\sdet_SM$ has the local form $\omega=[dx]\cdot f$ with $x$ denoting coordinates on $M$
and $f\in\mO_{M\times S}$. The integral over $\omega$ in (\ref{eqnBerezinIntegral})
is now a function $\int_U\omega\in\mO_S$. This construction is used in the context of maps and vector fields
with flesh where $S=\bR^{0|L}$. The latter terminology was introduced in \cite{Hel09},
while the same concept occurs with several names in the literature, see \cite{DF99b} and \cite{Khe07}.
Maps with flesh allow for having ''odd component fields'' and are deeply related to inner Hom objects
in the category of supermanifolds \cite{SW11}. For details on the derived differential calculus, see \cite{Gro11a},
while an application is given in Sec. \ref{subsecConservedQuantities} below.

For the Lie derivative to be introduced next, we need the pullback of a form $\omega\in\sdet M$ under a morphism
$\varphi:M\times S\rightarrow M$, which we define locally as
\begin{align}
\label{eqnRelativePullback}
\varphi^*([dy]\cdot f):=[d\varphi^*(y)]\cdot\varphi^*(f):=[dx]\cdot\sdet\left(d\varphi|_{\mS M\otimes\mO_{M\times S}}\right)\cdot\varphi^*(f)
\end{align}
where $d\varphi$ is identified with a matrix with respect to coordinates $x$ and $y$ on $M$ considered as domain and range, respectively.
This construction yields a well-defined $S$-form $\varphi^*\omega\in\sdet_SM$, provided that $d\varphi|_{\mS M\otimes\mO_{M\times S}}$ is invertible
and orientation preserving. In this case, the analogon of (\ref{eqnSuperVolumeFormPullback}) continues to hold.

\subsection{The Divergence of a Vector Field}

As shown in \cite{MSV93}, every super vector field $X\in\mS M$ possesses a unique flow, that is
a morphism $\varphi=(\varphi_0,\varphi^{\sharp}):\mD(X)\rightarrow M$ such that
\begin{align}
\label{eqnFlow}
\ev|_{t=t_0}\circ D\circ\varphi^{\sharp}=\ev|_{t=t_0}\circ\varphi^{\sharp}\circ X\;,\qquad
\ev|_{t=0}\circ\varphi^{\sharp}=\id
\end{align}
holds, where $\mD(X)$ is an open subsupermanifold of $\bR^{1|1}\times (M,\mO_M)$ and $D$ can be chosen
to be the lift to $\mD(X)$ of the vector field $\partial_t+\partial_{\tau}$ on $\bR^{1|1}$ with global coordinates $(t,\tau)$.
The pullback of a Berezinian form $\omega\in\sdet M$ under the flow is defined according to (\ref{eqnRelativePullback}).
By the second condition in (\ref{eqnFlow}), the necessary condition on $d\varphi$ is satisfied at least for sufficiently
small $t$ (depending on $x\in M_0$), such that the following definition of the Lie derivative
of a Berezinian $\omega\in\sdet M$ makes sense.
\begin{align*}
L_X\omega:=\ev|_{t=0}D\circ\varphi^*\omega\in\sdet M
\end{align*}
The Lie derivative of tensors can be defined analogous, see Sec. 3 of \cite{Gro13}.

\begin{Def}
\label{defDivergence}
Let $X\in\mS M$ be a super vector field on a semi-Riemannian supermanifold $(M,g)$. The
divergence of $X$ is defined as follows.
\begin{align*}
\dsvol_g\cdot\div X:=L_X\dsvol_g
\end{align*}
\end{Def}

We will show in the following proposition that, as in the classical case,
the divergence of a vector field can be characterised by a local formula, which can as well be taken
as a definition. In the proof of the divergence Thm. \ref{thmSuperDivergence} it will, however,
be helpful to use Def. \ref{defDivergence} directly. For the characterisation, we will need the following two Jacobi lemmas.

\begin{Lem}[Jacobi's Formula]
\label{lemJacobi}
Let $A$ be a supercommutative superalgebra with even part $A_0$.
Let $X\in GL_n(A_0)$ and $\xi$ be a super derivation on $A$. Then
\begin{align*}
\xi(\det X)=\det(X)\cdot\tr\left(X^{-1}\cdot\xi(X)\right)
\end{align*}
\end{Lem}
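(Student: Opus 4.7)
The plan is to adapt the classical proof of Jacobi's formula, exploiting the hypothesis $X\in GL_n(A_0)$ (so every entry $X_{ij}$ is even) to keep the super signs from the Leibniz rule $\xi(ab)=\xi(a)b+(-1)^{\abs{\xi}\abs{a}}a\,\xi(b)$ under control. The only genuinely super feature of the argument will be parity bookkeeping; once one verifies that no signs appear, the classical computation goes through unchanged.

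First I would expand the determinant as
\begin{align*}
\det X=\sum_{\sigma\in S_n}\mathrm{sgn}(\sigma)\,X_{1\sigma(1)}\cdots X_{n\sigma(n)},
\end{align*}
which is unambiguous because the even entries $X_{ij}$ commute pairwise. Applying $\xi$ term by term via the super Leibniz rule, each factor being differentiated has $\abs{a}=0$, so no sign is generated; similarly, the image $\xi(X_{j\sigma(j)})$ (of parity $\abs{\xi}$) may be moved freely past any even factor $X_{i\sigma(i)}$. This yields the classical-looking formula
\begin{align*}
\xi(\det X)=\sum_{i,j}\xi(X_{ij})\cdot C_{ij},
\end{align*}
where $C_{ij}$ is the ordinary $(i,j)$-cofactor of $X$.

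Next I would invoke the cofactor identity $C_{ij}=\det(X)\cdot(X^{-1})_{ji}$, which holds verbatim for matrices over any commutative ring — in particular over $A_0$. Substitution gives
\begin{align*}
\xi(\det X)=\det(X)\cdot\sum_{i,j}\xi(X_{ij})\cdot(X^{-1})_{ji}=\det(X)\cdot\tr\bigl(\xi(X)\cdot X^{-1}\bigr).
\end{align*}
Since the entries of $X^{-1}$ remain even (being polynomials in the $X_{ij}$ divided by $\det X\in A_0$), cyclicity of the ordinary trace $\tr(AB)=(-1)^{\abs{A}\abs{B}}\tr(BA)$ applies with no sign, converting $\tr(\xi(X)\cdot X^{-1})$ into the desired $\tr(X^{-1}\cdot\xi(X))$.

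The only potential obstacle is the sign accounting, which is really the whole content of the proof. Specifically, one must confirm that (i) the super Leibniz rule produces no signs because $\xi$ differentiates even elements only, (ii) the possibly odd factor $\xi(X_{ij})$ commutes with the even $X_{kl}$ without signs, and (iii) the trace may be cycled because one of the two factors, namely $X^{-1}$, has even entries. Together these three observations reduce the statement to the classical Jacobi formula.
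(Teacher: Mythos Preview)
Your proof is correct and follows essentially the same approach as the paper: both differentiate the polynomial expression for $\det X$ entrywise and identify the resulting coefficients as the entries of the adjugate $\mathrm{adj}(X)=\det(X)\,X^{-1}$, with the evenness of all $X_{ij}$ rendering the super signs trivial. The paper phrases this compactly via the chain rule $\xi(\det X)=\dd[\det X]{X_{kl}}\cdot\xi(X_{kl})$, whereas you unpack the permutation expansion and spell out the parity bookkeeping explicitly, but the argument is the same.
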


\begin{proof}
The determinant is polynomial in the entries of $X$ such that the chain rule $\xi(\det X)=\dd[\det X]{X_{kl}}\cdot\xi(X_{kl})$
holds. The statement then follows from the observation that the first factor is the $(lk)$-th entry of $\mathrm{adj}(X)$.
\end{proof}

\begin{Lem}[Jacobi's Formula]
\label{lemSuperJacobi}
Let $A$ be a supercommutative superalgebra.
Let $X\in GL_{m|2n}(A)$ and $\xi$ be an even super derivation on $A$. Then
\begin{align*}
\xi(\sdet X)=\sdet X\cdot\str(X^{-1}(\xi X))
\end{align*}
\end{Lem}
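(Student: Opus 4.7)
The plan is to reduce the super Jacobi formula to the ordinary one (Lemma~\ref{lemJacobi}) by exploiting the explicit block-matrix expression for the superdeterminant. Write $X$ in block form
\[
X=\begin{pmatrix}A&B\\C&D\end{pmatrix},\qquad X^{-1}=\begin{pmatrix}P^{-1}&-P^{-1}BD^{-1}\\-D^{-1}CP^{-1}&D^{-1}+D^{-1}CP^{-1}BD^{-1}\end{pmatrix}
\]
with $A$ of size $m\times m$ and $D$ of size $2n\times 2n$ even, $B,C$ odd, and with Schur complement $P:=A-BD^{-1}C$. The defining identity of the superdeterminant reads $\sdet X=\det(P)\cdot\det(D)^{-1}$, where both factors are ordinary determinants of even $A_0$-valued matrices, so Lemma~\ref{lemJacobi} applies.

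First, since $\xi$ is even it respects parities and obeys the ordinary Leibniz rule on products of even quantities. Applying it together with Lemma~\ref{lemJacobi} to the product $\det(P)\cdot\det(D)^{-1}$ yields
\[
\xi(\sdet X)=\sdet X\cdot\Bigl(\tr\bigl(P^{-1}\xi P\bigr)-\tr\bigl(D^{-1}\xi D\bigr)\Bigr).
\]
Thus it remains to identify the bracket with $\str\bigl(X^{-1}\xi X\bigr)$.

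Next, I would expand $\xi P=\xi A-\xi B\cdot D^{-1}C+BD^{-1}\xi D\,D^{-1}C-BD^{-1}\xi C$ (using $\xi D^{-1}=-D^{-1}\xi D\,D^{-1}$), insert into $\tr(P^{-1}\xi P)$, and simultaneously compute the two diagonal blocks of the product $X^{-1}\xi X$ from the block form above. Denoting those diagonal blocks by $U$ (size $m\times m$, even) and $V$ (size $2n\times 2n$, even), a direct block multiplication gives
\begin{align*}
U &= P^{-1}\xi A-P^{-1}BD^{-1}\xi C,\\
V &= -D^{-1}CP^{-1}\xi B+\bigl(D^{-1}+D^{-1}CP^{-1}BD^{-1}\bigr)\xi D.
\end{align*}
The supertrace $\str(X^{-1}\xi X)=\tr(U)-\tr(V)$ can then be rewritten; using the cyclicity of the ordinary trace on even matrices and the identities $P^{-1}B D^{-1}\xi C\mapsto$ trace contribution matching the $BD^{-1}\xi C$ term of $\tr(P^{-1}\xi P)$, and analogously for the $\xi B$ piece, one recovers exactly $\tr(P^{-1}\xi P)-\tr(D^{-1}\xi D)$.

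The main obstacle will be bookkeeping of signs: the blocks $B,C$ are odd, so $\xi B,\xi C$ are odd as well, and moving them through the cyclicity of the trace requires the supercommutative rule in $A$ rather than genuine commutativity. I would handle this by carrying out the trace manipulations at the level of matrix entries, where supercommutativity of $A$ gives the correct signs directly, rather than at the level of matrix products. Once these sign checks pass, the two expressions agree term by term and the lemma follows.
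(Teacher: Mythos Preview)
Your approach is correct but genuinely different from the paper's. You compute directly via the Schur complement: applying Lemma~\ref{lemJacobi} to $\det P$ and $\det D$ in $\sdet X=\det P\cdot\det D^{-1}$ gives $\sdet X\cdot\bigl(\tr(P^{-1}\xi P)-\tr(D^{-1}\xi D)\bigr)$, and you then match this against $\str(X^{-1}\xi X)=\tr(U)-\tr(V)$ by a block computation. The sign issue you flag is real but resolves cleanly: the two non-obvious identifications are $\tr(P^{-1}\,\xi B\,D^{-1}C)=-\tr(D^{-1}C\,P^{-1}\,\xi B)$ and $\tr(P^{-1}BD^{-1}\,\xi D\,D^{-1}C)=-\tr(D^{-1}C\,P^{-1}BD^{-1}\,\xi D)$, both instances of $\tr(MN)=-\tr(NM)$ when $M,N$ have purely odd entries, which follows entrywise from supercommutativity. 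After these two sign flips everything cancels term by term.

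The paper instead argues by factorisation: it checks the identity on the three elementary block types (block-diagonal, upper unitriangular, lower unitriangular), observes that both sides of the proposed equation satisfy the product rule $S(XY)=S(X)\sdet Y+\sdet X\,S(Y)$, and concludes from the triple-product decomposition of an arbitrary invertible supermatrix. This is shorter and avoids all the entrywise sign bookkeeping, at the cost of invoking the decomposition result. Your route is more computational but entirely self-contained once Lemma~\ref{lemJacobi} is in hand; it also makes transparent exactly where the minus sign in the supertrace originates.
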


\begin{proof}
First, the statement holds for $X$ of the form
\begin{align*}
\left(\arr{cc}{A&0\\0&D}\right)\;,\qquad\left(\arr{cc}{1&B\\0&1}\right)\;,\qquad\left(\arr{cc}{1&0\\C&1}\right)
\end{align*}
Indeed, the first case follows from applying Lem. \ref{lemJacobi} to the matrices $A$ and $D$ in
\begin{align*}
\xi(\sdet X)=\xi\left(\det A\cdot\det D^{-1}\right)=\xi(\det A)\cdot\det D^{-1}+\det A\cdot\xi(\det D^{-1})
\end{align*}
while a short calculation yields the second and third cases.

For $S$ denoting either side of the proposed equation, one obtains the following multiplicative structure.
\begin{align*}
S(XY)=S(X)\sdet Y+\sdet X\cdot S(Y)
\end{align*}
Since any $X$ can be written as a triple product of matrices of the form stated (cf. P. 118 in \cite{Var04}),
the statement is proved.
\end{proof}

\begin{Prp}
\label{prpDivCharacterisation}
The divergence $\div X$ of a super vector field $X$ is characterised by either of the following formulas
in coordinates $x$ or with respect to a local $OSp$-frame $(e_j)$, respectively.
\begin{enumerate}
\renewcommand{\labelenumi}{(\roman{enumi})}
\item
$\div X=\dd{x^k}(X^k)+\frac{1}{\sqrt{\abs{\sdet g}}}X\left(\sqrt{\abs{\sdet g}}\right)
=\frac{1}{\sqrt{\abs{\sdet g}}}\dd{x^k}\left(\sqrt{\abs{\sdet g}}X^k\right)$
\item
$\div X=\dd{x^k}(X^k)+\frac{1}{2}(-1)^{\abs{m}+\abs{l}(\abs{m}+\abs{k})}g^{mk}\cdot\dd{x^l}g_{km}\cdot X^l$
\item
$\div X=\str\left(Y\mapsto(-1)^{\abs{X}\abs{Y}}\nabla_YX\right)=(-1)^{\abs{e_j}\abs{X}}\scal[g]{\nabla_{e_j}X}{Je_j}$
\end{enumerate}
\end{Prp}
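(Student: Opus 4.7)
The plan is to establish (i) directly from Definition \ref{defDivergence}, and then to obtain (ii) and (iii) as algebraic reformulations using the two Jacobi lemmas and the Christoffel formula, respectively. Starting from $\dsvol_g\cdot\div X = L_X\dsvol_g = \ev|_{t=0}\, D\circ\varphi^*\dsvol_g$ with $\varphi$ the flow of $X$, the relative pullback formula (\ref{eqnRelativePullback}) applied to $\dsvol_g = [dx]\cdot\sqrt{\abs{\sdet g}}$ gives $\varphi^*\dsvol_g = [dx]\cdot\sdet(d\varphi|_{\mS M})\cdot\varphi^*\bigl(\sqrt{\abs{\sdet g}}\bigr)$. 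The Leibniz rule for the even derivation $D$, together with $\ev|_{t=0}\varphi^\sharp = \id$ and $\ev|_{t=0}\,D\varphi^\sharp = X$ from (\ref{eqnFlow}), reduces the evaluation at $t=0$ to the sum of $\ev|_{t=0}\,D(\sdet(d\varphi|_{\mS M}))\cdot\sqrt{\abs{\sdet g}}$ and $X\bigl(\sqrt{\abs{\sdet g}}\bigr)$. Since $d\varphi|_{\mS M}$ equals the identity matrix at $t=0$, Lemma \ref{lemSuperJacobi} collapses the first term to $\str(\ev|_{t=0}\,D(d\varphi|_{\mS M}))$, whose entries are $\partial_{x^k}(X^j)$ up to the standard super signs encoded in (\ref{eqnVectorField}); the supertrace then produces $\dd{x^k}(X^k)$, and one reads off the first equality in (i). The second equality is a direct application of the Leibniz rule to $\sqrt{\abs{\sdet g}}\cdot X^k$, using (\ref{eqnCalculationRule}).

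To pass from (i) to (ii), I expand $X(\sqrt{\abs{\sdet g}})$ by (\ref{eqnCalculationRule}), so that only $X(\sdet g)/(\sdet g)$ needs to be computed. Writing $X = \partial_{x^l}\cdot X^l$, Lemma \ref{lemSuperJacobi} applied to the even derivation $\partial_{x^l}$ yields $\partial_{x^l}(\sdet g) = \sdet g \cdot \str(g^{-1}\partial_{x^l}g)$, and the supertrace evaluates in matrix form to $(-1)^{\abs{m}+\abs{l}(\abs{m}+\abs{k})}g^{mk}\partial_{x^l}g_{km}$; the factor $X^l$ then attaches on the right, giving precisely the sign in (ii). The main bookkeeping task here is verifying that the composite sign from pulling $X^l$ through, from the definition of $\str$, and from the parity of $g^{mk}$ given in Lemma \ref{lemChristoffel}, assembles into the advertised exponent.

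For (iii), I first compare (ii) with the symmetrised Christoffel formula of Lemma \ref{lemChristoffel}: two of the three Christoffel terms cancel after contraction with $g^{km}$ by the symmetry $g^{km} = (-1)^{\abs{k}\abs{m}+\abs{k}+\abs{m}}g^{mk}$, leaving exactly the combination appearing in (ii). This shows $\div X = \partial_{x^k}(X^k) + \Gamma^k_{lk}(-1)^{\varepsilon}X^l$ for an appropriate sign $\varepsilon$, which is the coordinate expression of the coordinate-free supertrace $\str(Y\mapsto(-1)^{\abs{X}\abs{Y}}\nabla_Y X)$ taken with respect to the basis $(\partial_{x^k})$. To rewrite this supertrace on an $OSp$-frame, I apply the expansion (\ref{eqnFrameExpansion}) to $\nabla_{e_j}X$, so that the trace of $Y\mapsto(-1)^{\abs{X}\abs{Y}}\nabla_Y X$ on $(e_j)$ becomes $(-1)^{\abs{e_j}\abs{X}}\scal[g]{\nabla_{e_j}X}{Je_j}$, the stated right-hand side. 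The main obstacle throughout is sign bookkeeping; the conceptual backbone of the proof is simply the combination of the two Jacobi lemmas with the flow equation for (i), the scalar chain rule for (ii), and the Christoffel symmetries plus the orthonormal-type identity $\scal[g]{e_k}{Je_j}=(-1)^{\abs{e_k}}\delta_{kj}$ for (iii).
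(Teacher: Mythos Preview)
Your overall strategy matches the paper's exactly: establish (i) from the flow definition via the super Jacobi formula, derive (ii) from (i) by another application of super Jacobi to $\sdet g$, and obtain (iii) by checking that the Christoffel contraction reduces to the formula in (ii). The structure and all the key ingredients are correct.

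There is, however, one genuine gap you need to address. You twice invoke Lemma~\ref{lemSuperJacobi} for what you call ``the even derivation $D$'' and ``the even derivation $\partial_{x^l}$'', but neither is guaranteed to be even: $D=\partial_t+\partial_\tau$ has an odd part, and $\partial_{x^l}$ is odd whenever $x^l$ is an odd coordinate. Lemma~\ref{lemSuperJacobi} is stated only for even $\xi$, and the proof genuinely uses this (the multiplicative identity $S(XY)=S(X)\sdet Y+\sdet X\cdot S(Y)$ fails without sign corrections for odd $\xi$). The paper handles this by reducing to homogeneous $X$ and, in the odd case, tensoring in an auxiliary Grassmann generator $\eta$ so that $\eta\cdot D$ (respectively $\eta\cdot\partial_{x^l}$, or equivalently $\eta X$) becomes even; one then applies Lemma~\ref{lemSuperJacobi} to this even derivation and strips off $\eta$ at the end by comparing coefficients. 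You should insert this step explicitly in both (i) and (ii); without it the appeal to Lemma~\ref{lemSuperJacobi} is not justified. Apart from this parity bookkeeping, your argument is complete and coincides with the paper's.
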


\begin{proof}
$(i)$: In the following, we shall abbreviate $d\varphi|_{\mS M\otimes\mD(X)}$ by $d\varphi$.
Using (\ref{eqnSuperVolumeFormPullback}) in the sense of (\ref{eqnRelativePullback}) and (\ref{eqnFlow}), we calculate
\begin{align*}
\div X&=\ev|_{t=0}D\left(\frac{1}{\sqrt{\abs{\sdet g}}}\sdet(d\varphi)\varphi^*\left(\sqrt{\abs{sdet g}}\right)\right)\\
&=\ev|_{t=0}D\,\sdet(d\varphi)+\frac{1}{\sqrt{\abs{\sdet g}}}X\left(\sqrt{\abs{sdet g}}\right)
\end{align*}
We calculate the first term. Assume $X$ is homogeneous. If $X$ is odd, let $\eta$ be an additional Grassmann generator
(i.e. replace the supermanifold $M$ by $M\times\bR^{0|1}$). If $X$ is even, we let $\eta:=1$.
In either case, $\eta\cdot D$ is even such that Lem. \ref{lemSuperJacobi} is applicable.
With $\ev_{t=0}d\varphi=\id$, we thus yield
\begin{align*}
\eta\cdot\ev|_{t=0}D\sdet(d\varphi)
=\ev|_{t=0}\left(\sdet d\varphi\cdot\str(d\varphi^{-1}(\eta D)d\varphi)\right)
=\eta\cdot\ev|_{t=0}\str(D\,d\varphi)
\end{align*}
Comparing coefficients and using (\ref{eqnVectorField}), we further calculate
\begin{align*}
\ev|_{t=0}D\,\sdet(d\varphi)
=(-1)^{\abs{x^k}}\ev|_{t=0}D\dd{x^k}\varphi^*(x^k)
=(-1)^{\abs{x^k}(1+\abs{X})}\dd{x^k}\ev|_{t=0}D\varphi^*(x^k)
\end{align*}
With (\ref{eqnFlow}), this calculation completes the proof of part $(i)$.

$(ii)$: With part $(i)$ and (\ref{eqnCalculationRule}), we find
\begin{align*}
\div X=\partial_{x^k}(X^k)+\frac{1}{2}\frac{1}{\abs{\sdet g}}X(\abs{\sdet g})=\partial_{x^k}(X^k)+\frac{1}{2}\frac{1}{\sdet g}X(\sdet g)
\end{align*}
Part $(ii)$ then follows by introducing $\eta$ as in the proof of part $(i)$ and again using Lem. \ref{lemSuperJacobi}.

$(iii)$: It is clear that the $OSp$-frame expression stated equals the super trace expression, see \cite{Gro13}.
The statement is thus proved by showing that the super trace expression yields the same local result as
gained for $\div X$ in part $(ii)$. Abbreviating $\partial_k:=\partial_{x^k}$, we have
\begin{align*}
\nabla_{\partial_k}X=\partial_m\cdot\left((-1)^{\abs{k}\abs{m}}\partial_k(X^m)+(-1)^{\abs{m}(1+\abs{k}+\abs{l})}\Gamma_{kl}^m\cdot X^l\right)
\end{align*}
and thus
\begin{align*}
&\str(Y\mapsto(-1)^{\abs{X}\abs{Y}}\nabla_YX)\\
&\qquad=(-1)^{\abs{k}(\abs{X}+1)}(-1)^{\abs{X}\abs{k}}\left((-1)^{\abs{k}}\partial_k(X^k)+(-1)^{\abs{k}(1+\abs{k}+\abs{l})}\Gamma_{kl}^k\cdot X^l\right)\\
&\qquad=\partial_k(X^k)+(-1)^{\abs{k}(1+\abs{l})}\Gamma_{kl}^k\cdot X^l
\end{align*}
We express the Christoffel symbols as in Lem. \ref{lemChristoffel} to rewrite the second term as
\begin{align*}
\frac{1}{2}(-1)^{\abs{k}(1+\abs{l})}\left(\partial_kg_{lm}+(-1)^{\abs{k}\abs{l}}\partial_lg_{km}-(-1)^{\abs{m}(\abs{k}+\abs{l})}\partial_mg_{kl}\right)g^{mk}X^l
\end{align*}
By a straightforward calculation, using $g^{mk}=(-1)^{\abs{m}\abs{k}+\abs{m}+\abs{k}}g^{km}$ (Lem. \ref{lemChristoffel})
and renaming indices, we see that the first and third summands together vanish,
from which part $(iii)$ follows.
\end{proof}

As a direct consequence of Prp. \ref{prpDivCharacterisation}$(iii)$, we find

\begin{Lem}
\label{lemDivergenceRule}
The divergence satisfies
\begin{align*}
\div(fX)=f\cdot\div X+(-1)^{\abs{f}\abs{X}}X(f)
\end{align*}
\end{Lem}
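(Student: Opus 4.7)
The plan is to leverage the super trace characterisation from Prp.~\ref{prpDivCharacterisation}$(iii)$, writing $\div X=\str T_X$ for the operator $T_X(Y):=(-1)^{\abs{X}\abs{Y}}\nabla_YX$. The graded Leibniz rule for the Levi-Civita connection,
\begin{align*}
\nabla_Y(fX)=Y(f)\cdot X+(-1)^{\abs{Y}\abs{f}}f\cdot\nabla_YX,
\end{align*}
upon multiplication by $(-1)^{(\abs{f}+\abs{X})\abs{Y}}$ makes the signs in the second summand collapse to $(-1)^{\abs{X}\abs{Y}}$, so that $T_{fX}=T_1+T_2$ with
\begin{align*}
T_1(Y):=(-1)^{(\abs{f}+\abs{X})\abs{Y}}Y(f)\cdot X\;,\qquad T_2(Y):=f\cdot T_X(Y),
\end{align*}
and consequently $\div(fX)=\str T_1+\str T_2$.

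The second trace is immediate: left multiplication by the scalar $f$ passes through each summand $\scal[g]{f\cdot\nabla_{e_j}X}{Je_j}$ computed in an $OSp$-frame, yielding $\str T_2=f\cdot\str T_X=f\cdot\div X$. To evaluate $\str T_1$, I would work in the same local $OSp$-frame $(e_j)$ and use $\str S=\scal[g]{S(e_j)}{Je_j}$ (the convention consistent with the formula of Prp.~\ref{prpDivCharacterisation}$(iii)$, and which specialises to $\str\id=m-2n$), obtaining
\begin{align*}
\str T_1=(-1)^{(\abs{f}+\abs{X})\abs{e_j}}\cdot e_j(f)\cdot\scal[g]{X}{Je_j}.
\end{align*}
Reconstituting $X(f)$ via the frame expansion~(\ref{eqnFrameExpansion}), namely $X(f)=(-1)^{\abs{e_j}}\scal[g]{X}{Je_j}\cdot e_j(f)$, and commuting the scalar $\scal[g]{X}{Je_j}$ past $e_j(f)$ by supercommutativity then identifies $\str T_1$ with $(-1)^{\abs{f}\abs{X}}X(f)$, completing the proof.

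The main obstacle is the sign bookkeeping in this last identification. It reduces to verifying that the accumulated exponent $\abs{f}\abs{X}+\abs{e_j}+(\abs{X}+\abs{e_j})(\abs{e_j}+\abs{f})$ collapses modulo $2$ to $(\abs{f}+\abs{X})\abs{e_j}$, which follows cleanly from $\abs{e_j}^2\equiv\abs{e_j}$ and the cancellation $2\abs{f}\abs{X}\equiv 0\pmod 2$.
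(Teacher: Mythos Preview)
Your proof is correct and follows exactly the route the paper indicates: the paper merely states that the lemma is ``a direct consequence of Prp.~\ref{prpDivCharacterisation}$(iii)$'' without further detail, and you have carried out precisely that computation, including the sign bookkeeping. Your final parity check is right, so there is nothing to add.
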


Moreover, we introduce the gradient $\nabla f$ and laplacian $\laplace f$ of a function $f\in\mO_M$ via
\begin{align}
\label{eqnDefGradientLaplace}
\scal[g]{\nabla f}{Y}:=df[Y]=(-1)^{\abs{Y}\abs{f}}Y(f)\;,\qquad\laplace f:=-\div(\nabla f)
\end{align}
The gradient has the following local formula.
\begin{align}
\label{eqnGradientfg}
\nabla f=(-1)^{\abs{e_j}\abs{f}}e_j(f)Je_j=(-1)^{\abs{e_j}+\abs{e_j}\abs{f}}Je_j(f)\cdot e_j
\end{align}
and a straightforward calculation yields
\begin{align}
\label{eqnGradientDifference}
(\nabla f)(k)-(-1)^{\abs{f}\abs{k}}(\nabla k)(f)=0
\end{align}
Inserting (\ref{eqnGradientfg}) into Prp. \ref{prpDivCharacterisation}$(iii)$, we yield the local formula
\begin{align}
\label{eqnLaplaceLocal}
\laplace f=(\nabla_{e_j}Je_j)(f)-e_j\circ Je_j(f)
\end{align}
Moreover, Lem. \ref{lemDivergenceRule} and (\ref{eqnGradientfg}) and (\ref{eqnGradientDifference}) immediately imply
\begin{align}
\label{eqnLaplaceDiv}
f\laplace k&=-\div(f\nabla k)+(-1)^{\abs{f}\abs{k}}(-1)^{\abs{e_j}\abs{k}}e_j(k)Je_j(f)\\
\label{eqnLaplaceDifference}
f\laplace k-(-1)^{\abs{f}\abs{k}}k\laplace f&=-\div(f\nabla k)+(-1)^{\abs{f}\abs{k}}\div(k\nabla f)
\end{align}

We will also need the divergence of a vector field along a supermanifold morphism $\varphi:(M,h)\rightarrow(X,g)$
between semi-Riemannian supermanifolds, which is allowed to be a map with flesh.
Following \cite{Han12} and \cite{Gro13}, $\varphi$ is called superharmonic if its tension field
\begin{align}
\label{eqnTensionField}
\tau(\varphi):=\str_h\left((X,Y)\mapsto(\nabla_Xd\varphi)[Y]\right)=(\nabla_{e_j}d\varphi)[Je_j]
\end{align}
vanishes.

\begin{Def}
Let $\xi\in\mS\varphi$ be a vector field along $\varphi$.
We define its divergence to be
\begin{align*}
\div\xi:=\str_h\left((X,Y)\mapsto(-1)^{\abs{X}\abs{\xi}}\scal[g_{\varphi}]{\nabla_{X}\xi}{d\varphi[Y]}\right)
=(-1)^{\abs{e_i}\abs{\xi}}\scal[g_{\varphi}]{\nabla_{e_i}\xi}{d\varphi[Je_i]}
\end{align*}
\end{Def}

\begin{Lem}[Lem. 4.7 of \cite{Gro13}]
\label{lemDivAlongMorphism}
Let $\xi\in\mS\varphi$. Then
\begin{align*}
\div\xi=-\scal[g_{\varphi}]{\xi}{\tau(\varphi)}+\div\left(\scal[g_{\varphi}]{\xi}{d\varphi[e_j]}Je_j\right)
\end{align*}
\end{Lem}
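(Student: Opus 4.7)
The plan is to recognise the vector field $\eta:=\scal[g_{\varphi}]{\xi}{d\varphi[e_j]}Je_j$ appearing on the right-hand side as the metric adjoint of $\xi$ along $d\varphi$. Indeed, the frame expansion~(\ref{eqnFrameExpansion}) together with the duality relation $\scal[g]{e_k}{Je_j}=(-1)^{\abs{e_k}}\delta_{kj}$ shows that $\eta\in\mS M$ is uniquely characterised by $\scal[h]{\eta}{X}=\scal[g_{\varphi}]{\xi}{d\varphi[X]}$ for all $X\in\mS M$, and in particular $\abs{\eta}=\abs{\xi}$. Once this is in hand, the asserted identity rearranges to the cleaner statement
\[
\div\eta=\div\xi+\scal[g_{\varphi}]{\xi}{\tau(\varphi)},
\]
which is what I will prove.

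To derive it, I apply the derivation $e_k$ to both sides of the defining relation and use metric compatibility of the Levi-Civita connection on $(M,h)$ on the left and of its pullback on $\mS\varphi$ on the right. Expanding $\nabla_{e_k}(d\varphi[X])=(\nabla_{e_k}d\varphi)[X]+d\varphi[\nabla_{e_k}X]$ on the right, the summand involving $d\varphi[\nabla_{e_k}X]$ pairs with $\xi$ to give exactly $\scal[h]{\eta}{\nabla_{e_k}X}$ by the defining relation of $\eta$. Since $\abs{\eta}=\abs{\xi}$, the parity sign $(-1)^{\abs{e_k}\abs{\xi}}$ produced by metric compatibility is identical on both sides, so these two terms cancel, leaving the intrinsic identity
\[
\scal[h]{\nabla_{e_k}\eta}{X}=\scal[g_{\varphi}]{\nabla_{e_k}\xi}{d\varphi[X]}+(-1)^{\abs{e_k}\abs{\xi}}\scal[g_{\varphi}]{\xi}{(\nabla_{e_k}d\varphi)[X]},
\]
valid for arbitrary $X\in\mS M$.

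To finish, I substitute $X=Je_k$, multiply through by $(-1)^{\abs{e_k}\abs{\xi}}$, and sum over $k$. By Prp.~\ref{prpDivCharacterisation}$(iii)$ applied to the vector field $\eta$, the left-hand side becomes $\div\eta$; the first summand on the right becomes $\div\xi$ by the definition of divergence along a morphism stated just before the lemma; and in the remaining summand the two sign factors collapse to unity, yielding $\scal[g_{\varphi}]{\xi}{(\nabla_{e_k}d\varphi)[Je_k]}=\scal[g_{\varphi}]{\xi}{\tau(\varphi)}$ via~(\ref{eqnTensionField}). The main subtlety is purely bookkeeping: one must track the parity signs from the super-symmetric metric and confirm $\abs{\eta}=\abs{\xi}$, after which no case analysis on the $OSp$-frame structure is needed and the proof reduces to this single tensorial calculation.
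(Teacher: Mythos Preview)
Your argument is correct. The paper itself does not prove this lemma at all---it is merely quoted as Lem.~4.7 of \cite{Gro13}---so there is nothing in the present text to compare your proof against. Your approach of characterising $\eta:=\scal[g_{\varphi}]{\xi}{d\varphi[e_j]}Je_j$ intrinsically via $\scal[h]{\eta}{X}=\scal[g_{\varphi}]{\xi}{d\varphi[X]}$ and then differentiating this relation is clean and entirely self-contained; the sign bookkeeping (in particular the observation $\abs{\eta}=\abs{\xi}$, which is what makes the $\nabla_{e_k}X$ terms cancel) is handled correctly, and the three surviving terms are identified exactly as you say via Prp.~\ref{prpDivCharacterisation}(iii), the definition of $\div\xi$, and (\ref{eqnTensionField}).
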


\subsection{Integration on Supermanifolds with Boundary}

On a non-compact supermanifold, the transformation formula for Berezin integration holds only up
to boundary terms, such that the local integrals (\ref{eqnBerezinIntegral}) no longer glue to a globally
well-defined expression. The nature of those boundary terms has been analysed in \cite{AHP12}
and will now be summarised. It turns out that a global integral can be defined
by introducing a retraction as an additional datum.

\begin{Def}
A morphism $\gamma=(\gamma_0,\gamma^{\sharp}):M\rightarrow M_0$ is called a retraction if $\gamma\circ j_M=\id_{M_0}$ with
$j_M:M_0\rightarrow M$ the canonical embedding.
\end{Def}

Let $x=(u,\theta)$ be coordinates in $U\subseteq M$ and $u_0$ the underlying classical coordinates on $U_0$.
Then there is a unique retraction $\gamma$ on $U$ defined by $\gamma^*(u_0)=u$, with respect to which every superfunction
$f\in\mO_U$ possesses a unique decomposition
\begin{align*}
f=\sum_I\theta^I\cdot\gamma^*(f_I)
\end{align*}
and one can define a global map $\gamma_!:\sdet M\rightarrow\det M_0=\Omega^m M_0$ by the local expression
\begin{align*}
(\gamma_!([dx]\,f))|_U:=(-1)^{s(m,2n)}\abs{du_0}\,f^{(1,\ldots,1)}
\end{align*}
with $s(m,2n)$ as in (\ref{eqnBerezinIntegral}).

\begin{Def}
\label{defRetractionIntegral}
Let $\gamma$ be a retraction on $M$ and $\omega$ be a Berezin density. $\omega$ is called integrable with respect to $\gamma$
if $\gamma_!(\omega)$ is integrable on $M_0$. In this case, we define
\begin{align*}
\int_{(M,\gamma)}\omega:=\int_{M_0}\gamma_!\omega
\end{align*}
\end{Def}

The integral over $M=\bR^{m|2n}$ with respect to the standard retraction $\gamma_{\mathrm{std}}$, defined by
$\gamma_{\mathrm{std}}^*(x^j):=x^j$, of a Berezinian form $\omega=[dx]\,f$ with $f\in\mO_M$ then becomes the
standard Berezin integral
\begin{align*}
\int_{(M,\gamma_{\mathrm{std}})}[dx]\,f=\int_{\bR^{m|2n}}d^mxd^{2n}\theta\, f=(-1)^{s(m,2n)}\int_{\bR^m}d^mx\,f^{1,\ldots,1}(x)
\end{align*}
Only if $\omega$ is compactly supported, the integral in Def. \ref{defRetractionIntegral} is independent of the choice of $\gamma$,
see Thm. \ref{thmChangeOfRetraction} below.
Let $\varphi:N\rightarrow M$ be an orientation preserving isomorphism of supermanifolds and define the pullback
of a retraction $\gamma$ by $\varphi^*\gamma:=\varphi_0^{-1}\circ\gamma\circ\varphi:N\rightarrow N_0$.

\begin{Lem}[Cor. 2.15 in \cite{AHP12}]
\label{lemPullbackIntegral}
Let $\omega\in\sdet M$ be integrable with respect to $\gamma$. Then $\varphi^*\omega$ is integrable with
respect to $\varphi^*\gamma$ and
\begin{align*}
\int_{(N,\varphi^*\gamma)}\varphi^*\omega=\int_{(M,\gamma)}\omega
\end{align*}
\end{Lem}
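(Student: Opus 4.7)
The strategy is to reduce the statement to the classical change of variables formula on the underlying manifolds under the diffeomorphism $\varphi_0:N_0\rightarrow M_0$. Concretely, the plan is to establish the local identity
\begin{align*}
(\varphi^*\gamma)_!(\varphi^*\omega)=\varphi_0^*(\gamma_!\omega)
\end{align*}
of densities on $N_0$, whence both integrability and the equality of integrals follow immediately from classical integration theory, since $\varphi_0$ is an orientation-preserving diffeomorphism.

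\textbf{Step 1: Pullback coordinates and the pulled-back retraction.} First I would fix coordinates $x=(u,\theta)$ on $U\subseteq M$, write $V:=\varphi_0^{-1}(U_0)\subseteq N_0$, and consider the pullback coordinates $y:=\varphi^*(x)$ on $V$. Since $\varphi_0^*$ intertwines the canonical embeddings $j_M,j_N$, the underlying coordinates on $V_0$ are $y_0=\varphi_0^*(u_0)$. I would then verify, directly from the definition $\varphi^*\gamma=\varphi_0^{-1}\circ\gamma\circ\varphi$ and $\gamma^*(u_0)=u$, the operator identity
\begin{align*}
\varphi^*\circ\gamma^*=(\varphi^*\gamma)^*\circ\varphi_0^*
\end{align*}
on $C^{\infty}(M_0)$, which in particular gives $(\varphi^*\gamma)^*(y_0)=\varphi^*(u)$. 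Hence $\varphi^*\gamma$ is exactly the unique retraction associated with the coordinates $y$ on $V$, so that $(\varphi^*\gamma)_!$ is computed by the same local recipe as $\gamma_!$, only with $(u,\theta)$ replaced by $y=(\varphi^*u,\varphi^*\theta)$.

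\textbf{Step 2: Matching the top coefficients.} Writing $\omega|_U=[dx]\,f$ with the canonical expansion $f=\sum_I\theta^I\gamma^*(f_I)$, I would apply $\varphi^*$ and use the identity from Step 1 to obtain
\begin{align*}
\varphi^*(f)=\sum_I(\varphi^*\theta)^I\cdot(\varphi^*\gamma)^*\bigl(\varphi_0^*(f_I)\bigr),
\end{align*}
which is the canonical expansion of $\varphi^*(f)$ with respect to the retraction $\varphi^*\gamma$. Consequently the top coefficient with multiindex $I=(1,\ldots,1)$ of $\varphi^*(f)$ is $\varphi_0^*(f^{(1,\ldots,1)})$. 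Combining this with $\varphi^*\omega=[dy]\,\varphi^*(f)$ from (\ref{eqnBerezinPullback}) and the classical identity $\varphi_0^*|du_0|=|dy_0|$ for densities yields
\begin{align*}
(\varphi^*\gamma)_!(\varphi^*\omega)|_V=(-1)^{s(m,2n)}\,|dy_0|\,\varphi_0^*(f^{(1,\ldots,1)})=\varphi_0^*\bigl(\gamma_!\omega|_U\bigr).
\end{align*}
Since this identity is local and the construction of $\gamma_!$ is independent of the chart chosen, it globalises to $(\varphi^*\gamma)_!(\varphi^*\omega)=\varphi_0^*(\gamma_!\omega)$ on $N_0$.

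\textbf{Step 3: Conclusion.} With the density identity in hand, integrability of $\varphi^*\omega$ with respect to $\varphi^*\gamma$ is equivalent to integrability of $\gamma_!\omega$ on $M_0$, and the classical change of variables gives
\begin{align*}
\int_{(N,\varphi^*\gamma)}\varphi^*\omega=\int_{N_0}\varphi_0^*(\gamma_!\omega)=\int_{M_0}\gamma_!\omega=\int_{(M,\gamma)}\omega.
\end{align*}
The main (and essentially only) obstacle is Step 1: carefully tracking how the generalised pullbacks $\varphi^{\sharp}$, $\gamma^{\sharp}$ and $\varphi_0^{\sharp}$ compose so as to identify $\varphi^*\gamma$ with the retraction attached to the pullback coordinates; once this is clear, Steps 2 and 3 are formal.
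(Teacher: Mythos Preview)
The paper does not give its own proof of this lemma; it is simply quoted from \cite{AHP12} (as Cor.~2.15 there), so there is no argument in the paper to compare against. Your proposal is a correct and self-contained proof: the key identity $(\varphi^*\gamma)_!(\varphi^*\omega)=\varphi_0^*(\gamma_!\omega)$ reduces the statement to the classical change of variables for top-degree forms/densities under the orientation-preserving diffeomorphism $\varphi_0$, and your Steps 1--2 establish that identity cleanly. The check in Step~1 that $\varphi^*\gamma$ is exactly the retraction attached to the pullback coordinates $y=\varphi^*(x)$, via $(\varphi^*\gamma)^*\circ\varphi_0^*=\varphi^*\circ\gamma^*$, is the only point requiring care, and you handle it correctly; after that the matching of top $\theta$-coefficients in Step~2 is indeed formal. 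One cosmetic remark: your phrase ``$\varphi_0^*|du_0|=|dy_0|$'' is correct precisely because $y_0=\varphi_0^*(u_0)$ are the induced coordinates, so the Jacobian factor is absorbed into the definition of $|dy_0|$; it may be worth saying this explicitly to avoid any appearance of a missing determinant.
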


We are interested in integration over an open subsupermanifold $N\subseteq M$ of $M$ with smooth boundary.
More specifically, let $N$ be such that $N_0$ is defined
as the open subset of points $p\in M_0$ such that $\rho(p)<C$ with $C\in\bR$ a constant for a \emph{boundary function} $\rho:M_0\rightarrow\bR$
with full rank Jacobian at all its points which map to $C$. The boundary $\partial N_0$ is then the set of $p\in M$
such that $\rho(p)=C$. Now let $\tau\in\mO_M$ be an even superfunction such that the underlying function $\rho=\tau_0$
is a boundary function. By Prp. 3.2.6 in \cite{Lei80}, there exists a supermanifold $\partial_{\tau}N$ of dimension $m-1|2n$
along with an immersion
$\iota_{\tau}:\partial_{\tau}N\rightarrow M$ such that
\begin{align*}
(\partial_{\tau}N)_0=\partial N_0\;,\qquad(\iota_{\tau})_0=\iota_0\;,\qquad\iota_{\tau}^*(\tau)=C
\end{align*}
where $\iota_0$ is the inclusion $\iota_0:\partial N_0\rightarrow M_0$.
The pair $(\partial_{\tau}N,\iota_{\tau})$ is uniquely determined up to equivalence. The local picture is as follows. In a neighbourhood of a point $p\in M_0$,
$\tau$ can be endowed to a coordinate system $x=(\tau,\tilde{x})$ of $M$. In such coordinates, $\partial_{\tau}N$ is obtained
by setting the boundary coordinate $\tau$ to $C$, and the remaining coordinate vector fields locally span
its tangent sheaf. More specifically, let $\varphi=(\varphi_0,\varphi^{\sharp}):\bR^{m|2n}\rightarrow U$ denote the (inverse) morphism
corresponding to coordinates $x=(\tau,\tilde{x})$. Then $\partial_{\tau}N\cap U\cong\bR^{m-1|2n}$, and
$\mS \partial_{\tau}N$ is locally identified with a subsheaf of $\mS M$ as follows.
\begin{align}
\label{eqnBoundaryVectorFields}
\mS\partial_{\tau}N(\partial N_0\cap U_0)
\cong(\varphi^{\sharp})^{-1}\circ\spann_{\mO_{\bR^{m-1|2n}}}\left(\dd{\tilde{x}^1},\ldots,\dd{\tilde{x}^{m+2n-1}}\right)\circ\varphi^{\sharp}
\subseteq\mS M(U_0)
\end{align}
Moreover, there is a restriction map $\omega\mapsto\omega|_{\partial_{\tau}N,\tau}$ which is
defined locally in coordinates $x=(\tau,\tilde{x})$ by setting
\begin{align}
\label{eqnRestrictionMap}
([dx]\,f)|_{\partial_{\tau}N,\tau}:=[d\iota_{\tau}^*(\tilde{x})]\,\iota_{\tau}^*(f)
\end{align}
This yields a well-defined map $\sdet M\rightarrow\sdet\partial_{\tau}N$.

Now let $\gamma$ be a retraction on $M$ and $\rho$ be a classical boundary function. Denoting
the boundary supermanifold of $\tau=\gamma^*(\rho)$ by $\partial_{\gamma}N:=\partial_{\tau}N$
with immersion $\iota_{\gamma}:=\iota_{\tau}$, it was shown in \cite{AHP12} that there exists a unique
retraction $\partial\gamma$ on $\partial_{\gamma}N$ such that
\begin{align}
\label{eqnBoundaryDiagram}
\begin{xy}
\xymatrix{\ar@{}[dr]|{\circlearrowleft}
\partial_{\gamma}N\ar[d]_{\partial\gamma}\ar[r]^{\iota_{\gamma}}&M\ar[d]^{\gamma}\\
\partial N_0\ar[r]_{\iota_0}&M_0}
\end{xy}
\end{align}
commutes, and such that (\ref{eqnRestrictionMap}) is compatible with $\gamma_!$ and $(\partial\gamma)_!$.
By a direct calculation, the boundary integral
\begin{align}
\label{eqnBoundaryIntegral}
\int_{(\partial_{\gamma}N,\partial\gamma)}\omega|_{\partial_{\gamma}N,\tau}
\end{align}
is independent of the representative $(\partial_{\gamma}N,\iota_{\gamma})$ in the aforementioned equivalence class.
As shown in \cite{Che94}, there is a natural action of differential operators on Berezinian forms which,
locally for operators of the form $(\partial_{x^j})^k$ with respect to coordinates $x$, takes the form
\begin{align*}
([dx]\,f).(\partial_{x^j})^k=[dx]\,(-1)^k(\partial_{x^j})^k(f)
\end{align*}
With this notation established, we can now state one of the main results of \cite{AHP12}.

\begin{Thm}[Cor. 5.19 in \cite{AHP12}]
\label{thmChangeOfRetraction}
Let $N\subseteq M$ be open with smooth boundary $\partial N_0$, and let $\gamma,\gamma'$ be retractions
on $M$. Then, for compactly supported $\omega\in\sdet M$,
\begin{align*}
\int_{(N,\gamma')}\omega&=\int_{(N,\gamma)}\omega+b_{N,\gamma',\gamma}(\omega)
\end{align*}
with the boundary term given by
\begin{align*}
b_{N,\gamma',\gamma}(\omega)&=-S\cdot\sum_{j=1}^n\frac{1}{j!}\int_{(\partial_{\gamma}N,\partial\gamma)}((((\gamma'^*(\rho)-\gamma^*(\rho))^j\omega).\partial_{\tau}^{j-1})|_{\partial_{\gamma}N,\gamma^*(\rho)}
\end{align*}
where $S:=(-1)^{s(m,2n)+s(m-1,2n)}$.
\end{Thm}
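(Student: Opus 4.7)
My plan is to interpolate between $\gamma$ and $\gamma'$ by a smooth one-parameter family of retractions and then to identify the derivative of the integral with respect to the parameter as a boundary contribution via a super-analogue of integration by parts. First, using a partition of unity on $M_0$ subordinate to an atlas in which $\tau:=\gamma^*(\rho)$ extends to a coordinate $(\tau,\tilde{x})$ adapted to $\gamma$ in a neighbourhood of $\partial N_0$, I would split $\omega = \omega_{\text{int}} + \omega_{\text{bdry}}$ with $\omega_{\text{int}}$ supported away from $\partial N_0$. For $\omega_{\text{int}}$, both $\int_{(N,\gamma)}\omega_{\text{int}}$ and $\int_{(N,\gamma')}\omega_{\text{int}}$ agree by Lemma \ref{lemPullbackIntegral} applied to an automorphism of $M$ over $M_0$ locally intertwining $\gamma$ and $\gamma'$, while the boundary term $b_{N,\gamma',\gamma}$ receives no contribution from $\omega_{\text{int}}$ by disjoint support. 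Hence it suffices to treat $\omega$ supported in a single boundary-adapted chart.

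In this chart set $\eta := \gamma'^*(\rho) - \tau \in \mO_M$. Since both retractions are the identity on the body, $\eta$ has vanishing body part; being even and a polynomial in the $2n$ odd coordinates without constant term, it satisfies $\eta^{n+1}=0$. I linearly interpolate by setting $\gamma_s^*(u_0^j) := u^j + s(\gamma'^*(u_0^j) - u^j)$ for the underlying classical coordinates $u_0$, which gives $\gamma_0 = \gamma$, $\gamma_1 = \gamma'$, and in particular $\gamma_s^*(\rho) = \tau + s\eta$. Set $F(s) := \int_{(N,\gamma_s)}\omega$. The body integration region $\{\rho < C\} = N_0$ does not depend on $s$, so the $s$-dependence is entirely through the pushforward $(\gamma_s)_!\omega$. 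I would differentiate the local representation of $(\gamma_s)_!\omega$ in the coordinates $(\tau,\tilde{x},\theta)$ and aim for the identity
\begin{align*}
\frac{d^j}{ds^j} F(s)\Big|_{s=0} = -S \int_{(\partial_{\gamma}N,\, \partial\gamma)} ((\eta^j \omega).\partial_\tau^{j-1})|_{\partial_{\gamma}N,\, \gamma^*(\rho)}
\end{align*}
for $1 \leq j \leq n$, with higher derivatives vanishing by nilpotency. Taylor's formula then yields $F(1) - F(0) = \sum_{j=1}^n \tfrac{1}{j!} F^{(j)}(0)$, which is the asserted boundary term.

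The sign $S = (-1)^{s(m,2n) + s(m-1,2n)}$ arises from reconciling the conventions (\ref{eqnBerezinIntegral}) for Berezin integration in dimensions $m|2n$ (bulk) and $m-1|2n$ (boundary). The main technical obstacle is establishing the displayed identity: one must interpret $\tfrac{d}{ds}(\gamma_s)_!\omega$, a priori the change in the coefficient $f^{(1,\ldots,1)}$ of the decomposition $f = \sum_I \theta^I \gamma_s^*(f_I)$ under the shift $u^j \mapsto u^j + s(\gamma'^*(u_0^j) - u^j)$, as an expression whose ``interior part'' is an honest $\partial_\tau$-derivative, so that integration by parts on $N_0$ in the $\tau$-direction produces exactly a contribution on $\{\tau = C\}$. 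The factor $\tfrac{1}{j!}$ is then the usual Taylor coefficient, the truncation at $j=n$ is forced by $\eta^{n+1}=0$, and the delicate point is matching the signs arising from iterated $\partial_\tau$ on Berezinian forms (which carries a $(-1)$ per \cite{Che94}) against those from the Taylor expansion, and finally translating everything through the restriction map (\ref{eqnRestrictionMap}) and the commuting square (\ref{eqnBoundaryDiagram}).
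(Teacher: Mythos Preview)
The paper does not prove this theorem at all: it is quoted verbatim as Cor.~5.19 of \cite{AHP12} and used as a black box in the proof of Thm.~\ref{thmSuperDivergence}. There is therefore no proof in the paper to compare your proposal against.

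That said, your strategy---localise to a boundary-adapted chart, interpolate the retractions linearly, Taylor-expand in the interpolation parameter, and recognise each derivative as a boundary term via integration by parts in the $\tau$-direction---is the natural one and is essentially the approach of \cite{AHP12}. Two remarks on the sketch itself. First, your justification for why $\omega_{\text{int}}$ contributes equally to both integrals is not quite right: invoking Lem.~\ref{lemPullbackIntegral} via an automorphism intertwining $\gamma$ and $\gamma'$ is more than you need and such an automorphism need not exist globally. The correct (and simpler) reason is that for a Berezinian form compactly supported in the open set $N$, the integral $\int_{(N,\gamma)}\omega$ is independent of $\gamma$ by the ordinary Berezin transformation formula; this is exactly the statement that the retraction-dependence is a pure boundary effect. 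Second, in your interpolation you shift \emph{all} coordinates $u^j$, but the claimed derivative formula involves only $\eta=\gamma'^*(\rho)-\tau$. You should say why the tangential shifts $\gamma'^*(u_0^j)-u^j$ for $j\geq 2$ drop out: after integration by parts in those directions one picks up no boundary contribution because $\omega$ has compact support in $\tilde{x}$, so only the $\partial_\tau$-term survives. With these two points addressed, the remaining work is exactly the sign-and-coefficient bookkeeping you already flagged.
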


\section{Divergence Theorems in Supergeometry}
\label{secDivergenceTheorems}

In this section, we shall establish divergence theorems for semi-Riemannian supermanifolds
by means of the flow (\ref{eqnFlow}) of a vector field and Thm. \ref{thmChangeOfRetraction},
concerning the change of retractions, and finally apply the results obtained to the study
of conserved quantities in the context of superharmonic functions.
As in classical semi-Riemannian geometry, one has to take care about degeneracy of the boundary metric.
While our focus is on the non-degenerate case, we will also yield a generalisation of \"Unal's divergence theorem
for a degenerate boundary.

As before, let $(M,g)$ be a semi-Riemannian supermanifold. We define the boundary metric
on a subsupermanifold $(\partial_{\gamma}N,\iota_{\gamma})$ as in (\ref{eqnBoundaryDiagram})
to be the pullback $\tilde{g}:=\iota_{\gamma}^*g$ in (\ref{eqnPullback}).
In general, an even supersymmetric bilinear form $g$ on a supermanifold $M$ is non-degenerate if and only if the induced form
$g_p$ on the super tangent space $T_pM$ is non-degenerate for all $p\in M_0$.
This is shown e.g. by Lem. 3.38 of \cite{Han12}.
For the pullback $\tilde{g}$ on the boundary supermanifold $\partial_{\gamma}N$, this implies that
non-degeneracy is solely determined by non-degeneracy of $\tilde{g}_p$ restricted to the even part
of the super vector space $T_p\partial_{\gamma}N$. We thus yield the following criterion.

\begin{Lem}
\label{lemNonDegeneracy}
$\tilde{g}$ is non-degenerate if and only if $g_0|_{\partial N_0}$ (the underlying classical metric $g_0$ on $M_0$ restricted
to the boundary $\partial N_0$) is non-degenerate.
\end{Lem}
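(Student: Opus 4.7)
The plan is to reduce the question to a pointwise statement on the body via the cited criterion (Lem.~3.38 of \cite{Han12}): an even supersymmetric bilinear form on a supermanifold is non-degenerate if and only if the induced even supersymmetric form $\tilde{g}_p$ on the super tangent space $T_p \partial_{\gamma}N$ is non-degenerate at every $p \in \partial N_0$. Since $\tilde{g}$ is even, each $\tilde{g}_p$ has no mixed components between the even and odd parts of $T_p \partial_{\gamma}N$, so non-degeneracy of $\tilde{g}_p$ is equivalent to non-degeneracy of its even-even block \emph{and} of its odd-odd block.

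Next, I would show that the odd-odd block is automatically non-degenerate, thereby reducing the problem to the even-even block. By the local description around $p$ recalled in (\ref{eqnBoundaryVectorFields}), one can choose coordinates $x=(\tau,\tilde{x})$ on a neighbourhood $U$ in $M$ in which $\partial_{\gamma}N$ is cut out by the single even equation $\tau = C$. Only one even coordinate is removed, so the odd part of $T_p \partial_{\gamma}N$ coincides canonically with the odd part of $T_p M$. Hence the odd-odd block of $\tilde{g}_p = (\iota_{\gamma}^* g)_p$ equals the odd-odd block of $g_p$, which is non-degenerate because $g$ is. Consequently $\tilde{g}_p$ is non-degenerate if and only if its restriction to the even part $(T_p \partial_{\gamma}N)_{\bar{0}}$ is.

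Finally, I would identify this even-even block with the classical datum. Under the natural isomorphism $(T_p \partial_{\gamma}N)_{\bar{0}} \cong T_p \partial N_0$, the pullback $\iota_{\gamma}^* g$ evaluated on even tangent vectors depends only on the body-level data and thus equals $\iota_0^* g_0 = g_0|_{\partial N_0}$ at $p$; this uses the commutative diagram (\ref{eqnBoundaryDiagram}) together with the definition (\ref{eqnPullback}) of the pullback bilinear form. Combining the three steps yields the claimed equivalence.

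The only delicate point is the canonical identification of the odd tangent directions of $\partial_{\gamma}N$ with those of $M$ at a boundary point, which is where (\ref{eqnBoundaryVectorFields}) and the fact that the boundary function $\tau$ is even do the essential work; the remaining steps are bookkeeping about the even-odd block decomposition of an even bilinear form.
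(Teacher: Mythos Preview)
Your proposal is correct and follows essentially the same approach as the paper. The paper's own justification is the short paragraph immediately preceding the lemma: it invokes Lem.~3.38 of \cite{Han12} to reduce to the pointwise form $\tilde{g}_p$ and then asserts that non-degeneracy is determined solely by the restriction to the even part of $T_p\partial_{\gamma}N$; your write-up makes explicit the reasons behind that assertion (the odd-odd block coincides with that of $g_p$ because only an even coordinate is removed, and the even-even block is $g_0|_{\partial N_0}$), which the paper leaves implicit.
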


Unless stated otherwise, we will assume that $\tilde{g}$ is non-degenerate. This is in particular the case if $g$
is Riemannian (which, by definition, means that $g_0$ is Riemannian). By assumption, there
is a unit normal vector field near the boundary, i.e. a vector field $\nu$ such that $\nu\perp_g\mS\partial_{\gamma}N$
and $\varepsilon(\nu):=\scal[g]{\nu}{\nu}\equiv\pm 1$. $\nu$ is unique up to a sign which we fix such that $\nu_0$
is outward-pointing.

\begin{Lem}
\label{lemInducedMetric}
Endow $\tau$ to a coordinate system $x=(\tau,\tilde{x})$ in a neighbourhood of a point $p\in\partial N_0\subseteq M_0$. Then
\begin{align*}
\sqrt{\abs{\sdet g_x}}=\varepsilon(\nu)\cdot\scal[g]{\partial_{\tau}}{\nu}\cdot\sqrt{\abs{\sdet\tilde{g}_{\tilde{x}}}}\;,\qquad
\dsvol_g|_{\partial_{\gamma}N,\tau}=\dsvol_{\tilde{g}}\cdot\iota_{\gamma}^*\left(\varepsilon(\nu)\cdot\scal[g]{\partial_{\tau}}{\nu}\right)
\end{align*}
where $g_x$ denotes the matrix $\scal[g]{\partial_{x^i}}{\partial_{x^j}}$.
\end{Lem}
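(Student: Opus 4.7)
The plan is to compute $\sdet g_x$ by a change of basis that exploits the orthogonal decomposition of $\partial_\tau$ into its normal and boundary-tangential parts. Since $\tilde{g}$ is non-degenerate, the tangential distribution $\spann(\partial_{\tilde{x}^1},\ldots,\partial_{\tilde{x}^{m+2n-1}})$ (identified via (\ref{eqnBoundaryVectorFields}) with $\mS\partial_\tau N$ along the boundary) and the normal line $\mO_M\cdot\nu$ together span $\mS M$ in a neighbourhood of $p$, and the metric is block-diagonal with respect to this splitting.

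First I would expand $\partial_\tau = a\nu + V^i\partial_{\tilde{x}^i}$ with $a$ even and $V^i\in\mO_M$. Taking $\scal[g]{\cdot}{\nu}$ and using $\scal[g]{\partial_{\tilde{x}^i}}{\nu}=0$ together with $\scal[g]{\nu}{\nu}=\varepsilon(\nu)$ yields
\begin{align*}
a=\varepsilon(\nu)\cdot\scal[g]{\partial_\tau}{\nu}.
\end{align*}
Both $\partial_\tau$ and $\nu_0$ are outward-pointing, so $a>0$ on $\partial N_0$ (checking the two signs of $\varepsilon(\nu)$ separately), which will let me drop the absolute value later. Next I change basis from $(\partial_\tau,\partial_{\tilde{x}^1},\ldots)$ to $(\nu,\partial_{\tilde{x}^1},\ldots)$ by the transition matrix
\begin{align*}
P=\left(\arr{cc}{a&0\\V&I}\right),\qquad \sdet P=a,
\end{align*}
and use that the metric transforms as $g_x=P^{\mathrm{st}}g_{\mathrm{new}}P$ with $g_{\mathrm{new}}=\mathrm{diag}(\varepsilon(\nu),A)$, where $A$ is the matrix $\scal[g]{\partial_{\tilde{x}^i}}{\partial_{\tilde{x}^j}}$. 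The multiplicativity of $\sdet$ then gives $\sdet g_x=a^2\cdot\varepsilon(\nu)\cdot\sdet A$. Since $\iota_\tau^*(A)=\tilde{g}_{\tilde{x}}$ and $a$ is even and positive on the boundary, taking $\sqrt{|\,\cdot\,|}$ and applying $\iota_\tau^*$ yields the first claimed identity (on $\partial N_0$; the formula extends to a neighbourhood since both sides are smooth superfunctions built from the same coordinate data).

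For the second identity I would use (\ref{eqnVolumeForm}) and the local definition of the restriction map (\ref{eqnRestrictionMap}) to get
\begin{align*}
\dsvol_g|_{\partial_\gamma N,\tau}=[d\iota_\tau^*(\tilde{x})]\cdot\iota_\tau^*\!\left(\sqrt{\abs{\sdet g_x}}\right)=[d\iota_\tau^*(\tilde{x})]\cdot\sqrt{\abs{\sdet\tilde{g}_{\tilde{x}}}}\cdot\iota_\tau^*\!\left(\varepsilon(\nu)\scal[g]{\partial_\tau}{\nu}\right),
\end{align*}
which is precisely $\dsvol_{\tilde{g}}\cdot\iota_\gamma^*(\varepsilon(\nu)\scal[g]{\partial_\tau}{\nu})$ since $\iota_\gamma=\iota_\tau$.

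The main obstacle is bookkeeping: verifying that the super transpose/conjugation in $g_x=P^{\mathrm{st}}g_{\mathrm{new}}P$ produces the clean factor $(\sdet P)^2$ (rather than some sign), and checking that $a>0$ in all four combinations of $\varepsilon(\nu)=\pm 1$ against the outward direction of $\partial_\tau$. Both amount to a short but careful sign analysis; everything else is a direct computation from the block decomposition.
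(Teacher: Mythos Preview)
Your proposal is correct and follows essentially the same approach as the paper: both replace $\partial_\tau$ by $\nu$ to obtain the frame $(\nu,\partial_{\tilde{x}^1},\ldots)$, observe that the metric is block-diagonal in this frame, compute the triangular transition matrix with corner entry $a=\varepsilon(\nu)\scal[g]{\partial_\tau}{\nu}$, and use multiplicativity of $\sdet$ together with the outward-pointing condition to fix the sign. The second identity is then derived from the first via (\ref{eqnVolumeForm}) and (\ref{eqnRestrictionMap}) in both arguments.
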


\begin{proof}
Consider the local frame $(f_1,\ldots,f_{m+2n}):=(\nu,\partial_{x^2},\ldots,\partial_{x^{m+2n}})$.
The transition matrix $A$ defined by
$\partial_{x^i}=f_l\cdot A_{li}$ and the metric $g_f$ in the new basis, respectively, then have the form
\begin{align*}
A=\left(\arr{cc}{\varepsilon(\nu)\cdot\scal[g]{\partial_{\tau}}{\nu}&0\\ *&\mathbbm{1}_{(m-1|2n)^2}}\right)\;,\qquad
g_f=\left(\arr{cc}{\varepsilon(\nu)&0\\0&\tilde{g}_{\tilde{x}}}\right)
\end{align*}
such that
\begin{align*}
\sdet g_x=\sdet(A^{ST}g_fA)=\sdet A^2\cdot\sdet g_f=\varepsilon(\nu)\cdot\scal[g]{\partial_{\tau}}{\nu}^2\cdot\sdet\tilde{g}_{\tilde{x}}
\end{align*}
Since $\nu_0$ is outward-pointing, the orientation classes of the two local frames coincide such that
$\varepsilon(\nu)\scal[g]{\partial_{\tau}}{\nu}>0$. This yields the first statement, while the second is immediate by the
first together with (\ref{eqnVolumeForm}) and (\ref{eqnRestrictionMap}).
\end{proof}

\begin{Thm}[Divergence Theorem]
\label{thmSuperDivergence}
Let $N\subseteq M$ be an open submanifold such that $\overline{N_0}$ is compact
and has a smooth boundary $\partial N_0$ defined by a boundary function $\rho$.
Let $\gamma$ be a retraction on $M$, and let $(\partial_{\gamma}N,\iota_{\gamma})$
and $\partial\gamma$ be as in (\ref{eqnBoundaryDiagram}). If $\tilde{g}=\iota_{\gamma}^*g$ is non-degenerate then, for every vector field $X\in\mS M$,
\begin{align*}
\int_{(N,\gamma)}\dsvol_g\,\div X
=S\cdot\varepsilon(\nu)\int_{(\partial_{\gamma}N,\partial\gamma)}\dsvol_{\tilde{g}}\,\iota^*_{\gamma}\left(\scal[g]{X}{\nu}\right)
\end{align*}
where $S:=(-1)^{s(m,2n)+s(m-1,2n)}$.
In particular, the right hand side vanishes if $X$ has compact support in $N$.
\end{Thm}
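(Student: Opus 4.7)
The plan is to combine the flow-theoretic definition of divergence (Def.~\ref{defDivergence}) with the change-of-retraction formula (Thm.~\ref{thmChangeOfRetraction}) and pullback invariance (Lem.~\ref{lemPullbackIntegral}), as advertised in the introduction. First I reduce to the case of compactly supported $X$ by multiplying by a bump function equal to one on a neighbourhood of $\overline{N_0}$; this does not affect either side of the identity. The flow $\varphi=(\varphi_0,\varphi^{\sharp}):\mD(X)\rightarrow M$ is then globally defined for all sufficiently small $|t|$, and the Berezinian forms appearing below have compact support, so that Thm.~\ref{thmChangeOfRetraction} applies in its natural relative extension with flesh $S=\bR^{1|1}$.

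Next, by Def.~\ref{defDivergence} and since $\ev|_{t=0}D$ commutes with integration in the spatial variables,
\begin{align*}
\int_{(N,\gamma)}\dsvol_g\cdot\div X=\ev|_{t=0}D\int_{(N,\gamma)}\varphi^*\dsvol_g.
\end{align*}
I then change the retraction inside the integral from $\gamma$ to the pullback $\varphi^*\gamma$ by Thm.~\ref{thmChangeOfRetraction}, and apply Lem.~\ref{lemPullbackIntegral} to the isomorphism $\varphi_{t,\tau}:N\rightarrow\varphi(N)$ to obtain
\begin{align*}
\int_{(N,\gamma)}\varphi^*\dsvol_g=b_{N,\gamma,\varphi^*\gamma}(\varphi^*\dsvol_g)+\int_{(\varphi(N),\gamma)}\dsvol_g.
\end{align*}

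It remains to apply $\ev|_{t=0}D$ to each summand. On the underlying level, the second summand equals $\int_{\varphi_0(t,N_0)}\gamma_!\dsvol_g$, whose $t$-derivative at $t=0$ is the classical boundary integral over $\partial N_0$ of $\scal[g_0]{X_0}{\nu_0}$ against $\gamma_!\dsvol_g$; by Lem.~\ref{lemInducedMetric} this lifts to a super boundary integral over $\partial_{\gamma}N$ weighted by $\varepsilon(\nu)\cdot\iota^*_{\gamma}\scal[g]{\partial_\tau}{\nu}$. For the first summand, the flow equation (\ref{eqnFlow}) ensures that $(\varphi^*\gamma)^*(\rho)-\gamma^*(\rho)$ vanishes at $t=0$, so only the $j=1$ term of the change-of-retraction sum in Thm.~\ref{thmChangeOfRetraction} survives $\ev|_{t=0}D$. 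A direct computation shows that this $j=1$ contribution furnishes precisely the odd-coordinate terms of $X$ that are missed by the classical piece, and together the two pieces assemble into $S\cdot\varepsilon(\nu)\int_{(\partial_{\gamma}N,\partial\gamma)}\dsvol_{\tilde g}\cdot\iota^*_{\gamma}\scal[g]{X}{\nu}$.

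The main obstacle is the sign and combinatorial bookkeeping in this last step: checking that the classical piece (which sees only the underlying $\scal[g_0]{X_0}{\nu_0}$) and the $j=1$ change-of-retraction term assemble cleanly into the full super normal projection $\scal[g]{X}{\nu}$, and tracking the Berezin integration signs that combine to yield $S=(-1)^{s(m,2n)+s(m-1,2n)}$. The \emph{in particular} statement is then immediate, since $\iota^*_{\gamma}\scal[g]{X}{\nu}\equiv 0$ on $\partial_{\gamma}N$ whenever $X$ has compact support in $N$.
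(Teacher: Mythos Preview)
Your overall strategy---using the flow to rewrite the left side, then invoking Lem.~\ref{lemPullbackIntegral} and Thm.~\ref{thmChangeOfRetraction}---is exactly the paper's. The difference lies in how the two resulting terms are organised. The paper writes
\[
\int_{(N,\gamma)}\varphi^*\dsvol_g \;=\; \int_{(N,\varphi^*\gamma)}\varphi^*\dsvol_g \;-\; b_{N,\varphi^*\gamma,\gamma}(\varphi^*\dsvol_g)
\]
and argues via Lem.~\ref{lemPullbackIntegral} that the first summand is independent of the flow parameters, so that the entire right-hand side of the theorem comes from $-\ev|_{t=0}D\,b_{N,\varphi^*\gamma,\gamma}$. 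Note the order of subscripts: this boundary term is an integral over the \emph{fixed} boundary $(\partial_\gamma N,\partial\gamma)$, so $\ev|_{t=0}D$ passes cleanly under the integral sign. The paper then shows that only $j=1$ survives, computes $\ev|_{t=0}D\,(\varphi^*\gamma)^*(\rho)=X(\tau)$ directly, and uses Lem.~\ref{lemInducedMetric} together with $\nu\perp\mS\partial_\gamma N$ to get $\scal[g]{\partial_\tau}{\nu}\,X(\tau)=\scal[g]{X}{\nu}$ in one stroke---no recombination of two pieces is needed. Odd $X$ is handled separately by adjoining an auxiliary odd generator $\eta$ and applying the even case to $\eta X$.

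Your route creates genuine difficulties that you do not resolve. First, by writing $b_{N,\gamma,\varphi^*\gamma}$ rather than $b_{N,\varphi^*\gamma,\gamma}$ you place the boundary integral on $(\partial_{\varphi^*\gamma}N,\partial(\varphi^*\gamma))$, which is $t$-dependent; differentiating that is not the routine step you suggest. Second, the $t$-derivative of $\int_{(\varphi_t(N),\gamma)}\dsvol_g$ is the purely classical quantity $\int_{\partial N_0}\iota_{X_0}(\gamma_!\dsvol_g)$, an integral against the top form $\gamma_!\dsvol_g$ on $M_0$. This form is \emph{not} $\dvol_{g_0}$ and has no a priori relation to $\dsvol_{\tilde g}$; Lem.~\ref{lemInducedMetric} compares $\dsvol_g|_{\partial_\gamma N,\tau}$ with $\dsvol_{\tilde g}$, not $\gamma_!\dsvol_g$ with anything, so the ``lift'' you invoke is not available as stated. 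Third---and this is the main gap---the asserted assembly of the classical piece with the $j=1$ term into $\iota_\gamma^*\scal[g]{X}{\nu}$ is precisely the substance of the theorem; leaving it as an unperformed ``direct computation'' and acknowledging it as ``the main obstacle'' is not a proof. The paper's organisation sidesteps this recombination entirely by arranging that only one term contributes.
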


\begin{proof}
Both sides are $\bR$-linear in $X$, so $X$ may be assumed to be homogeneous. Consider the case of even $X$ first,
and let $\varphi$ denote its flow. By compactness of $\overline{N_0}$,
there is an open submanifold $V\subseteq\bR^{1|1}$ with $0\in V_0$ such that
$V\times N\subseteq\mD(X)$ and $\overline{V_0}$ is compact.
We may thus exchange integration with evaluation and differentiation in the
following calculation, after employing Def. \ref{defDivergence}.
\begin{align*}
\int_{(N,\gamma)}\dsvol_g\,\div X&=\int_{(N,\gamma)}\ev|_{t=0}D\varphi^*\dsvol_g\\
&=\ev|_{t=0}D\int_{(N,\gamma)}\varphi^*\dsvol_g\\
&=\ev|_{t=0}D\int_{(N,\varphi^*\gamma)}\varphi^*\dsvol_g-\ev|_{t=0}Db_{N,\varphi^*\gamma,\gamma}(\varphi^*\dsvol_g)
\end{align*}
The last equation uses Thm. \ref{thmChangeOfRetraction} in the following sense.
By the first equation of (\ref{eqnFlow}) and evenness of $X$,
$\varphi$ is independent of the odd time variable $\tau$ and as such can be considered as a $t$-dependent morphism
$\varphi_t:M\rightarrow M$ which, by the second equation in (\ref{eqnFlow}), is (for fixed $t$) an isomorphism provided that
$V$ is chosen sufficiently small. In this context, $\varphi^*\gamma$ is a retraction on $M$ which also depends on $t$.
By Lem. \ref{lemPullbackIntegral}, the integral in the first term is independent of the flow parameters
and thus vanishes upon differentiation.
If $X$ has compact support in $N$, the second expression vanishes, too.
If not, it remains to calculate the boundary term by use of Lem. \ref{lemInducedMetric} as follows.
We write $\varphi^*\dsvol_g=:\dsvol_g\cdot d_{\varphi}$,
where $d_{\varphi}$ is the superfunction determined by (\ref{eqnSuperVolumeFormPullback}).
\begin{align*}
&\int_{(N,\gamma)}\dsvol_g\,\div X\\
&\qquad=S\cdot\ev|_{t=0}D\,\sum_{j=1}\frac{1}{j!}\int_{(\partial_{\gamma}N,\partial\gamma)}((\dsvol_g\,d_{\varphi}\cdot((\varphi^*\gamma)^*(\rho)-\gamma^*(\rho))^j).\partial_{\tau}^{j-1})|_{\partial_{\gamma}N,\tau}\\
&\qquad=S\cdot\ev|_{t=0}D\,\sum_{j=1}\frac{(-1)^{j-1}}{j!}\int_{(\partial_{\gamma}N,\partial\gamma)}[d\iota_{\gamma}^*(\tilde{x})]\\
&\qquad\qquad\qquad\qquad\qquad\qquad\qquad\qquad\iota_{\gamma}^*\left(\partial_{\tau}^{j-1}\left(\sqrt{\abs{\sdet g}}\cdot d_{\varphi}\cdot((\varphi^*\gamma)^*(\rho)-\gamma^*(\rho))^j\right)\right)\\
&\qquad=S\cdot\varepsilon(\nu)\cdot\ev|_{t=0}\sum_{j=1}\frac{(-1)^{j-1}}{j!}\int_{(\partial_{\gamma}N,\partial\gamma)}\dsvol_{\tilde{g}}\\
&\qquad\qquad\qquad\qquad\qquad\qquad\qquad\qquad\iota_{\gamma}^*\left(\partial_{\tau}^{j-1}\left(\scal[g]{\partial_{\tau}}{\nu}\cdot D(d_{\varphi}\cdot((\varphi^*\gamma)^*(\rho)-\gamma^*(\rho))^j)\right)\right)
\end{align*}
Now the expression
\begin{align*}
&D\left(d_{\varphi}\cdot((\varphi^*\gamma)^*(\rho)-\gamma^*(\rho))^j\right)\\
&\qquad=(Dd_{\varphi})\cdot((\varphi^*\gamma)^*(\rho)-\gamma^*(\rho))^j
+jd_{\varphi}\cdot((\varphi^*\gamma)^*(\rho)-\gamma^*(\rho))^{j-1}D(\varphi^*\gamma)^*(\rho)
\end{align*}
vanishes for $j>1$ upon evaluation, and we are left with
\begin{align*}
\ev|_{t=0}D\circ(\varphi^*\gamma)^*(\rho)&=\ev|_{t=0}D\circ\varphi_t^*\circ\gamma^*\circ(\varphi_{0t}^{-1})^*(\rho)\\
&=\ev|_{s=0}\ev|_{t=0}D\circ\varphi_t^*\circ\gamma^*\circ(\varphi_{0s}^{-1})^*(\rho)\\
&=\ev|_{s=0}X\circ\gamma^*\circ(\varphi_{0s}^{-1})^*(\rho)\\
&=X(\tau)
\end{align*}
using that the expression in the second equation is continuous in $(t,s)$ as well as (\ref{eqnFlow}).
We thus obtain
\begin{align*}
\int_{(N,\gamma)}\dsvol_g\,\div X
=S\cdot\varepsilon(\nu)\int_{(\partial_{\gamma}N,\partial_{\gamma})}\dsvol_{\tilde{g}}\,\iota_{\gamma}^*\left(\scal[g]{\partial_{\tau}}{\nu}X(\tau)\right)
\end{align*}
Since $\nu$ is orthogonal to $\mS\partial_{\gamma}N$, this proves the theorem for even $X$.

Finally, consider an odd vector field $X$. As in the proof of Prp. \ref{prpDivCharacterisation},
we endow $M$ to $M\times\bR^{0|1}$ by introducing an additional odd coordinate $\eta$ and consider the
even vector field $\eta\cdot X$ with its flow $\varphi$. Analogous to above, we find
\begin{align*}
&\int_{\bR^{0|1}}d\eta\int_{(N,\gamma)}\dsvol_g\,\div(\eta X)\\
&\qquad\qquad=\ev|_{t=0}D\int_{\bR^{0|1}}d\eta\int_{(N,\varphi^*\gamma)}\varphi^*\dsvol_g
-\ev|_{t=0}D\int_{\bR^{0|1}}d\eta\, b_{N,\varphi^*\gamma,\gamma}(\varphi^*\dsvol_g)
\end{align*}
Again, the first term vanishes, and the second is calculated as before, thus yielding an analogous statement
for $\eta X$ upon integration over $\eta$ on both sides. From this, the statement for $X$ directly follows.
\end{proof}

The right hand side integral in Thm. \ref{thmSuperDivergence} depends on the boundary supermanifold
$(\partial_{\gamma}N,\iota_{\gamma})$, but not on the induced retraction $\partial\gamma$ since $\partial N_0$ is by assumption compact. If the retraction on $M$ is changed, one obtains an additional boundary term due to Thm. \ref{thmChangeOfRetraction} as follows.

\begin{Cor}
\label{corSuperDivergence}
In the situation of Thm. \ref{thmSuperDivergence}, let $\gamma':M\rightarrow M_0$
be another retraction. Then
\begin{align*}
\int_{(N,\gamma')}\dsvol_g\,\div X
=S\cdot\varepsilon(\nu)\int_{(\partial_{\gamma}N,\partial\gamma)}\dsvol_{\tilde{g}}\,\iota^*_{\gamma}\left(\scal[g]{X}{\nu}\right)
+b_{N,\gamma',\gamma}(\dsvol_g\,\div X)
\end{align*}
with
\begin{align*}
&b_{N,\gamma',\gamma}(\dsvol_g\,\div X)\\
&\;\;\;=-S\cdot\varepsilon(\nu)\sum_{j=1}\frac{(-1)^{j-1}}{j!}\int_{(\partial_{\gamma}N,\partial\gamma)}\dsvol_{\tilde{g}}\,\iota_{\gamma}^*\left(\partial_{\tau}^{j-1}(\scal[g]{\partial_{\tau}}{\nu}\div X(\gamma'^*(\rho)-\gamma^*(\rho))^j)\right)
\end{align*}
\end{Cor}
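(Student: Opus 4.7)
The plan is to reduce the statement to a direct combination of Thm.~\ref{thmSuperDivergence} and Thm.~\ref{thmChangeOfRetraction}, and then unpack the resulting boundary term using the local formulas (\ref{eqnRestrictionMap}), the action of differential operators on Berezinian forms, and Lem.~\ref{lemInducedMetric}, exactly in parallel with the boundary-term computation performed in the proof of Thm.~\ref{thmSuperDivergence}.

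First, I would apply Thm.~\ref{thmChangeOfRetraction} to the Berezinian form $\omega:=\dsvol_g\,\div X$. Since $\overline{N_0}$ is compact, $\omega|_N$ is effectively compactly supported for the purposes of the integral (one may multiply $\omega$ by a bump function equal to $1$ on $\overline{N_0}$ and supported in a slightly larger compact set, which does not affect either the integral over $N$ nor the boundary integrals localised at $\partial N_0$). This immediately yields
\begin{align*}
\int_{(N,\gamma')}\dsvol_g\,\div X=\int_{(N,\gamma)}\dsvol_g\,\div X+b_{N,\gamma',\gamma}(\dsvol_g\,\div X).
\end{align*}
Next, I would apply Thm.~\ref{thmSuperDivergence} to the first term on the right, which directly produces $S\cdot\varepsilon(\nu)\int_{(\partial_\gamma N,\partial\gamma)}\dsvol_{\tilde g}\,\iota_\gamma^*(\scal[g]{X}{\nu})$.

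The remaining step is to bring $b_{N,\gamma',\gamma}(\dsvol_g\,\div X)$ into the claimed explicit shape. Working in local coordinates $x=(\tau,\tilde{x})$ around a point of $\partial N_0$, I would write $\dsvol_g=[dx]\sqrt{|\sdet g|}$ and use the sign rule $([dx]\,f).(\partial_\tau)^{j-1}=[dx](-1)^{j-1}\partial_\tau^{j-1}(f)$ from \cite{Che94} to bring the derivative $\partial_\tau^{j-1}$ inside and produce the factor $(-1)^{j-1}$. The restriction map (\ref{eqnRestrictionMap}) then converts $[dx]\cdot h$ into $[d\iota_\gamma^*(\tilde x)]\,\iota_\gamma^*(h)$, and Lem.~\ref{lemInducedMetric} identifies $\sqrt{|\sdet g|}$ with $\varepsilon(\nu)\scal[g]{\partial_\tau}{\nu}\sqrt{|\sdet\tilde g|}$. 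Combining these gives, summand by summand,
\begin{align*}
&-S\sum_{j=1}^n\frac{1}{j!}\int_{(\partial_\gamma N,\partial\gamma)}\big(((\gamma'^*(\rho)-\gamma^*(\rho))^j\,\dsvol_g\,\div X).\partial_\tau^{j-1}\big)\big|_{\partial_\gamma N,\gamma^*(\rho)}\\
&\qquad=-S\cdot\varepsilon(\nu)\sum_{j=1}^n\frac{(-1)^{j-1}}{j!}\int_{(\partial_\gamma N,\partial\gamma)}\dsvol_{\tilde g}\,\iota_\gamma^*\!\left(\partial_\tau^{j-1}\big(\scal[g]{\partial_\tau}{\nu}\,\div X\,(\gamma'^*(\rho)-\gamma^*(\rho))^j\big)\right),
\end{align*}
which is exactly the claimed expression for $b_{N,\gamma',\gamma}(\dsvol_g\,\div X)$.

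I do not expect any serious obstacle here: this corollary is essentially bookkeeping, because the substantive analytic content has already been absorbed into Thm.~\ref{thmSuperDivergence} and Thm.~\ref{thmChangeOfRetraction}. The only point that requires a little care is matching the sign and the factor $\varepsilon(\nu)\scal[g]{\partial_\tau}{\nu}$ coming from Lem.~\ref{lemInducedMetric} with the $(-1)^{j-1}$ produced by the action of $\partial_\tau^{j-1}$ on the Berezinian form, but this is the same computation already carried out in the proof of Thm.~\ref{thmSuperDivergence}, with $d_\varphi\cdot((\varphi^*\gamma)^*(\rho)-\gamma^*(\rho))^j$ replaced by $\div X\cdot(\gamma'^*(\rho)-\gamma^*(\rho))^j$ and without the outer $\ev|_{t=0}D$, so no additional simplification via (\ref{eqnFlow}) is needed.
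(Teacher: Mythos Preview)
Your proposal is correct and follows exactly the approach the paper intends: the corollary is stated without proof, the preceding sentence simply saying that the additional boundary term arises from Thm.~\ref{thmChangeOfRetraction}, and your unpacking of $b_{N,\gamma',\gamma}(\dsvol_g\,\div X)$ via the operator action, (\ref{eqnRestrictionMap}) and Lem.~\ref{lemInducedMetric} is precisely the computation already carried out line by line in the proof of Thm.~\ref{thmSuperDivergence}. The bump-function remark to ensure compact support for Thm.~\ref{thmChangeOfRetraction} is the only point you add beyond what the paper makes explicit, and it is the natural justification.
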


As a second corollary of Thm. \ref{thmSuperDivergence}, we state the supergeometric version of Green's formula next, which is immediate by (\ref{eqnLaplaceDifference}).

\begin{Cor}[Green's Formula]
In the situation of Thm. \ref{thmSuperDivergence}, let $f,k\in\mO_M$. Then
\begin{align*}
&\int_{(N,\gamma)}\dsvol_g\left(f\laplace k-(-1)^{\abs{f}\abs{k}}k\laplace f\right)\\
&\qquad=-S\cdot\varepsilon(\nu)\int_{(\partial_{\gamma}N,\partial\gamma)}\dsvol_g\,\iota_{\gamma}^*\left(f\cdot\nu(k)-(-1)^{\abs{f}\abs{k}}k\cdot\nu(f)\right)
\end{align*}
\end{Cor}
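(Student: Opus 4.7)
The strategy is to reduce Green's formula to the divergence theorem via the algebraic identity (\ref{eqnLaplaceDifference}), which already rewrites
$f\laplace k-(-1)^{\abs{f}\abs{k}}k\laplace f$ as $-\div(f\nabla k)+(-1)^{\abs{f}\abs{k}}\div(k\nabla f)$. Once this substitution is made inside the bulk integral, the claim becomes two parallel applications of Thm.~\ref{thmSuperDivergence}, one to the vector field $f\nabla k$ and one to $k\nabla f$, with the outcomes recombined on the boundary.

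Concretely, I would first substitute (\ref{eqnLaplaceDifference}) under the integral sign and split the left-hand side into
$-\int_{(N,\gamma)}\dsvol_g\,\div(f\nabla k)+(-1)^{\abs{f}\abs{k}}\int_{(N,\gamma)}\dsvol_g\,\div(k\nabla f)$. Applying Thm.~\ref{thmSuperDivergence} to each of these transforms them into boundary integrals of $\iota_{\gamma}^*\scal[g]{f\nabla k}{\nu}$ respectively $\iota_{\gamma}^*\scal[g]{k\nabla f}{\nu}$, each weighted by the common prefactor $S\cdot\varepsilon(\nu)$. I would then use $\mO_M$-linearity of $g$ in the first argument to pull the scalar factors $f$ and $k$ outside the bracket, and then definition (\ref{eqnDefGradientLaplace}) of the gradient with the even test vector $Y=\nu$ to identify $\scal[g]{\nabla k}{\nu}=\nu(k)$ and $\scal[g]{\nabla f}{\nu}=\nu(f)$; the parity prefactor appearing in (\ref{eqnDefGradientLaplace}) is trivial here because $\nu$ is even. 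Collecting the two boundary terms under a single integral and factoring out $-S\cdot\varepsilon(\nu)$ yields exactly the claimed right-hand side.

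There is no genuine obstacle: the proof is effectively a one-line reduction to Thm.~\ref{thmSuperDivergence} once (\ref{eqnLaplaceDifference}) is recognised as the underlying super Leibniz rule for the Laplacian. The only care required is bookkeeping of the $(-1)^{\abs{f}\abs{k}}$ sign, which must survive in the same position on the boundary as it occupied in the bulk integrand, and of the convention that $g$ is $\mO_M$-linear in its first slot so that $f$ and $k$ slide past $\nabla k$ and $\nabla f$ without incurring additional parity factors.
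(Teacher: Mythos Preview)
Your proposal is correct and takes exactly the same approach as the paper, which states the corollary as ``immediate by (\ref{eqnLaplaceDifference})'' following Thm.~\ref{thmSuperDivergence}. Your write-up simply spells out the implicit steps (applying the divergence theorem to each summand and using evenness of $\nu$ in (\ref{eqnDefGradientLaplace})), which the paper leaves to the reader.
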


We next state a divergence theorem for vector fields along a morphism $\varphi$ of semi-Riemannian supermanifolds.
Here, $\tau(\varphi)$ denotes the tension field as defined in (\ref{eqnTensionField}).

\begin{Thm}[Divergence Theorem]
\label{thmMorphismSuperDivergence}
Let $\varphi:(M,h)\rightarrow(X,g)$ be a supermanifold morphism between semi-Riemannian supermanifolds
and $\xi\in\mS\varphi$ be a vector field  along $\varphi$.
Then, in the situation of Thm. \ref{thmSuperDivergence} (with $\tilde{g}$ replaced by $\tilde{h}:=\iota_{\gamma}^*h$),
\begin{align*}
&\int_{(N,\gamma)}\dsvol_g\,\div\xi\\
&\qquad\qquad=-\int_{(N,\gamma)}\dsvol_g\,\scal[g_{\varphi}]{\xi}{\tau(\varphi)}
+S\cdot\varepsilon(\nu)\int_{(\partial_{\gamma}N,\partial\gamma)}\dsvol_{\tilde{g}}\,\iota_{\gamma}^*\left(\scal[g_{\varphi}]{\xi}{d\varphi[\nu]}\right)
\end{align*}
In particular, the first term on the right hand side vanishes if $\varphi$ is superharmonic.
\end{Thm}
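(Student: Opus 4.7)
The plan is to reduce the statement to the scalar divergence theorem, Thm.~\ref{thmSuperDivergence}, using the decomposition of $\div\xi$ provided by Lem.~\ref{lemDivAlongMorphism}. That lemma expresses
\begin{align*}
\div\xi = -\scal[g_{\varphi}]{\xi}{\tau(\varphi)} + \div Z, \qquad Z := \scal[g_{\varphi}]{\xi}{d\varphi[e_j]}\,Je_j \in \mS M,
\end{align*}
with $(e_j)$ a local $OSp$-frame for $h$. The crucial observation is that $Z$ is a genuine vector field on $M$, so its divergence is the one of Def.~\ref{defDivergence}, and all the machinery developed for Thm.~\ref{thmSuperDivergence} applies verbatim to it.

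First I would integrate this identity over $(N,\gamma)$ against $\dsvol_h$. The tension-field contribution yields the first summand on the right hand side directly. To the second term I would then apply Thm.~\ref{thmSuperDivergence} for the vector field $Z$, producing
\begin{align*}
\int_{(N,\gamma)} \dsvol_h\,\div Z = S\cdot\varepsilon(\nu)\int_{(\partial_{\gamma}N,\partial\gamma)} \dsvol_{\tilde{h}}\,\iota_{\gamma}^*\scal[h]{Z}{\nu}.
\end{align*}
The non-degeneracy hypothesis on $\tilde{h}$ is inherited from the situation of Thm.~\ref{thmSuperDivergence}, and by Lem.~\ref{lemNonDegeneracy} a unit normal $\nu$ is available as needed.

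What remains is to recognise the boundary integrand $\scal[h]{Z}{\nu}$ as $\scal[g_{\varphi}]{\xi}{d\varphi[\nu]}$. Substituting the frame expansion $\nu = \scal[h]{\nu}{e_j}\,Je_j$ from (\ref{eqnFrameExpansion}) into $d\varphi[\nu]$, exploiting $\mO_M$-linearity of $d\varphi$ and of $g_{\varphi}$ in each slot, and comparing term by term with $\scal[h]{Z}{\nu}$ expanded via the same identity, the two quantities coincide once all Koszul signs are collected; the defining relation $\scal[h]{e_k}{Je_j} = (-1)^{\abs{e_k}}\delta_{kj}$ of an $OSp$-frame makes the cancellation explicit. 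I expect this sign bookkeeping to be the only genuinely nontrivial step of the argument. The particular claim for superharmonic $\varphi$ then follows from $\tau(\varphi)=0$.
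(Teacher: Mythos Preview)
Your proposal is correct and follows essentially the same route as the paper: the paper's proof simply reads ``Expressing $\div\xi$ as in Lem.~\ref{lemDivAlongMorphism}, the statement follows from Thm.~\ref{thmSuperDivergence} with (\ref{eqnFrameExpansion}),'' which is precisely your plan with the details filled in. Your identification of the boundary integrand via the frame expansion (\ref{eqnFrameExpansion}) is exactly the step the paper alludes to.
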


\begin{proof}
Expressing $\div\xi$ as in Lem. \ref{lemDivAlongMorphism}, the statement follows from Thm. \ref{thmSuperDivergence} with (\ref{eqnFrameExpansion}).
\end{proof}

We now come to the case where $\tilde{g}$ is degenerate.
Following \cite{Uen95}, we denote by $\partial (N_0)_+$ the subset
of the classical boundary where $g_0|_{\partial N_0}$ is spacelike, similarly by $\partial(N_0)_-$ that
for timelike and by $\partial(N_0)_0$ that for lightlike. The first two are open,
and we set $\partial_{\gamma}N_{\pm}:=\partial_{\gamma}N|_{\partial (N_0)_{\pm}}$.

\begin{Thm}[Divergence Theorem]
\label{thmUenalAnalogon1}
Assume that $\tilde{g}$ is degenerate and $\partial (N_0)_0$ has measure zero in $\partial N_0$. Then,
under the remaining hypotheses of Thm. \ref{thmSuperDivergence},
\begin{align*}
\int_{(N,\gamma)}\dsvol_g\,\div X=S\int_{(\partial_{\gamma}N_+,\partial\gamma)}\dsvol_{\tilde{g}}\,\iota_{\gamma}^*\left(\scal[g]{X}{\nu}\right)
-S\int_{(\partial_{\gamma}N_-,\partial\gamma)}\dsvol_{\tilde{g}}\,\iota_{\gamma}^*\left(\scal[g]{X}{\nu}\right)
\end{align*}
\end{Thm}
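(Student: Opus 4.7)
The plan is to reduce to the non-degenerate divergence theorem \ref{thmSuperDivergence} by approximating $N$ with a family $(N_\epsilon)_{\epsilon>0}$ of open subsupermanifolds whose classical boundaries lie entirely in the non-degenerate locus $\partial(N_0)_+\cup\partial(N_0)_-$, following the classical exhaustion argument of \"Unal \cite{Uen95}. Using that $\partial(N_0)_0$ has measure zero in $\partial N_0$, I would cut out a shrinking neighborhood of $\partial(N_0)_0$ inside $\overline{N_0}$ and smooth off the resulting boundary, so that $\partial(N_\epsilon)_0$ decomposes as a disjoint union of two open pieces $\partial(N_\epsilon)_{0,\pm}$ approximating $\partial(N_0)_\pm$ respectively as $\epsilon\to 0$, and such that the induced metric is non-degenerate everywhere on $\partial(N_\epsilon)_0$.

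Restricting $\gamma$ to $N_\epsilon$ and decomposing the boundary supermanifold accordingly yields $\partial_\gamma N_{\epsilon,\pm}$ together with induced retractions $\partial\gamma$ as in (\ref{eqnBoundaryDiagram}). On each of the two pieces the sign $\varepsilon(\nu)$ is constant, with opposite values on the two pieces --- this is the source of the asymmetric sign pattern in the claim. Applying Thm. \ref{thmSuperDivergence} separately on each component and summing the two resulting identities gives, for every small $\epsilon$,
\begin{align*}
\int_{(N_\epsilon,\gamma)}\dsvol_g\,\div X
&=S\int_{(\partial_\gamma N_{\epsilon,+},\partial\gamma)}\dsvol_{\tilde g}\,\iota_\gamma^*\left(\scal[g]{X}{\nu}\right)\\
&\quad-S\int_{(\partial_\gamma N_{\epsilon,-},\partial\gamma)}\dsvol_{\tilde g}\,\iota_\gamma^*\left(\scal[g]{X}{\nu}\right).
\end{align*}

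Passing to the limit $\epsilon\to 0$, the left hand side converges to $\int_{(N,\gamma)}\dsvol_g\,\div X$ by dominated convergence, using compactness of $\overline{N_0}$ together with smoothness of $\dsvol_g\,\div X$. The main obstacle is convergence of the boundary integrals: the factor $\iota_\gamma^*(\scal[g]{X}{\nu})$ may blow up near $\partial(N_0)_0$ where $\nu$ fails to be defined, but Lem. \ref{lemInducedMetric} exhibits $\dsvol_{\tilde g}$ with the factor $\sqrt{\abs{\sdet\tilde g}}$ whose classical part vanishes precisely where $g_0|_{\partial N_0}$ degenerates, thereby compensating this blowup. Combined with the measure-zero hypothesis on $\partial(N_0)_0$, this yields convergence of the classical $\theta$-coefficient to the claimed value; verifying the analogous compensation for the higher-order coefficients in the $\theta$-expansion, via the local restriction formula (\ref{eqnRestrictionMap}) together with the factorization of Lem. \ref{lemInducedMetric}, is the main technical point I expect to have to carry out.
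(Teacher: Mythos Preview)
Your exhaustion strategy is a genuinely different route from the paper's, and the paper's is considerably simpler. The paper does not perturb $N$ at all. Instead it re-opens the proof of Thm.~\ref{thmSuperDivergence} at the point where, via the flow $\varphi$ of $X$ and the change-of-retraction formula (Thm.~\ref{thmChangeOfRetraction}), one has
\begin{align*}
\int_{(N,\gamma)}\dsvol_g\,\div X=-\ev|_{t=0}D\,b_{N,\varphi^*\gamma,\gamma}(\varphi^*\dsvol_g)\,.
\end{align*}
The point is that this boundary term is, by Def.~\ref{defRetractionIntegral}, an ordinary integral over the classical manifold $\partial N_0$ of a smooth density $(\partial\gamma)_!(\cdots)$; no normal vector $\nu$ and no boundary metric $\tilde g$ have entered yet, so no non-degeneracy hypothesis is needed. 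Since $\partial(N_0)_0$ has measure zero, this classical integral splits as the sum of its restrictions to $\partial(N_0)_+$ and $\partial(N_0)_-$. On each of these open pieces $\tilde g$ is non-degenerate by Lem.~\ref{lemNonDegeneracy}, and the remaining calculation from the proof of Thm.~\ref{thmSuperDivergence} (via Lem.~\ref{lemInducedMetric}) goes through verbatim, with $\varepsilon(\nu)=+1$ on $\partial_\gamma N_+$ and $\varepsilon(\nu)=-1$ on $\partial_\gamma N_-$.

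Your approach instead applies the already-simplified Thm.~\ref{thmSuperDivergence} as a black box, which forces you to manufacture auxiliary boundaries on which $\tilde g$ is everywhere non-degenerate. Two issues are not covered by the proposal. First, after excising a neighbourhood of $\partial(N_0)_0$ in $\overline{N_0}$ and smoothing, $\partial(N_\epsilon)_0$ will in general contain ``cap'' pieces transversal to $\partial N_0$; these are new boundary components, not subsets of $\partial N_0$, and there is no reason they can be arranged to be purely spacelike or purely timelike, nor that their super-contribution tends to zero --- this needs a separate argument you have not sketched. Second, the compensation you hope for between the vanishing of $\sqrt{\abs{\sdet\tilde g}}$ and the blow-up of $\nu$ near $\partial(N_0)_0$ is exactly what the paper's route sidesteps: in the paper the integrand over $\partial N_0$ is smooth \emph{before} $\nu$ is ever introduced, so no such cancellation is required, for any coefficient in the $\theta$-expansion. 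Your programme is plausible but substantially more delicate than simply discarding a null set from a smooth classical integral.
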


\begin{proof}
As in the proof of Thm. \ref{thmSuperDivergence}, we find that, for even $X$,
\begin{align*}
\int_{(N,\gamma)}\dsvol_g\,\div X=-\ev|_{t=0}Db_{N,\varphi^*\gamma,\gamma}(\varphi^*\dsvol_g)
\end{align*}
with $\varphi$ denoting the flow of $X$, and analogous for odd $X$.
Since $\partial(N_0)_0$ is assumed to be a nullset, the boundary integral is just the sum of integrals over
$\partial_{\gamma}N_{\pm}$. By Lem. \ref{lemNonDegeneracy}, $\tilde{g}$ restricted to $\partial_{\gamma}N_{\pm}$
is non-degenerate. Both integrals can thus be calculated as in the proof of Thm. \ref{thmSuperDivergence}, with
different signs $\varepsilon(\nu)=\pm 1$ on $\partial_{\gamma}N_{\pm}$, respectively.
\end{proof}

\subsection{Conserved Quantities and Superharmonic Functions and Maps}
\label{subsecConservedQuantities}

Given a superharmonic map, it was shown in Thms. 4.10, 4.12 and 4.15 of \cite{Gro13}
that every Killing vector field induces a divergence-free vector field which,
by our next result, is a conserved quantity.
In the rest of this section, we will then show that
superharmonic functions, to be studied further in Sec. \ref{secMVT} on the supersphere,
are special instances of superharmonic maps and translate Thm. 4.15 of \cite{Gro13} into this situation.

\begin{Prp}
\label{prpConservedQuantity}
Let $\tilde{g}$ be non-degenerate.
Let $X$ be a vector field such that $\div X=0$. Then the integral
\begin{align*}
\int_{\partial N}\dsvol_{\tilde{g}}\,\iota^*\left(\scal[g]{X}{\nu}\right)=0
\end{align*}
is independent of a retraction and vanishes for any boundary supermanifold $\iota:\partial N\rightarrow M$
of even codimension. In this sense, $X$ is a conserved quantity.
\end{Prp}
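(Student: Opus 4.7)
The plan is to read both assertions — retraction-independence and vanishing — straight out of Thm. \ref{thmSuperDivergence} together with Cor. \ref{corSuperDivergence}, using only the single hypothesis $\div X=0$.

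Fix any retraction $\gamma$ on $M$ and form the boundary data $(\partial_\gamma N,\iota_\gamma,\partial\gamma)$ as in (\ref{eqnBoundaryDiagram}); this is admissible since $\tilde g$ is assumed non-degenerate, so Thm. \ref{thmSuperDivergence} applies and yields
\[
S\cdot\varepsilon(\nu)\int_{(\partial_\gamma N,\partial\gamma)}\dsvol_{\tilde g}\,\iota_\gamma^{*}\!\left(\scal[g]{X}{\nu}\right)
\;=\;\int_{(N,\gamma)}\dsvol_g\,\div X\;=\;0.
\]
Because $S\cdot\varepsilon(\nu)=\pm 1$, the boundary integral itself vanishes, proving the stated equality for the canonical boundary supermanifold associated with $\gamma$.

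For independence of the retraction, I would invoke Cor. \ref{corSuperDivergence}: comparing the identities at $(N,\gamma')$ and $(N,\gamma)$ introduces exactly one extra contribution $b_{N,\gamma',\gamma}(\dsvol_g\,\div X)$. The explicit expression given there is a finite sum of boundary integrals whose integrands have the shape $\iota_\gamma^{*}\partial_\tau^{j-1}\!\bigl(\scal[g]{\partial_\tau}{\nu}\,\div X\,(\gamma'^{*}(\rho)-\gamma^{*}(\rho))^{j}\bigr)$. Since $\div X$ vanishes identically as a superfunction on $M$, so do all of its local $\tau$-derivatives, and the super Leibniz expansion of $\partial_\tau^{j-1}$ therefore collapses term by term; the entire correction is zero, and so the boundary integral is genuinely independent of the retraction. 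For the \emph{even codimension} clause, the boundary in the preceding setup is cut out by the even superfunction $\tau$ and hence has codimension $1|0$, which is exactly the situation covered by the two divergence theorems, so no extension argument is required.

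The only step that warrants real attention is the vanishing of $b_{N,\gamma',\gamma}(\dsvol_g\,\div X)$: at first glance the transversal derivatives $\partial_\tau^{j-1}$ might appear to obstruct pulling $\div X$ out as a global zero. This is only an illusion, since $\div X\equiv 0$ on all of $M$ implies the vanishing of each $\partial_\tau^{k}(\div X)$ appearing in the Leibniz expansion — but verifying this cleanly is really the only nontrivial bookkeeping in the argument. Everything else is sign-tracking and unpacking definitions.
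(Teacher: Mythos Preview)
Your core step---applying Thm.~\ref{thmSuperDivergence} with $\div X=0$ so that the boundary integral is forced to vanish---is exactly the paper's argument, and it already does all the work: running Thm.~\ref{thmSuperDivergence} with \emph{each} retraction $\gamma$ on $M$ gives zero every time, which covers ``vanishes for any boundary supermanifold''. The detour through Cor.~\ref{corSuperDivergence} is unnecessary and slightly misreads the statement. The clause ``independent of a retraction'' in the proposition refers to the \emph{boundary} retraction $\partial\gamma$ used to integrate over $\partial N$, and the paper dispatches this in one line via the remark preceding Cor.~\ref{corSuperDivergence}: $\partial N_0$ is compact, so the Berezin integral on the boundary is retraction-free regardless of the integrand. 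Your Cor.~\ref{corSuperDivergence} computation, by contrast, only re-derives that the boundary integral over $\partial_\gamma N$ vanishes (the same boundary supermanifold appears on both sides there), which you already knew. Finally, your self-assessment that the vanishing of $b_{N,\gamma',\gamma}(\dsvol_g\,\div X)$ is ``the only nontrivial bookkeeping'' is too generous to the difficulty: the integrand carries $\div X$ as a literal multiplicative factor, so no Leibniz expansion is needed---it is identically zero before any $\partial_\tau$ acts.
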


\begin{proof}
By assumption, the boundary integral in Thm. \ref{thmSuperDivergence} vanishes for every retraction $\gamma$ on $M$.
It is independent of the boundary retraction $\partial\gamma$ by the comment preceding Cor. \ref{corSuperDivergence}.
\end{proof}

Let $\varphi:M\rightarrow\bR$ be a map with flesh, that is a supermanifold morphism $\varphi:M\times\bR^{0|L}\rightarrow\bR$.
Denoting the global coordinate on $\bR$ by $x$, we get a superfunction $f:=\varphi^*(x)\in(\mO_{M\times\bR^{0|L}})_0$.
Conversely, every superfunction $f\in(\mO_{M\times\bR^{0|L}})_0$ defines a map with flesh via $\varphi^*(x):=f$,
such that both notions are equivalent.

\begin{Prp}
\label{prpSuperharmonic}
Let $\varphi:M\rightarrow\bR$ be a map with flesh and $f$ its associated superfunction as above. Then
\begin{align*}
\tau(\varphi)=-\laplace f\cdot(\varphi^{\sharp}\circ\partial_x)
\end{align*}
In particular, $\varphi$ is superharmonic if and only if $f$ is superharmonic.
\end{Prp}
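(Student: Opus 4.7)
The plan is to unwrap the definition (\ref{eqnTensionField}) of the tension field, exploiting that the target $\bR$ (with the standard flat metric $dx^2$) is globally parallelisable by the single even vector field $\partial_x$.

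First I would observe that $\mS\bR$ is freely generated by $\partial_x$, so every vector field along $\varphi$ can be written uniquely as $g\cdot(\varphi^\sharp\circ\partial_x)$. Applying the identity $d\varphi[Y]=Y\circ\varphi^\sharp$ to the coordinate $x$ then reads off
\begin{align*}
d\varphi[Y]=Y(f)\cdot(\varphi^\sharp\circ\partial_x).
\end{align*}
Because $(\bR,dx^2)$ is flat, its Levi-Civita connection annihilates $\partial_x$, and so the pullback connection on $\mS\varphi$ satisfies $\nabla_X(\varphi^\sharp\circ\partial_x)=0$; consequently $\nabla_X\bigl(g\cdot(\varphi^\sharp\circ\partial_x)\bigr)=X(g)\cdot(\varphi^\sharp\circ\partial_x)$.

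Next I would feed these into the tensorial Leibniz rule $(\nabla_X d\varphi)[Y]=\nabla_X(d\varphi[Y])-d\varphi[\nabla_XY]$ (even $d\varphi$, hence no extra sign), which yields $(\nabla_X d\varphi)[Y]=\bigl(X(Y(f))-(\nabla_XY)(f)\bigr)\cdot(\varphi^\sharp\circ\partial_x)$. Specialising to $X=e_j$, $Y=Je_j$ and taking the supertrace as prescribed by (\ref{eqnTensionField}) produces
\begin{align*}
\tau(\varphi)=\bigl(e_j(Je_j(f))-(\nabla_{e_j}Je_j)(f)\bigr)\cdot(\varphi^\sharp\circ\partial_x),
\end{align*}
which by the local formula (\ref{eqnLaplaceLocal}) for $\laplace f$ is precisely $-\laplace f\cdot(\varphi^\sharp\circ\partial_x)$.

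For the final clause, I would note that $(\varphi^\sharp\circ\partial_x)(x)=1$, so the factor $\varphi^\sharp\circ\partial_x$ is a nowhere vanishing generator of $\mS\varphi$; hence $\tau(\varphi)=0$ if and only if $\laplace f=0$. The only mildly delicate step is verifying that the pullback connection annihilates $\varphi^\sharp\circ\partial_x$ even in the presence of flesh, but this follows directly from its defining property together with the flatness of the range, and all remaining manipulations are sign-free because the generator $\partial_x$ of $\mS\bR$ is even.
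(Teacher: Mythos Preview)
Your argument is correct and follows essentially the same route as the paper: both compute $d\varphi[Y]=Y(f)\cdot(\varphi^\sharp\circ\partial_x)$, observe that the pullback connection kills $\varphi^\sharp\circ\partial_x$ by flatness of $\bR$, and then expand $\tau(\varphi)=\nabla_{e_j}d\varphi[Je_j]-d\varphi[\nabla_{e_j}Je_j]$ to recover the local formula (\ref{eqnLaplaceLocal}). The only cosmetic difference is that the paper inserts the frame expansion (\ref{eqnFrameExpansion}) for $\nabla_{e_j}Je_j$ before applying $d\varphi$, whereas you apply $d\varphi[\nabla_{e_j}Je_j]=(\nabla_{e_j}Je_j)(f)\cdot(\varphi^\sharp\circ\partial_x)$ directly; your version is slightly cleaner.
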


\begin{proof}
Let $X$ be a vector field on $M$. Unwinding the definitions, we find
\begin{align*}
d\varphi[X]=X(f)\cdot(\varphi^{\sharp}\circ\partial_x)
\end{align*}
For the pullback connection, this implies
\begin{align*}
\nabla_X(\varphi^{\sharp}\circ\partial_x)=X(f)\cdot\varphi^{\sharp}\circ\nabla_{\partial_x}\partial_x=0
\end{align*}
We calculate the tension field with respect to an ON-frame as follows.
\begin{align*}
\tau(\varphi)&=\nabla_{e_j}d\varphi[Je_j]-d\varphi(\nabla_{e_j}Je_j)\\
&=\nabla_{e_j}\left(Je_j(f)(\varphi^{\sharp}\circ\partial_x)\right)-d\varphi\left(\scal[g]{\nabla_{e_j}Je_j}{e_k}Je_k\right)\\
&=\left(e_j\circ Je_j(f)-\scal[g]{\nabla_{e_j}Je_j}{e_k}Je_k(f)\right)(\varphi^{\sharp}\circ\partial_x)
\end{align*}
The statement is now immediate by (\ref{eqnFrameExpansion}) and (\ref{eqnLaplaceLocal}).
\end{proof}

\begin{Rem}
In terms of $f$, the superharmonic action functional considered in Sec. 4 of \cite{Gro13} reads
\begin{align*}
\mA(\varphi)=\frac{1}{2}\int_M\dsvol_g\,e_j(f)Je_j(f)
\end{align*}
\end{Rem}

In the following theorem, we will use the term ''vector field'' in the more general sense of vector field with flesh,
cf. \cite{Gro11a} for details. This is such that our previous results, in particular Prp. \ref{prpConservedQuantity}, continue to hold.

\begin{Thm}[Noether]
\label{thmSuperharmonicNoether}
Let $f$ be a superharmonic map $\laplace f=0$ and $\xi\in\mS M$ be a Killing vector field. Then the vector field
\begin{align*}
Y_{\xi}:=\frac{1}{2}e_j(f)Je_j(f)\xi-\xi(f)e_j(f)Je_j
\end{align*}
is divergence-free $\div Y_{\xi}=0$ and thus a conserved quantity.
\end{Thm}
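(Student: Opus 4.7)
The plan is to reduce the claim to Thm. 4.15 of \cite{Gro13} via the correspondence of Prp. \ref{prpSuperharmonic}.

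First, I would regard $f$ as a map with flesh $\varphi:M\to\bR$ by setting $\varphi^{\sharp}(x):=f$, where $x$ is the standard coordinate on $\bR$ equipped with its Euclidean metric. By Prp. \ref{prpSuperharmonic}, the hypothesis $\laplace f=0$ is equivalent to $\tau(\varphi)=0$, so $\varphi$ is a superharmonic map. The calculation in the proof of Prp. \ref{prpSuperharmonic} also shows that $d\varphi[X]=X(f)\cdot(\varphi^{\sharp}\circ\partial_x)$ for every $X\in\mS M$, so the pulled-back metric reduces simply to a product of directional derivatives of $f$.

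Next, I would apply Thm. 4.15 of \cite{Gro13} to $\varphi$ and to the Killing field $\xi$. That theorem associates to every Killing field on the source a divergence-free vector field along $\varphi$ built out of the energy density $\tfrac{1}{2}\scal[g_{\varphi}]{d\varphi[e_j]}{d\varphi[Je_j]}$ and the current $\scal[g_{\varphi}]{d\varphi[\xi]}{d\varphi[e_j]}Je_j$. Inserting the formula for $d\varphi$ just obtained, the first expression collapses to $\tfrac{1}{2}e_j(f)Je_j(f)\xi$ and the second to $\xi(f)e_j(f)Je_j$, which is exactly the $Y_{\xi}$ stated in the theorem. Divergence-freeness is then inherited, and the conserved-quantity interpretation follows from Prp. \ref{prpConservedQuantity}.

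The main technical obstacle is bookkeeping of super signs. Thm. 4.15 of \cite{Gro13} is formulated for a general target supermanifold, and several super signs appear when $Je_j$ is commuted past $d\varphi$ or past $\xi$ in the definition of the current. One has to verify that when the target is the purely even line $\bR$ with trivial metric these signs collapse, leaving the plain products $e_j(f)Je_j(f)$ and $\xi(f)e_j(f)$ that appear in the statement. A purely self-contained alternative would unfold $\div Y_{\xi}$ directly using Lem. \ref{lemDivergenceRule}, the frame formula of Prp. \ref{prpDivCharacterisation}$(iii)$, the supergeometric Killing identity for $\nabla\xi$, and the assumption $\laplace f=0$; but this essentially reproduces by hand the proof of Thm. 4.15 in \cite{Gro13}.
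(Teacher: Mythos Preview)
Your proposal is correct and follows essentially the same route as the paper: invoke Prp.~\ref{prpSuperharmonic} to pass from $f$ to the superharmonic map $\varphi$, apply Thm.~4.15 of \cite{Gro13} to obtain a divergence-free vector field, and then specialise the general formula using $d\varphi[X]=X(f)\cdot(\varphi^{\sharp}\circ\partial_x)$ together with the frame expansion (\ref{eqnFrameExpansion}) to arrive at the stated $Y_{\xi}$. The paper's proof is terser but identical in structure, and your remark about the collapse of super signs for the purely even target $\bR$ is exactly the point that makes the reduction clean.
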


\begin{proof}
By Prp. \ref{prpSuperharmonic}, the map $\varphi$ corresponding to $f$ is superharmonic.
Thm. 4.15 of \cite{Gro13} thus yields a divergence-free vector field $Y_{\xi}$, which we shall now express in terms of $f$.
Denoting the standard metric on $\bR$ by $g_{\bR}$, we yield
\begin{align*}
Y_{\xi}&=\left(\frac{1}{2}\scal[\varphi^*g_{\bR}]{e_j}{Je_j}\scal[g]{\xi}{e_i}-\scal[\varphi^*g_{\bR}]{\xi}{e_i}\right)Je_i\\
&=\left(\frac{1}{2}e_j(f)Je_j(f)\scal[g]{\xi}{e_i}-\xi(f)e_i(f)\right)Je_i
\end{align*}
which proves the statement.
\end{proof}

\section{Mean Value Theorems on the Supersphere}
\label{secMVT}

Let $B^m_L$ denote the open ball of radius $L>0$ in $\bR^m$ and $S^{m-1}_L=\partial B^m_L$ its boundary.
Their volumes are well-known to be
\begin{align}
\label{eqnVolume}
\mathrm{vol}(S^{m-1}_L)=2\frac{\pi^{m/2}}{\Gamma(\frac{m}{2})}L^{m-1}\;,\qquad
\mathrm{vol}(B^m_L)=\frac{\pi^{m/2}}{\Gamma(\frac{m}{2}+1)}L^m
\end{align}
Let $f$ be a harmonic function defined on a domain $\Omega\subseteq\bR^m$ such that $\overline{B^m_L}\subseteq\Omega$.
The classical mean value theorem (see Sec. 2.2.2 of \cite{Eva10}) can then be stated as follows.
\begin{align}
\label{eqnClassicalMVT}
\int_{S_L^{m-1}}\dvol_{\tilde{g}}\,\iota_L^*(f)=\mathrm{vol}(S^{m-1}_L)\cdot f(0)\;,\qquad
\int_{B_L^m}\dvol_g\,f=\mathrm{vol}(B^m_L)\cdot f(0)
\end{align}
where $g$ denotes the standard metric on $\bR^m$ and $\tilde{g}:=\iota_L^*g$ its pullback to $S^{m-1}_L$
under the inclusion $\iota_L:S^{m-1}_L\rightarrow\bR^m$.

The $L$-supersphere in $\bR^{m|2n}$ was introduced by formally setting a certain extension of the radius
to $L$. Integration over the supersphere was introduced by an extension of the Pizzetti formula,
for super-polynomials in \cite{DBS07} and later for general superfunctions in \cite{CDS09},
and finally expressed in terms of an embedding of the supersphere in \cite{Cou12}.
It was found that (\ref{eqnVolume}) continues to hold in that context with $m$ replaced by the so called
superdimension $M:=m-2n$. In \cite{CDS10}, a super analogon of the mean value theorem (\ref{eqnClassicalMVT}) for
harmonic superfunctions was established.

The first purpose of this final section is to formulate the supersphere integration mentioned
in terms of a retraction $\gamma$ along the lines of the general theory of Sec. \ref{secIntegration}
and to reproduce the volume formulas. In Sec. \ref{subsecMVT}, we will finally give a self-contained proof
of two mean value theorems, which is based on the divergence Thm. \ref{thmSuperDivergence} in a very natural way,
thus avoiding the subtleties left open in the proof of \cite{CDS10}.

\subsection{Flat Superspace and the Supersphere}

On $\bR^{m|2n}$, consider standard global coordinates $(x,\theta)$ and, with respect to the induced
coordinate vector fields, the supermetric from \cite{CDS10} as follows.
\begin{align}
\label{eqnSuperspaceMetric}
\renewcommand{\arraystretch}{1.5}
g&:=\left(\arr{@{\;}c@{\;}|@{\;}c@{\;}}{G_{0,m}&0\\\hline 0&-\frac{1}{2}J_{2n}}\right)
\end{align}
with $G_{0,m}$ and $J_{2n}$ as in (\ref{eqnStandardMetric}). The volume form then reads
\begin{align}
\label{eqnSuperspaceVolumeform}
\dsvol_g=[d(x,\theta)]\,2^n
\end{align}
Moreover,
\begin{align}
\label{eqnONFrame}
e_k:=\left\{\arr{ll}{\dd{x^k}&k\leq m\\\dd{\theta^k}&k=m+2l-1\\-2\dd{\theta^k}&k=m+2l}\right.
\end{align}
is seen to be an $OSp_{(0,m)|2n}$-frame. By means of (\ref{eqnLaplaceLocal}), the Laplacian thus becomes
\begin{align}
\label{lemLaplaceSupersphere}
\laplace f=-e_k\circ Je_k(f)=-\left(\sum_k(\partial_{x^k})^2-4\sum_{j\,\mathrm{odd}}\partial_{\theta^j}\partial_{\theta^{j+1}}\right)f
\end{align}
The superradius as considered in \cite{CDS10} can be understood to be obtained by formally identifying the tuple of standard coordinates $\vx:=(x^1,\ldots,\theta^1,\ldots)$
as tuple of coefficients of the vector fields $(\dd{x^1},\ldots,\dd{\theta^1},\ldots)$, i.e. as
$\vx=\sum_ix^i\cdot\dd{x^i}+\sum_j\theta^j\cdot\dd{\theta^j}$, such that
\begin{align}
\label{eqnSuperradius}
R^2:=\scal[g]{\vx}{\vx}=\sum_i(x^i)^2-\sum_{j\,\mathrm{odd}}\theta^j\theta^{j+1}=:r^2+\vartheta^2
\end{align}
For the $n$-th power of $\vartheta^2$, we find
\begin{align}
\label{eqnThetaPower}
\vartheta^{2n}=\left(-\sum_{j\,\mathrm{odd}}\theta^j\theta^{j+1}\right)^n
=(-1)^nn!\theta^1\theta^2\cdot\theta^{2n}
\end{align}
Let $L>0$. The sphere $S^{m-1}_L$ with radius $L$ can be defined by the
boundary function $\rho:\bR^m\rightarrow\bR$, $\rho:=r$ becoming $L$
and is the boundary $S^{m-1}_L=\partial B^m_L$ of the $L$-ball.
Similarly, we define the superfunction $\tau\in\mO_{\bR^{m|2n}}$ by $\tau:=R$.
Consider the retraction $\gamma:\bR^{m|2n}\rightarrow\bR^m$ as follows.
\begin{align}
\label{eqnGamma}
\gamma^*(x^j):=x^j\sqrt{1+\frac{\vartheta^2}{r^2}}\qquad\textrm{such that}\qquad\tau=\gamma^*(\rho)
\end{align}
The $L$-superball $B^{m|2n}_L:=\bR^{m|2n}|_{B^m_L}$ is defined to be the superspace $\bR^{m|2n}$ restricted to the open ball $B^m_L$.
By (\ref{eqnBoundaryDiagram}), there is a unique boundary supermanifold $\partial_{\gamma}B^{m|2n}_L$
with immersion and a unique compatible boundary retraction, respectively, as follows.
\begin{align*}
\iota_L:=\iota_{\gamma}:\partial_{\gamma}B^{m|2n}_L\rightarrow\bR^{m|2n}\quad\textrm{such that}\quad\iota_{\gamma}^*(\tau)=L\;,\qquad
\gamma_L:=\partial\gamma:\partial_{\gamma}B^{m|2n}_L\rightarrow S_L^{m-1}
\end{align*}

\begin{Def}
The supermanifold $S^{m-1|2n}_L:=\partial_{\gamma}B^{m|2n}_L$ is called the $L$-supersphere.
When the dimension is understood, we shall also abbreviate $S_L:=S^{m-1|2n}_L$.
\end{Def}

The maps involved are best described by means of spherical coordinates
\begin{align}
\label{eqnSphericalCoordinates}
(r,\phi^0,\phi^1,\ldots,\phi^{m-2}):(0,\infty)\times(0,2\pi)\times(0,\pi)^{m-2}\rightarrow\bR^m
\end{align}
of $\bR^m$ (with a nullset removed). Similarly, the following are two different tuples of coordinates
of $\bR^{m|2n}$ (with the same nullset removed).
\begin{align}
\label{eqnSuperSphericalCoordinates}
(r,\phi^0,\ldots,\phi^{m-2},\theta^1,\ldots,\theta^{2n})\qquad\mathrm{or}\qquad
(R,\phi^0,\ldots,\phi^{m-2},\theta^1,\ldots,\theta^{2n})
\end{align}
The second tuple of coordinates has the form $x=(\tau,\tilde{x})$
as described in the paragraph preceding (\ref{eqnBoundaryVectorFields}) with $\tau=R$.
The coordinate transformation $\varphi=(\id,\varphi^{\sharp})$ between these coordinates takes the following form.
\begin{align}
\label{eqnPhi}
\varphi^*(r)=\sqrt{R^2-\vartheta^2}\,,\quad\varphi^*(\phi^k)=\phi^k\,,\quad\varphi^*(\theta^j)=\theta^j
\end{align}
such that $\varphi^*(r^2+\vartheta^2)=R^2$, while $\gamma$ becomes
\begin{align}
\label{eqnSphericalGamma}
\gamma^*(r)=R\;,\qquad\gamma^*(\phi^k)=\phi^k\;,\qquad\gamma^*(\theta^k)=\theta^k\qquad\textrm{such that}\quad\gamma_L^*(\phi^k)=\phi^k
\end{align}
Let $(y,\theta)$ denote coordinates of $\bR^{m|2n}$ which depend on the coordinates $(R,\phi,\theta)$
in the same way as the standard coordinates $(x,\theta)$ depend on the spherical coordinates $(r,\phi,\theta)$
of (\ref{eqnSuperSphericalCoordinates}). Then the map $\varphi$ becomes
\begin{align*}
\varphi^*(x^j)=y^j\sqrt{1-\frac{\vartheta^2}{R^2}}\;,\qquad\varphi^*(\theta^j)=\theta^j
\end{align*}
which is as considered in Sec. 5.2 of \cite{Cou12} (where $y^j$ and $R$ are denoted $x^j$ and $r$, respectively).
It follows that $\varphi^*\circ\gamma^*(x^j)=y^j$ and, in this sense, the diagram
\begin{align}
\label{eqnRetractionDiagram}
\begin{xy}
\xymatrix{\bR^{m|2n}\ar[dr]_{\gamma_{\mathrm{std}}}\ar[rr]^{\varphi}&\ar@{}[d]|{\circlearrowleft}&\bR^{m|2n}\ar[dl]^{\gamma}\\
&\bR^m}
\end{xy}
\end{align}
which is defined up to nullsets, commutes.

\begin{Lem}
\label{lemOuterNormal}
The vector field $\partial_{\tau}$ and the outer normal vector field $\nu$ on $S_L^{m-1|2n}$ are, respectively,
\begin{align*}
\partial_{\tau}=\frac{R}{r}\partial_r\;,\qquad
\nu=\frac{r}{R}\left(\partial_r+\frac{\theta^i}{r}\partial_{\theta^i}\right)
\end{align*}
with respect to the first coordinates in (\ref{eqnSuperSphericalCoordinates}) and $R=\sqrt{r^2+\vartheta^2}$.
They satisfy
\begin{align*}
\varepsilon(\nu)=1\;,\qquad\scal[g]{\partial_{\tau}}{\nu}=1\;,\qquad\nu(R)=1
\end{align*}
\end{Lem}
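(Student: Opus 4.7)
The plan is to compute $\partial_\tau$ directly from the coordinate change \eqref{eqnPhi}, and to identify the outer normal as the gradient of the superradius, $\nu = \nabla R$. In the $(R,\phi,\theta)$-system $\tau = R$ is a coordinate, so $\partial_\tau = \partial_R$. The transition to the $(r,\phi,\theta)$-system affects only the radial coordinate via $r = \sqrt{R^2 - \vartheta^2}$, so the chain rule together with the calculation rule \eqref{eqnCalculationRule} yields $\partial r/\partial R = R/r$ and the remaining partials vanish, giving $\partial_\tau = (R/r)\,\partial_r$.

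Next I would compute $\nabla R$ in the Cartesian coordinates underlying \eqref{eqnSuperspaceMetric}, using the local formula \eqref{eqnGradientfg} with the $OSp$-frame \eqref{eqnONFrame}. Since $R^2 = \sum_k (x^k)^2 + \vartheta^2$, the calculation rule \eqref{eqnCalculationRule} gives $\partial_{x^k}(R) = x^k/R$ and $\partial_{\theta^j}(R) = \partial_{\theta^j}(\vartheta^2)/(2R)$; inserting these into \eqref{eqnGradientfg} and collecting the $Je_j$-contributions across the three index ranges of \eqref{eqnONFrame}, the terms reassemble to
\begin{align*}
\nabla R = \frac{1}{R}\vx = \frac{1}{R}\Big(\sum_k x^k \partial_{x^k} + \sum_j \theta^j \partial_{\theta^j}\Big).
\end{align*}
Since the bosonic Euler part $\sum_k x^k \partial_{x^k}$ reads $r\partial_r$ in spherical coordinates, this recovers the claimed formula for $\nu$.

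It remains to verify the three identities via $\nu = \nabla R$. Orthogonality $\nu \perp_g \mS\partial_\gamma N$ is immediate from \eqref{eqnDefGradientLaplace}: locally $\mS\partial_\gamma N$ is spanned by the $\partial_{\phi^k}, \partial_{\theta^j}$ of the $(R,\phi,\theta)$-system, all of which annihilate $R$. The same identity gives $\scal[g]{\nabla R}{\nabla R} = (\nabla R)(R) = \vx(R)/R$, and a direct split $\vx(R) = \sum_k x^k \cdot x^k/R + \sum_j \theta^j \partial_{\theta^j}(R)$ reduces to $r^2/R + \vartheta^2/R = R$, where the fermionic part reassembles to $\vartheta^2/R$ after using $\theta^j \theta^{j+1} = -\theta^{j+1}\theta^j$. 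Hence $\varepsilon(\nu) = 1$; since $\nu_0 = \partial_r$ is outward-pointing, the sign convention fixes $\nu = \nabla R$. Finally $\scal[g]{\partial_\tau}{\nu} = \partial_\tau(R) = \partial_R(R) = 1$ and $\nu(R) = (\nabla R)(R) = 1$ both follow from the same gradient identity. The main obstacle is sign bookkeeping through the $J$-operator of the $OSp$-frame (which pairs $\partial_{\theta^{2l-1}}$ with $-2\partial_{\theta^{2l}}$) and the odd derivations of $\vartheta^2$; the clean cross-check is the identity $\vx(R) = R$, which is a manifestation of $R$'s degree-one homogeneity in the supercoordinates.
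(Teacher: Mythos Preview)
Your proof is correct and takes a genuinely different route from the paper's. The paper proceeds by explicitly transferring the coordinate vector fields $\partial_R,\partial_{\phi^k},\partial_{\theta^k}$ of the $(R,\phi,\theta)$-system to the $(r,\phi,\theta)$-system via $\varphi$, obtaining in particular the transformed fields $\widehat{\partial_{\theta^k}}=\partial_{\theta^k}\pm\tfrac{1}{2}\tfrac{\theta^{k\pm1}}{r}\partial_r$, and then \emph{verifies} by direct computation with the metric \eqref{eqnSuperspaceMetric} that the stated $\nu$ is orthogonal to all of these and has unit norm. Your approach instead \emph{derives} the formula for $\nu$ by recognising it as the gradient $\nabla R=\tfrac{1}{R}\vx$, computed in Cartesian coordinates from \eqref{eqnGradientfg} and \eqref{eqnONFrame}; orthogonality to $\mS\partial_\gamma N$ then falls out of the defining identity $\scal[g]{\nabla R}{Y}=Y(R)$ without having to write down $\widehat{\partial_{\theta^k}}$ at all, and the three scalar identities all reduce to the single Euler-type relation $\vx(R)=R$. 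The gradient approach is more conceptual and explains the form of $\nu$, while the paper's direct verification has the advantage of making the tangent sheaf of $S_L^{m-1|2n}$ explicit in the $(r,\phi,\theta)$ coordinates, which is useful elsewhere. Both treatments of $\partial_\tau$ are identical.
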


Remark that $R\nu=\mathbb{E}$ with $\mathbb{E}$ in \cite{CDS10}.

\begin{proof}
With respect to the second coordinates in (\ref{eqnSuperSphericalCoordinates}), the tangent sheaf of $S^{m-1|2n}_L$
is spanned by the vector fields $\partial_{\phi^k}$ and $\partial_{\theta^k}$
while $\partial_{\tau}=\partial_R$. Transferred to extended
spherical coordinates via (\ref{eqnPhi}) as in (\ref{eqnBoundaryVectorFields}) and using the calculation rule
(\ref{eqnCalculationRule}), these vector fields become
\begin{align*}
\widehat{\partial_R}:=(\varphi^{\sharp})^{-1}\circ\partial_R\circ\varphi^{\sharp}
&=(\varphi^{\sharp})^{-1}\left(\partial_R\varphi^{\sharp}(\zeta^i)(\varphi^{\sharp}\circ\partial_{\zeta^i})\right)\\
&=(\varphi^{\sharp})^{-1}\left(\frac{R}{\sqrt{R^2-\vartheta^2}}\right)\partial_r=\frac{\sqrt{r^2+\vartheta^2}}{r}\cdot\partial_r
\end{align*}
and
\begin{align*}
\widehat{\partial_{\phi^k}}:=(\varphi^{\sharp})^{-1}\circ\partial_{\phi^k}\circ\varphi^{\sharp}=\partial_{\phi^k}\;,\qquad
\widehat{\partial_{\theta^k}}:=(\varphi^{\sharp})^{-1}\circ\partial_{\theta^k}\circ\varphi^{\sharp}
=\left\{\arr{ll}{\partial_{\theta^k}+\frac{1}{2}\frac{\theta^{k+1}}{r}\partial_r&k\;\mathrm{odd}\\
\partial_{\theta^k}-\frac{1}{2}\frac{\theta^{k-1}}{r}\partial_r&k\;\mathrm{even}}\right.
\end{align*}
A straightforward calculation using (\ref{eqnSuperspaceMetric}) yields that $\nu$ as stated satisfies
\begin{align*}
\scal[g]{\nu}{\widehat{\partial_{\phi^k}}}=0\;,\qquad
\scal[g]{\nu}{\widehat{\partial_{\theta^k}}}=0\;,\qquad
\scal[g]{\nu}{\nu}=1\;,\qquad
\scal[g]{\partial_{\tau}}{\nu}=1
\end{align*}
and is thus the outer normal as claimed. The last statement follows from a simple calculation.
\end{proof}

As an immediate corollary of Lem. \ref{lemInducedMetric}, we obtain the following.

\begin{Cor}
\label{corInducedMetric}
The volume form with respect to the pullback metric $\tilde{g}=\iota_L^*g$ is the restriction
$\dsvol_{\tilde{g}}=\dsvol_g|_{S^{m-1|2n}_L,\tau}$.
\end{Cor}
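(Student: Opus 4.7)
The plan is to invoke Lemma \ref{lemInducedMetric} directly, specializing it to the supersphere setting. Lemma \ref{lemInducedMetric} gives, for any non-degenerate boundary,
\begin{align*}
\dsvol_g|_{\partial_{\gamma}N,\tau}=\dsvol_{\tilde{g}}\cdot\iota_{\gamma}^*\left(\varepsilon(\nu)\cdot\scal[g]{\partial_{\tau}}{\nu}\right)
\end{align*}
so the only task is to show that in the present situation the scalar correction factor reduces to the constant $1$.

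First I would verify applicability: the classical sphere $S^{m-1}_L\subset\bR^m$ is, as a level set of $r$ with non-vanishing gradient, a non-degenerate hypersurface in the Riemannian manifold $(\bR^m,g_0)$, so Lem.~\ref{lemNonDegeneracy} guarantees that $\tilde{g}=\iota_L^*g$ is non-degenerate and Lem.~\ref{lemInducedMetric} applies with $(\partial_{\gamma}N,\iota_{\gamma})=(S_L^{m-1|2n},\iota_L)$ and $\tau=R$ as chosen in the section.

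Next I would plug in the explicit data computed in Lem.~\ref{lemOuterNormal}: there it is shown that $\varepsilon(\nu)=1$ and $\scal[g]{\partial_{\tau}}{\nu}=1$. Hence the superfunction $\varepsilon(\nu)\cdot\scal[g]{\partial_{\tau}}{\nu}$ is identically $1\in\mO_{\bR^{m|2n}}$, and its pullback under the immersion $\iota_L^{\sharp}$ is again $1\in\mO_{S_L^{m-1|2n}}$. Substituting into the formula of Lem.~\ref{lemInducedMetric} yields $\dsvol_g|_{S_L^{m-1|2n},\tau}=\dsvol_{\tilde{g}}$, which is the claim.

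There is no real obstacle here; the corollary is essentially a reformulation of Lem.~\ref{lemOuterNormal} in the language of super volume forms. The only tiny point worth making explicit is that $\iota_L^{\sharp}$ preserves the unit of $\mO$, so no ambiguity arises in interpreting $\iota_L^*(1)$.
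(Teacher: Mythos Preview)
Your proof is correct and matches the paper's approach exactly: the paper states the result as ``an immediate corollary of Lem.~\ref{lemInducedMetric}'', with the factor $\varepsilon(\nu)\cdot\scal[g]{\partial_{\tau}}{\nu}=1$ supplied by the immediately preceding Lem.~\ref{lemOuterNormal}. Your additional remark on non-degeneracy via Lem.~\ref{lemNonDegeneracy} is a helpful clarification but not essential, since the underlying boundary is a round sphere in Euclidean space.
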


The change of integral forms that enters the transformation formula for spherical coordinates reads
\begin{align}
\label{eqnSphericalDeterminant}
\frac{[dx]}{[d(r,\phi)]}=r^{m-1}\cdot\Omega\;,\qquad
\Omega:=(\sin\phi^{m-2})^{m-2}\cdot(\sin\phi^{m-3})^{m-3}\cdot\ldots\cdot(\sin\phi^1)
\end{align}
With this notation, the integral over the sphere $S^{m-1}$ becomes
\begin{align*}
\int d\Omega:=\int_0^{2\pi}d\phi^0\int_0^{\pi}d\phi^1\ldots\int_0^{\pi}d\phi^{m-2}\,\Omega
\end{align*}
Similarly, the change of variables corresponding to (\ref{eqnPhi}) induces the transformation (\ref{eqnBerezinTransformation}) as follows.
\begin{align}
\label{eqnBerezinSuperspherical}
[d(r,\phi,\theta)]=[d(R,\phi,\theta)]\,\dd[\phi^*(r)]{R}
=[d(R,\phi,\theta)]\,\frac{R}{\sqrt{R^2-\vartheta^2}}
\end{align}
For the integration with respect to $\gamma$, we thus obtain the following formula.

\begin{Lem}
\label{lemSuperballIntegration}
Let $f\in\mO_{\bR^{m|2n}}(B^m_L)$ be a function defined on the $L$-ball, and
$\gamma$ and $\varphi$ be the retraction and coordinate transformation
of (\ref{eqnGamma}) and (\ref{eqnPhi}), respectively. Then
\begin{align*}
\int_{(B^{m|2n}_L,\gamma)}\dsvol_g\,f=2^n\int d^{2n}\theta\int_0^L dR\int d\Omega\,R\left(R^2-\vartheta^2\right)^{\frac{m-2}{2}}\varphi^*(f)
\end{align*}
\end{Lem}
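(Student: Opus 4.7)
The plan is to reduce the left-hand side to an ordinary Berezin integral in extended spherical coordinates by means of the pullback formula Lem.~\ref{lemPullbackIntegral} applied to the morphism $\varphi$ of (\ref{eqnPhi}). The decisive observation is that the commuting diagram (\ref{eqnRetractionDiagram}) converts the retraction $\gamma$ into the standard retraction $\gamma_{\mathrm{std}}$ upon pullback along $\varphi$, and with $\gamma_{\mathrm{std}}$ the integral collapses to a standard Berezin integral on $\bR^{m|2n}$.

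First I would apply Lem.~\ref{lemPullbackIntegral} to $\varphi$, restricted to the dense open subset of $\bR^{m|2n}$ on which (extended) spherical coordinates are well-defined. On this region, $\sdet(d\varphi) = R/\sqrt{R^2-\vartheta^2} > 0$ by (\ref{eqnBerezinSuperspherical}), so $\varphi$ is an orientation-preserving isomorphism and the lemma applies. Since $\varphi_0=\id$ and, by (\ref{eqnRetractionDiagram}), $\gamma\circ\varphi = \gamma_{\mathrm{std}}$, one obtains $\varphi^*\gamma = \varphi_0^{-1}\circ\gamma\circ\varphi = \gamma_{\mathrm{std}}$, while the preimage $\varphi^{-1}(B^{m|2n}_L)$ has underlying $B^m_L$. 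Thus
\begin{align*}
\int_{(B^{m|2n}_L,\gamma)}\dsvol_g\,f \;=\; \int_{(B^{m|2n}_L,\gamma_{\mathrm{std}})}\varphi^*(\dsvol_g\,f).
\end{align*}

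Next I would compute $\varphi^*\dsvol_g$ in extended spherical coordinates by composing the Cartesian-to-spherical change (\ref{eqnSphericalDeterminant}) with the spherical-to-extended-spherical change (\ref{eqnBerezinSuperspherical}). From (\ref{eqnSuperspaceVolumeform}) and (\ref{eqnSphericalDeterminant}) one gets $\dsvol_g = [d(r,\phi,\theta)]\cdot 2^n r^{m-1}\Omega$ in spherical coordinates, and then (\ref{eqnBerezinSuperspherical}) together with $\varphi^*(r)=\sqrt{R^2-\vartheta^2}$ and $\varphi^*(\phi^k)=\phi^k$ yields
\begin{align*}
\varphi^*\dsvol_g \;=\; [d(R,\phi,\theta)]\cdot 2^n\, R\,(R^2-\vartheta^2)^{(m-2)/2}\,\Omega.
\end{align*}
With the standard retraction in place, Def.~\ref{defRetractionIntegral} then turns the right-hand side into the ordinary Berezin integral over $R\in(0,L)$ with $(\phi^0,\ldots,\phi^{m-2})$ in the usual spherical range, and, upon collecting $\int d\phi\,\Omega =: \int d\Omega$, the claimed formula follows.

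The main obstacle is not a difficult estimate but the careful bookkeeping between three coordinate systems (Cartesian, spherical, extended spherical) and two retractions: $\gamma$ is defined by its action on Cartesian coordinates in (\ref{eqnGamma}), whereas the computation naturally proceeds in extended spherical coordinates, and the bridge between the two is exactly the diagram (\ref{eqnRetractionDiagram}). A minor technical point is that (extended) spherical coordinates are valid only up to a lower-dimensional nullset in $B^m_L$, which, however, does not affect the final integral.
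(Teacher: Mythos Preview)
Your proposal is correct and follows essentially the same route as the paper: both arguments rewrite $\dsvol_g$ via (\ref{eqnSuperspaceVolumeform}) and (\ref{eqnSphericalDeterminant}), apply the $\varphi$-change of variables (\ref{eqnBerezinSuperspherical}), and then use the diagram (\ref{eqnRetractionDiagram})/equation (\ref{eqnSphericalGamma}) to see that in the $(R,\phi,\theta)$ coordinates the retraction $\gamma$ is the standard one. The only cosmetic difference is that you package the passage from $\gamma$ to $\gamma_{\mathrm{std}}$ as an explicit invocation of Lem.~\ref{lemPullbackIntegral}, whereas the paper performs the coordinate changes directly inside the $(B^{m|2n}_L,\gamma)$-integral and observes at the end that $\gamma$ acts as the standard retraction in the new coordinates.
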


\begin{proof}
We calculate, using (\ref{eqnSuperspaceVolumeform}), (\ref{eqnSphericalDeterminant}) and (\ref{eqnBerezinSuperspherical}),
\begin{align*}
\int_{(B^{m|2n}_L,\gamma)}\dsvol_g\,f&=2^n\int_{(B^{m|2n}_L,\gamma)}[d(r,\phi,\theta)]\,r^{m-1}\Omega\cdot f\\
&=2^n\int_{(B^{m|2n}_L,\gamma)}[d(R,\phi,\theta)]\,\frac{R}{\sqrt{R^2-\vartheta^2}}\varphi^*(r^{m-1}\Omega\cdot f)\\
&=2^n\int_{(B^{m|2n}_L,\gamma)}[d(R,\phi,\theta)]\,R(R^2-\vartheta^2)^{\frac{m-2}{2}}\Omega\,\varphi^*(f)
\end{align*}
Now by (\ref{eqnSphericalGamma}), the retraction $\gamma$ simply renames the coordinate $r$ to $R$
(which corresponds to commutation of the diagram in (\ref{eqnRetractionDiagram}),
from which the statement follows.
\end{proof}

\begin{Lem}
\label{lemEpsilonSphereIntegral}
Let $f\in\mO_{\bR^{m|2n}}(\Omega)$ be a function defined on an open neighbourhood $\Omega_0\subseteq\bR^m$ of $S^{m-1}_L$. Then
\begin{align*}
\int_{(S_L,\gamma_L)}\dsvol_{\tilde{g}}\,\iota_L^*(f)
=2^n\int d^{2n}\theta\int d\Omega\,L\left(L^2-\vartheta^2\right)^{\frac{m-2}{2}}\iota_L^*(f)
\end{align*}
\end{Lem}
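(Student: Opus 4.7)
The plan is to reduce the claimed integral to an ordinary Berezin-spherical integral by repeating the three-step procedure used in Lem. \ref{lemSuperballIntegration}, inserting between steps two and three the restriction map (\ref{eqnRestrictionMap}). The central identification is Cor. \ref{corInducedMetric}, which lets me replace $\dsvol_{\tilde{g}}\,\iota_L^*(f)$ by $(\dsvol_g\cdot f)|_{S^{m-1|2n}_L,\tau}$, so the entire problem reduces to rewriting $\dsvol_g$ in the second coordinate tuple of (\ref{eqnSuperSphericalCoordinates}) and then setting the boundary coordinate $\tau=R$ equal to $L$.

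First I would re-express $\dsvol_g$ in the coordinates $(R,\phi,\theta)$. Combining the classical spherical transformation (\ref{eqnSphericalDeterminant}) with the super change of variables (\ref{eqnBerezinSuperspherical}), together with the identity $\varphi^*(r^{m-1})=(R^2-\vartheta^2)^{(m-1)/2}$ (which follows from (\ref{eqnPhi}) and the fractional-power rule (\ref{eqnCalculationRule})), one obtains
\begin{align*}
\dsvol_g = 2^n\,[d(R,\phi,\theta)]\,R(R^2-\vartheta^2)^{\frac{m-2}{2}}\Omega,
\end{align*}
which is precisely the intermediate expression appearing in the proof of Lem. \ref{lemSuperballIntegration}.

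Next I would apply the restriction map (\ref{eqnRestrictionMap}) with $\tau=R$. Since $\iota_L^*(R)=L$, the coefficient $R(R^2-\vartheta^2)^{(m-2)/2}$ pulls back to $L(L^2-\vartheta^2)^{(m-2)/2}$ (abusing notation so that $\vartheta^2$ on the boundary denotes $\iota_L^*(\vartheta^2)$, which is legitimate since the $\theta^k$ descend to coordinates on $S_L$). Thus, by Cor. \ref{corInducedMetric},
\begin{align*}
\dsvol_{\tilde{g}}\cdot\iota_L^*(f) = 2^n\,[d(\iota_L^*\phi,\iota_L^*\theta)]\,L(L^2-\vartheta^2)^{\frac{m-2}{2}}\Omega\cdot\iota_L^*(f).
\end{align*}

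Finally I would unwind Def. \ref{defRetractionIntegral} using that, by (\ref{eqnSphericalGamma}), the boundary retraction $\gamma_L$ acts on the sphere coordinates simply by $\gamma_L^*(\phi^k)=\phi^k$, so the retraction integral reduces to the standard Berezin integral in the $\phi$- and $\theta$-variables; repackaging $\Omega\,d\phi^0\cdots d\phi^{m-2}$ as $d\Omega$ yields the right hand side of the claim. I expect no genuine obstacle: the argument is a bookkeeping exercise chaining results established for Lem. \ref{lemSuperballIntegration} with the restriction formalism from Sec. \ref{secIntegration}, the only point requiring attention being that the fractional-power rule (\ref{eqnCalculationRule}) applies when $m$ is odd so that $(R^2-\vartheta^2)^{(m-1)/2}$ makes sense as a superfunction.
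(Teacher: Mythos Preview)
Your proposal is correct and follows essentially the same route as the paper: invoke Cor.~\ref{corInducedMetric} to identify $\dsvol_{\tilde g}$ with $\dsvol_g|_{S_L,\tau}$, rewrite $\dsvol_g$ in the $(R,\phi,\theta)$-coordinates via (\ref{eqnSphericalDeterminant}) and (\ref{eqnBerezinSuperspherical}), apply the restriction (\ref{eqnRestrictionMap}) sending $R\mapsto L$, and finish using (\ref{eqnSphericalGamma}) to trivialise the boundary retraction. The only cosmetic difference is that the paper carries the product $\frac{R}{\sqrt{R^2-\vartheta^2}}\,r^{m-1}$ through the restriction before collapsing it to $L(L^2-\vartheta^2)^{(m-2)/2}$, whereas you simplify to $R(R^2-\vartheta^2)^{(m-2)/2}$ first; your remark on the fractional-power rule for odd $m$ is a valid sanity check that the paper leaves implicit.
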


\begin{proof}
By Cor. \ref{corInducedMetric} and (\ref{eqnSuperspaceVolumeform}), we find
\begin{align*}
\int_{(S_L,\gamma_L)}\dsvol_{\tilde{g}}\,\iota_L^*(f)
&=2^n\int_{(S_L,\gamma_L)}[d(x,\theta)]|_{S_L,\tau}\,\iota_L^*(f)\\
&=2^n\int_{(S_L,\gamma_L)}\left([d(R,\phi,\theta)]\,
\frac{[d(r,\phi,\theta)]}{[d(R,\phi,\theta)]}
\frac{[d(x,\theta)]}{[d(r,\phi,\theta)]}\right)|_{S_L,\tau}\,\iota_L^*(f)\\
&=2^n\int_{(S_L,\gamma_L)}[d\iota_L^*(\phi,\theta)]\,\iota_L^*\left(\frac{R}{\sqrt{R^2-\vartheta^2}}r^{m-1}\Omega\right)\iota_L^*(f)\\
&=2^n\int_{(S_L,\gamma_L)}[d(\phi,\theta)]\,L(L^2-\vartheta^2)^{\frac{m-2}{2}}\Omega\,\iota_L^*(f)
\end{align*}
As in the proof of the previous lemma, the statement now follows from (\ref{eqnSphericalGamma}).
\end{proof}

Lem. \ref{lemSuperballIntegration} and Lem. \ref{lemEpsilonSphereIntegral} immediately yield the following Cavalieri result.

\begin{Cor}
\label{corEpsilonSphereIntegral}
Integration of $f\in\mO_{\bR^{m|2n}}(B^m_L)$ over the $L$-superball is related with supersphere integration as follows.
\begin{align*}
\int_{(B_L,\gamma)}\dsvol_g\,f=\int_0^LdR\int_{(S_R,\gamma_R)}\dsvol_{\tilde{g}}\,\iota_R^*(f)
\end{align*}
\end{Cor}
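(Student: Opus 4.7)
The plan is to derive the identity by simply concatenating the two preceding lemmas and applying a classical Fubini argument on the remaining bosonic variables. Start from the right-hand side of the claim, apply Lem. \ref{lemEpsilonSphereIntegral} (with $L$ replaced by the running variable $R$) to rewrite
\begin{align*}
\int_{(S_R,\gamma_R)}\dsvol_{\tilde g}\,\iota_R^*(f)
=2^n\int d^{2n}\theta\int d\Omega\,R(R^2-\vartheta^2)^{(m-2)/2}\,\iota_R^*(f),
\end{align*}
and then compare the resulting iterated integral with the expression furnished by Lem. \ref{lemSuperballIntegration}. Because $\overline{B^m_L}$ is compact and all the coefficients in the Grassmann expansion of the integrand are smooth and bounded, the classical Fubini theorem applies to the underlying bosonic integrals, so the order of $\int_0^L dR$ and $\int d\Omega\,d^{2n}\theta$ may be freely interchanged.

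The only genuine point to check is that, under the above identifications, $\iota_R^*(f)$ at the supersphere $S_R$ matches the integrand $\varphi^*(f)$ of Lem. \ref{lemSuperballIntegration} evaluated at the corresponding value of $R$. In the second coordinate tuple of (\ref{eqnSuperSphericalCoordinates}) the boundary supermanifold $S_R$ is precisely the locus $\tau=R$, so $\iota_R^*$ acts on functions pulled back via $\varphi^{\sharp}$ by substituting the numerical value $R$ for the coordinate $R$. Hence, using the commutative diagram (\ref{eqnRetractionDiagram}), $\iota_R^*(f)$ appearing in Lem. \ref{lemEpsilonSphereIntegral} is literally the coordinate expression of $\varphi^*(f)$ at that fixed $R$, and the inner integral $\int d\Omega\,d^{2n}\theta$ of the two lemmas share the same integrand up to this substitution.

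Assembling these observations, the right-hand side becomes
\begin{align*}
2^n\int_0^L dR\int d^{2n}\theta\int d\Omega\,R(R^2-\vartheta^2)^{(m-2)/2}\,\varphi^*(f),
\end{align*}
which is exactly the expression yielded by Lem. \ref{lemSuperballIntegration} for $\int_{(B_L,\gamma)}\dsvol_g\,f$, completing the proof.

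The main (and essentially only) obstacle is the careful bookkeeping of the pullback $\iota_R^*$ versus the coordinate change $\varphi^*$; once it is observed that $\iota_R$ corresponds, in the second coordinate tuple of (\ref{eqnSuperSphericalCoordinates}), to fixing the coordinate $R$ to its numerical value, the statement reduces to an application of Fubini in the ordinary bosonic integration after the Berezin integral over $\theta$ has been carried out.
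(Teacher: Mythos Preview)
Your proof is correct and follows the same approach as the paper, which simply states that the corollary is immediate from Lem.~\ref{lemSuperballIntegration} and Lem.~\ref{lemEpsilonSphereIntegral}; you have merely spelled out the identification $\iota_R^*(f)=\varphi^*(f)|_{R}$ and the Fubini step that the paper leaves implicit. The invocation of the diagram (\ref{eqnRetractionDiagram}) is not actually needed for that identification---the description of $\iota_R$ in the second coordinate tuple of (\ref{eqnSuperSphericalCoordinates}) already gives it directly---but this does not affect the validity of the argument.
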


\begin{Lem}
\label{lemVolumeSupersphere}
The area of the $L$-supersphere is
\begin{align*}
\mathrm{vol}(S^{m-1|2n}_L):=\int_{(S_L,\gamma_L)}\dsvol_{\tilde{g}}=(-1)^{s(m,2n)}\frac{2^{n+1}\pi^{m/2}}{\Gamma(\frac{M}{2})}L^{M-1}
\end{align*}
with $M:=m-2n$. In case $M=-2l$ with $l\in\bN_0$, where the Gamma function becomes infinite, this has to be
read as $\mathrm{vol}(S^{m-1|2n})=0$.
\end{Lem}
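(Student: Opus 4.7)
The plan is to specialise Lemma \ref{lemEpsilonSphereIntegral} to the constant function $f \equiv 1$, which at once reduces the problem to the product of a purely angular integral and a single Berezin integral:
\begin{align*}
\mathrm{vol}(S^{m-1|2n}_L) = 2^{n}\, L \left(\int d\Omega\right)\int d^{2n}\theta\,(L^{2}-\vartheta^{2})^{(m-2)/2}.
\end{align*}
The angular factor is the area of the classical unit $(m-1)$-sphere, $\int d\Omega = 2\pi^{m/2}/\Gamma(m/2)$, by (\ref{eqnVolume}). So everything hinges on the Berezin integral of $(L^{2}-\vartheta^{2})^{(m-2)/2}$.

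For this Berezin integral I would exploit that $\vartheta$ is built from only $2n$ odd generators, so $(\vartheta^{2})^{k}=0$ for $k>n$. Consequently the formal binomial expansion
\begin{align*}
(L^{2}-\vartheta^{2})^{(m-2)/2} = L^{m-2}\sum_{k=0}^{n}\binom{(m-2)/2}{k}\bigl(-\vartheta^{2}/L^{2}\bigr)^{k}
\end{align*}
is actually a finite sum, valid for arbitrary real exponent. By the defining convention (\ref{eqnBerezinIntegral}), $\int d^{2n}\theta$ extracts (up to the sign $(-1)^{s(m,2n)}$) the coefficient of $\theta^{1}\theta^{2}\cdots\theta^{2n}$, and by (\ref{eqnThetaPower}) only the top term $k=n$ contributes, with $\vartheta^{2n}=(-1)^{n}n!\,\theta^{1}\cdots\theta^{2n}$. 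The two factors of $(-1)^{n}$ cancel, leaving the Berezin integral equal to $(-1)^{s(m,2n)}\binom{(m-2)/2}{n}n!\,L^{M-2}$ with $M:=m-2n$.

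The last step is an algebraic simplification: I would use the falling-factorial identity
\begin{align*}
\binom{(m-2)/2}{n}n! = \frac{\Gamma(m/2)}{\Gamma(m/2-n)} = \frac{\Gamma(m/2)}{\Gamma(M/2)},
\end{align*}
valid for arbitrary (possibly half-integer) $m/2$. Assembling the three factors, the $\Gamma(m/2)$ cancels against the denominator of the angular integral and the desired expression drops out. The degenerate case $M=-2l$ with $l\in\bN_{0}$ is then automatic, since $\Gamma$ has simple poles at the non-positive integers so that $1/\Gamma(M/2)=0$. I do not foresee a real obstacle; the only point requiring care is tracking the Berezin sign $(-1)^{s(m,2n)}$, which propagates unchanged into the final formula, and correctly invoking the Gamma-function identity for generic $m$ rather than relying on a positive-integer binomial coefficient.
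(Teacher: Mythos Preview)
Your proposal is correct and follows essentially the same route as the paper: specialise Lemma~\ref{lemEpsilonSphereIntegral} to $f\equiv 1$, expand $(1-\vartheta^{2}/L^{2})^{(m-2)/2}$ as a finite Taylor sum, retain only the top term $k=n$ via (\ref{eqnThetaPower}), and combine with the classical area $\int d\Omega = 2\pi^{m/2}/\Gamma(m/2)$. The only difference is in the final algebraic simplification: the paper treats the resulting falling factorial $(\tfrac{m}{2}-1)\cdots(\tfrac{m}{2}-n)$ by a four-way case split on the sign and parity of $M$, whereas you invoke the identity $\binom{(m-2)/2}{n}n!=\Gamma(m/2)/\Gamma(M/2)$ uniformly, which is cleaner and absorbs the degenerate case $M\in -2\bN_{0}$ automatically through the poles of~$\Gamma$.
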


Comparing this result to the supersphere area in \cite{DBS07}, note the different convention $s(m,2n)=1$
there opposed to (\ref{eqnBerezinSign}) and the additional factor of $\pi^{-n}$ in (\ref{eqnBerezinIntegral}).
Moreover, considering $[d(x,\theta)]$ instead of $\dsvol_g$ removes a factor of $2^n$, thus yielding the volume of the ordinary sphere
(\ref{eqnVolume}) with $m$ replaced by $M$.

\begin{proof}
By Lem. \ref{lemEpsilonSphereIntegral}, we find
\begin{align*}
\mathrm{vol}(S^{m-1|2n}_L)&=2^nL^{m-1}\int d^{2n}\theta\int d\Omega\,\left(1-\frac{\vartheta^2}{L^2}\right)^{\frac{m-2}{2}}
\end{align*}
The integrand can be expressed by the following Taylor formula with $g(y):=y^{\frac{m-2}{2}}$.
\begin{align}
\label{eqnPochhammerTaylor}
\left(1-\frac{\vartheta^2}{L^2}\right)^{\frac{m-2}{2}}
=\sum_{k=0}^n\frac{1}{k!}\cdot\partial_y^{(k)}g(1)\cdot\left(-\frac{\vartheta^2}{L^2}\right)^k
\end{align}
where
\begin{align}
\label{eqnPochhammer}
\partial_y^{(k)}g(1)=\left(\frac{m}{2}-1\right)\left(\frac{m}{2}-2\right)\cdot\ldots\cdot\left(\frac{m}{2}-k\right)
\end{align}
In the Berezin integral, only the summand with $k=n$ remains.

Case 1: $M=m-2n=-2l$ with $l\in\bN_0$. Then $m$ is even and $m\leq 2n$ or, equivalently,
$\frac{m-2}{2}\leq n-1$ such that $\partial_y^{(n)}g=0$. In this case, we thus find
$\mathrm{vol}(S^{m-1|2n})=0$.

Case 2: $M=2+2l$ with $l\in\bN_0$. Then $m$ is even and $m-2n\geq 2$ or, equivalently, $\frac{m-2}{2}\geq n$.
With (\ref{eqnThetaPower}), we find that
\begin{align*}
\mathrm{vol}(S^{m-1|2n}_L)&=2^n\frac{L^{m-1}}{L^{2n}}\left(\frac{m}{2}-1\right)\cdot\ldots\cdot\left(\frac{m}{2}-n\right)\int d^{2n}\theta\int d\Omega\,\theta^1\theta^2\cdot\theta^{2n}\\
&=(-1)^{s(m,2n)}2^nL^{M-1}\left(\frac{m}{2}-1\right)\cdot\ldots\cdot\left(\frac{m}{2}-n\right)\mathrm{vol}(S^{m-1})
\end{align*}
Now using (\ref{eqnVolume}) gives the expression claimed. The calculation is similar in the remaining two cases,
which we state for the sake of completeness.

Case 3: $M=1+2l$ with $l\in\bN_0$. Then $m$ is odd and $m-2n\geq 1$, and (\ref{eqnPochhammer})
with $k=n$ consists of half-integer factors and is positive.

Case 4: $M=-1-2l$ with $l\in\bN_0$. Then $m$ is negative and $m-2n\leq -1$. This time,
(\ref{eqnPochhammer}) with $k=n$ consists of positive and negative factors.
\end{proof}

\begin{Cor}
\label{corVolumeSupersphere}
The volume of the $L$-superball is
\begin{align*}
\mathrm{vol}(B_L^{m|2n}):=\int_{(B_L,\gamma)}\dsvol_g=(-1)^{s(m,2n)}\frac{2^n\pi^{m/2}}{\Gamma(\frac{M}{2}+1)}L^M
\end{align*}
In case $M=-2l$ with $l\in\bN_0$, both sides vanish.
\end{Cor}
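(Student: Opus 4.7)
The plan is to apply the Cavalieri identity of Cor.~\ref{corEpsilonSphereIntegral} with the constant function $f=1$, which immediately reduces the ball integral to an iterated radial integral of the supersphere area:
\begin{align*}
\mathrm{vol}(B_L^{m|2n}) = \int_0^L \mathrm{vol}(S_R^{m-1|2n})\, dR.
\end{align*}
Substituting the explicit supersphere area from Lem.~\ref{lemVolumeSupersphere}, this becomes
\begin{align*}
(-1)^{s(m,2n)}\frac{2^{n+1}\pi^{m/2}}{\Gamma(M/2)} \int_0^L R^{M-1}\, dR.
\end{align*}

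In every case where the sphere prefactor $1/\Gamma(M/2)$ is non-zero (i.e.\ Cases 2, 3, 4 in the proof of Lem.~\ref{lemVolumeSupersphere}), the elementary radial integral evaluates to $L^M/M$, and the Gamma recurrence $M\,\Gamma(M/2)=2\,\Gamma(M/2+1)$ then turns the combined prefactor $2^{n+1}/(M\,\Gamma(M/2))$ into $2^n/\Gamma(M/2+1)$, yielding the claimed expression. In the remaining degenerate case $M=-2l$ with $l\in\bN_0$ (Case 1), Lem.~\ref{lemVolumeSupersphere} gives $\mathrm{vol}(S_R^{m-1|2n})=0$ for every $R>0$, so the Cavalieri integrand is identically zero and $\mathrm{vol}(B_L^{m|2n})=0$ follows at once.

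The argument is essentially routine algebraic bookkeeping with the Gamma function, so there is no real obstacle; the only point that merits a brief remark is the edge case $M=0$, where the right-hand Gamma expression $2^n\pi^{m/2}/\Gamma(1)$ does not literally vanish. Here the statement's convention ``both sides vanish'' is genuinely needed: Cavalieri combined with Lem.~\ref{lemVolumeSupersphere} forces the left-hand side to be zero, and the formula on the right must be read in the spirit of the corresponding convention in Lem.~\ref{lemVolumeSupersphere}.
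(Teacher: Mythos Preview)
Your proof is correct and follows essentially the same route as the paper: apply Cor.~\ref{corEpsilonSphereIntegral} with $f=1$, insert the supersphere area from Lem.~\ref{lemVolumeSupersphere}, evaluate $\int_0^L R^{M-1}\,dR$, and simplify via the Gamma recurrence $\tfrac{M}{2}\Gamma(\tfrac{M}{2})=\Gamma(\tfrac{M}{2}+1)$. Your explicit treatment of the degenerate case $M=-2l$ (and the remark on $M=0$) is more detailed than the paper's, but the argument is the same.
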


\begin{proof}
This is calculated with Cor. \ref{corEpsilonSphereIntegral} and Lem. \ref{lemVolumeSupersphere} as follows.
\begin{align*}
\int_{(B_L,\gamma)}\dsvol_g
=\int_0^LdR\,\mathrm{vol}(S^{m-1|2n}_R)
=(-1)^{s(m,2n)}\frac{2^{n+1}\pi^{m/2}}{\Gamma(\frac{M}{2})}\int_0^LdR\,R^{M-1}
\end{align*}
and the statement follows from $\frac{M}{2}\Gamma\left(\frac{M}{2}\right)=\Gamma\left(\frac{M}{2}+1\right)$.
\end{proof}

\subsection{Mean Value Theorems}
\label{subsecMVT}

Before coming to the first mean value Thm. \ref{thmMVT} below,
we state two auxiliary results. In the following, $\gamma$ continues to denote the retraction
(\ref{eqnGamma}), $M:=m-2n$ the superdimension and $R$ the superradius (\ref{eqnSuperradius}).
As usual, $\log R$ is defined by means of the Taylor expansion with respect to the odd coordinates.

\begin{Lem}
\label{lemFundamentalSolution}
Consider a bounded domain $\Omega\subseteq\bR^{m|2n}$ with $\Omega_0$ not containing $0\in\bR^m$.
Restricted to $\Omega$, the following holds.
\begin{align*}
\laplace R^{2-M}=0\quad\mathrm{for}\quad M\neq 2\quad,\qquad\laplace\log R=0\quad\mathrm{for}\quad M=2
\end{align*}
\end{Lem}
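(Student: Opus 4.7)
My strategy is to treat both cases uniformly by writing each as $F(R^2)$ for a suitable smooth one-variable function $F$---namely $F(y) = y^{(2-M)/2}$ in the first case and $F(y) = \tfrac{1}{2}\log y$ in the second---and to reduce the claim to a single ODE in $F$. Since $R^2 = r^2 + \vartheta^2$ with $\vartheta^2$ nilpotent, $F(R^2)$ is unambiguously defined on any chart where $r\neq 0$ via the finite Taylor expansion $F(R^2) = \sum_{k=0}^{n} \tfrac{F^{(k)}(r^2)}{k!}\, \vartheta^{2k}$, and this is precisely what the hypothesis $0\notin\Omega_0$ ensures.

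The main step is to establish the radial super-Laplace identity
\begin{align*}
\laplace F(R^2) = -4\,R^2 F''(R^2) - 2 M F'(R^2)
\end{align*}
from the local expression (\ref{lemLaplaceSupersphere}). For the bosonic part, the standard chain rule gives $\sum_k (\partial_{x^k})^2 F(R^2) = 4 r^2 F''(R^2) + 2 m F'(R^2)$. For the fermionic part, I would first compute $\partial_{\theta^j}(R^2) = -\theta^{j+1}$ and $\partial_{\theta^{j+1}}(R^2) = \theta^j$ for $j$ odd and then, applying the graded Leibniz rule carefully, obtain
\begin{align*}
\partial_{\theta^j}\partial_{\theta^{j+1}} F(R^2) = F''(R^2)\,\theta^j\theta^{j+1} + F'(R^2).
\end{align*}
Summing over $j$ odd yields $-F''(R^2)\vartheta^2 + n F'(R^2)$; combining both contributions according to (\ref{lemLaplaceSupersphere}) the terms $r^2$ and $\vartheta^2$ recombine into $R^2$ while the prefactor $m - 2n$ organises itself into the superdimension $M$.

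It then suffices to verify that both choices of $F$ solve the ODE $2 y F''(y) + M F'(y) = 0$. For $F(y) = y^{(2-M)/2}$ this is a one-line substitution using $2\alpha - 2 + M = 0$ with $\alpha = (2-M)/2$, and for $F(y) = \tfrac{1}{2}\log y$ the ODE reduces to $(M-2)/(2y) = 0$, which holds precisely when $M = 2$. The only real delicacy in the proof is the bookkeeping of graded signs in the odd-derivative computation; everything else is a direct supergeometric analogue of the classical verification that $r^{2-m}$ and $\log r$ are harmonic on $\bR^m\setminus\{0\}$.
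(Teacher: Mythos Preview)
Your proposal is correct and is precisely the direct calculation the paper alludes to, using the explicit Laplacian (\ref{lemLaplaceSupersphere}) and the chain rule (\ref{eqnCalculationRule}); the paper's own proof is a single sentence deferring to that computation. Your organisation via the substitution $F(R^2)$ and the reduction to the Euler-type ODE $2yF''(y)+MF'(y)=0$ is a clean way to present it, and the sign bookkeeping in the odd-derivative step is done correctly.
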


\begin{proof}
This follows from a direct calculation using (\ref{lemLaplaceSupersphere}) and (\ref{eqnCalculationRule}).
\end{proof}

Minor modifications of the superfunctions in the preceding lemma may be interpreted as fundamental solutions
of the Laplacian $\laplace$ as in Sec. 4 of \cite{CDS10}. In fact, the proof of Thm. 6 in \cite{CDS10} seems to make implicit use of a divergence theorem
for super distributions applied to those fundamental solutions. Our proof (of Thm. \ref{thmMVT} below) avoids to
make these subtleties rigorous by performing a direct limit argument instead.

\begin{Lem}
\label{lemDivergenceRadius}
Fix $L>0$. Let $f,k\in\mO_{\bR^{m|2n}}(\Omega)$ be functions defined on a domain $\Omega_0\subseteq\bR^m$
with $\overline{B^m_L}\subseteq\Omega_0$ such that $\laplace f\equiv 0$ and $\iota_L^*(k)=C\in\bR$. Then
\begin{align*}
\int_{(B^{m|2n}_L,\gamma)}\dsvol_g\,\div(k\nabla f)=0
\end{align*}
Let $0<\varepsilon<L$. If, in addition, $\iota_{\varepsilon}^*(k)=C'\in\bR$, then the statement continues to hold
for $B_L$ replaced by $B_L\setminus B_{\varepsilon}$. In this case, $f$ and $k$ need not be defined near $0$.
\end{Lem}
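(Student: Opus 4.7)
The plan is to reduce the claim to Theorem \ref{thmSuperDivergence} by exploiting the constancy of $k$ on the boundary spheres and the harmonicity of $f$.

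First, apply the divergence Theorem \ref{thmSuperDivergence} to the vector field $X := k\nabla f$ on $N := B_L^{m|2n}$ with the retraction $\gamma$. Since $\nu$ is even by Lemma \ref{lemOuterNormal} (satisfying $\varepsilon(\nu)=1$ and $\scal[g]{\partial_\tau}{\nu}=1$), the definition (\ref{eqnDefGradientLaplace}) gives $\scal[g]{\nabla f}{\nu}=\nu(f)$, and bilinearity of $g$ yields $\scal[g]{k\nabla f}{\nu}=k\cdot\nu(f)$. Pulling back along $\iota_L$ and using the hypothesis $\iota_L^*(k)=C$, the divergence theorem becomes
\begin{align*}
\int_{(B_L^{m|2n},\gamma)}\dsvol_g\,\div(k\nabla f) = S\cdot C\int_{(S_L,\gamma_L)}\dsvol_{\tilde g}\,\iota_L^*(\nu(f)).
\end{align*}
It therefore suffices to show that the boundary integral of $\iota_L^*(\nu(f))$ vanishes for harmonic $f$.

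To see this, apply Theorem \ref{thmSuperDivergence} a second time, now to the vector field $\nabla f$. Since $\div(\nabla f)=-\laplace f\equiv 0$ by the definition of the Laplacian in (\ref{eqnDefGradientLaplace}) and the harmonicity hypothesis, the left-hand side vanishes. The right-hand side is $S\int_{(S_L,\gamma_L)}\dsvol_{\tilde g}\,\iota_L^*(\nu(f))$, so this integral is zero. Substituting back settles the first statement.

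For the annular version with $N := B_L\setminus B_\varepsilon$, the boundary $\partial_\gamma N$ consists of two connected components, $S_L$ and $S_\varepsilon$. The outward unit normal coincides with $\nu$ on $S_L$ and with $-\nu$ on $S_\varepsilon$ (where $\nu$ denotes the outer normal of the sphere in question, as in Lemma \ref{lemOuterNormal}). Both components carry a non-degenerate induced metric, so Theorem \ref{thmSuperDivergence} applies and gives
\begin{align*}
\int_{(B_L\setminus B_\varepsilon,\gamma)}\dsvol_g\,\div(k\nabla f)
= S\left(C\int_{(S_L,\gamma_L)}\!\!\dsvol_{\tilde g}\,\iota_L^*(\nu(f)) - C'\int_{(S_\varepsilon,\gamma_\varepsilon)}\!\!\dsvol_{\tilde g}\,\iota_\varepsilon^*(\nu(f))\right).
\end{align*}
Each of the two boundary integrals vanishes by the same argument as before, now applied to $\nabla f$ on the annulus itself (where $\laplace f=0$ still holds and the boundary integral of $\nu(f)$ over $\partial_\gamma N$ vanishes), or, more directly, by applying the first part of the lemma to each sphere separately in a neighbourhood where $f$ is defined. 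I expect the only subtle point to be bookkeeping of the signs and the orientation of $\nu$ on the inner boundary; once those are fixed via Lemma \ref{lemOuterNormal}, the argument is formally identical to the classical proof.
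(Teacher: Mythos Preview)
Your argument for the first assertion is correct and is exactly the paper's: apply Theorem~\ref{thmSuperDivergence} to $k\nabla f$, factor out $\iota_L^*(k)=C$, then apply Theorem~\ref{thmSuperDivergence} a second time to $\nabla f$ and use $\div(\nabla f)=-\laplace f=0$.

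For the annular part there is a genuine gap. Running the divergence theorem for $\nabla f$ on $B_L\setminus B_\varepsilon$ only yields
\[
\int_{(S_L,\gamma_L)}\dsvol_{\tilde g}\,\iota_L^*(\nu(f))\;=\;\int_{(S_\varepsilon,\gamma_\varepsilon)}\dsvol_{\tilde g}\,\iota_\varepsilon^*(\nu(f)),
\]
not that each side vanishes; substituting this back leaves $S(C-C')\int_{(S_L,\gamma_L)}\dsvol_{\tilde g}\,\iota_L^*(\nu(f))$, which need not be zero when $C\neq C'$. A classical counterexample is $m=3$, $n=0$, $f=r^{-1}$, $k=r$: here $\int_{B_L\setminus B_\varepsilon}\div(k\nabla f)=-4\pi(L-\varepsilon)\neq 0$. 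Your fallback---treating each sphere via the first part---is the correct repair, but it requires $f$ (not $k$) to extend over the full ball so that Theorem~\ref{thmSuperDivergence} for $\nabla f$ on $B_L^{m|2n}$ and on $B_\varepsilon^{m|2n}$ forces each sphere integral to vanish individually. That is what the paper's ``analogous calculation over both boundary parts'' tacitly does. The lemma's final sentence, allowing both $f$ and $k$ to be undefined near $0$, is therefore stronger than what either proof establishes; in the only application (the proof of Theorem~\ref{thmMVT}) $f$ is harmonic on a neighbourhood of $\overline{B^m_L}$ and only $k=R^{2-M}$ is singular at the origin, so no harm is done there.
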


\begin{proof}
The integral vanishes due to the following calculation, using Thm. \ref{thmSuperDivergence} twice.
\begin{align*}
\int_{(B^{m|2n}_L,\gamma)}\dsvol_g\,\div(k\nabla f)
&=S\int_{(S_L,\gamma_L)}\dsvol_{\tilde{g}}\,\iota_L^*(k)\iota_L^*\left(\scal[g]{\nabla f}{\nu}\right)\\
&=C\int_{(B_L,\gamma)}\dsvol_g\,\div(\nabla f)\\
&=0
\end{align*}
The second part of the statment follows by an analogous calculation over both boundary parts $S_L$ and $S_{\varepsilon}$.
\end{proof}

\begin{Thm}[Mean Value Theorem]
\label{thmMVT}
Let $f\in\mO_{\bR^{m|2n}}(\Omega)$ be a function defined on a domain $\Omega_0\subseteq\bR^m$
with $\overline{B^m_L}\subseteq\Omega_0$ which is harmonic $\laplace f=0$. Then
\begin{align*}
\int_{(S^{m-1|2n}_L,\gamma_L)}\dsvol_{\tilde{g}}\,\iota_L^*(f)=\mathrm{vol}(S^{m-1|2n}_L)\cdot f(0)
\end{align*}
with $\mathrm{vol}(S^{m-1|2n}_L)$ as in Lem. \ref{lemVolumeSupersphere}.
\end{Thm}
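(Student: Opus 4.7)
The plan is to imitate the classical proof of the mean value theorem using a fundamental solution of the super-Laplacian. With $\Phi := R^{2-M}$ for $M \neq 2$ and $\Phi := \log R$ for $M = 2$, both are harmonic away from the origin by Lem.~\ref{lemFundamentalSolution}, so $f$ and $\Phi$ are simultaneously harmonic on any superannulus $B^{m|2n}_L \setminus B^{m|2n}_\varepsilon$ with $0 < \varepsilon < L$. The strategy is first to establish the annular identity
\begin{align*}
L^{1-M}\int_{(S_L,\gamma_L)}\dsvol_{\tilde{g}}\,\iota_L^*(f) \;=\; \varepsilon^{1-M}\int_{(S_\varepsilon,\gamma_\varepsilon)}\dsvol_{\tilde{g}}\,\iota_\varepsilon^*(f)
\end{align*}
for every such $\varepsilon$, and then to obtain the theorem by letting $\varepsilon \to 0$.

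To derive the annular identity, I would apply the divergence theorem on the superannulus (produced as in the proof of Lem.~\ref{lemDivergenceRadius}, by applying Thm.~\ref{thmSuperDivergence} on the outer and inner balls and subtracting) to the vector field $X := f\nabla\Phi - \Phi\nabla f$. Lem.~\ref{lemDivergenceRule} together with $\laplace f = \laplace\Phi = 0$ on the annulus and supersymmetry of $g$ give $\div(f\nabla\Phi) = \scal[g]{\nabla f}{\nabla\Phi} = \div(\Phi\nabla f)$, so $\div X = 0$ and the bulk integral vanishes. Since $\iota_R^*(\Phi)$ is the constant $R^{2-M}$ (respectively $\log R$) on each sphere $S_R$ and since $\int_{(S_R,\gamma_R)}\dsvol_{\tilde{g}}\,\iota_R^*(\nu(f)) = 0$ for $R \in \{L,\varepsilon\}$ by Thm.~\ref{thmSuperDivergence} applied to $\nabla f$ on $B_R$ (using harmonicity of $f$ on $B_L$), the $\Phi\nu(f)$ boundary contributions drop out, leaving
\begin{align*}
\int_{(S_L,\gamma_L)}\dsvol_{\tilde{g}}\,\iota_L^*\bigl(f\,\nu(\Phi)\bigr) \;=\; \int_{(S_\varepsilon,\gamma_\varepsilon)}\dsvol_{\tilde{g}}\,\iota_\varepsilon^*\bigl(f\,\nu(\Phi)\bigr).
\end{align*}
Now Lem.~\ref{lemOuterNormal} gives $\nu(R^{2-M}) = (2-M)R^{1-M}$ (respectively $\nu(\log R) = R^{-1}$), itself constant on each $S_R$, so pulling this scalar outside (and dividing by $2 - M$ in the case $M \neq 2$) yields the displayed annular identity uniformly in both cases.

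For the limit $\varepsilon \to 0$, I would use Lem.~\ref{lemEpsilonSphereIntegral} to rewrite
\begin{align*}
\varepsilon^{1-M}\int_{(S_\varepsilon,\gamma_\varepsilon)}\dsvol_{\tilde{g}}\,\iota_\varepsilon^*(f) \;=\; 2^n\int d^{2n}\theta\int d\Omega\;\varepsilon^{2-M}(\varepsilon^2 - \vartheta^2)^{\frac{m-2}{2}}\,\iota_\varepsilon^*(f),
\end{align*}
then Taylor-expand $(\varepsilon^2 - \vartheta^2)^{(m-2)/2}$ in the odd variables as in (\ref{eqnPochhammerTaylor}) and $\iota_\varepsilon^*(f) = \varphi^*(f)|_{R = \varepsilon}$ via (\ref{eqnPhi}) together with classical Taylor expansions of the components $f_J$ around the Cartesian point $y_\varepsilon$ of radius $\varepsilon$. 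A careful bookkeeping then shows that only the single term coming from $k = n$ in (\ref{eqnPochhammerTaylor}) paired with the classical constant $f_\emptyset(0) = f(0)$ survives: every other contribution either acquires a positive power of $\varepsilon$ (which vanishes in the limit) or falls short of the top odd degree $\vartheta^{2n}$ required by the Berezin integral. Matching with Lem.~\ref{lemVolumeSupersphere}, which furnishes the radius-independent constant $L^{1-M}\mathrm{vol}(S^{m-1|2n}_L) = (-1)^{s(m,2n)}\,2^{n+1}\pi^{m/2}/\Gamma(M/2)$, then reproduces the theorem; the degenerate cases $M/2 \in -\bN_0$ (where $\mathrm{vol}(S^{m-1|2n}_L) = 0$) are automatic since the Pochhammer factor (\ref{eqnPochhammer}) vanishes at $k = n$. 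The main obstacle is precisely this limit analysis: the apparent $\varepsilon^{-2}$ singularities from $\sqrt{1 - \vartheta^2/\varepsilon^2}$ in both the radial factor and $\iota_\varepsilon^*(f)$ must be shown to cancel exactly against the $\varepsilon^{2n}$ prefactor and the Taylor multipliers $y_\varepsilon^a \partial_{y^a}$, leaving only the finite classical limit $f(0)$.
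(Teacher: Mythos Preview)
Your proposal is correct and follows essentially the same route as the paper: use the fundamental solution $\Phi=R^{2-M}$ (resp.\ $\log R$), apply the divergence theorem on the superannulus $B_L\setminus B_{\varepsilon}$, eliminate the bulk via $\laplace f=\laplace\Phi=0$, and evaluate the inner boundary integral in the limit $\varepsilon\to 0$ through Lem.~\ref{lemEpsilonSphereIntegral} and the Taylor expansion (\ref{eqnPochhammerTaylor}). The only organisational difference is that the paper packages the vanishing of the $\Phi\,\nu(f)$ contribution into Lem.~\ref{lemDivergenceRadius} (after rewriting $\div(f\nabla\Phi)$ as $\div(\Phi\nabla f)$ via (\ref{eqnLaplaceDifference})), whereas you argue $\div(f\nabla\Phi-\Phi\nabla f)=0$ pointwise and kill the $\Phi\,\nu(f)$ boundary terms directly; these are two phrasings of the same computation.
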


\begin{proof}
We consider the case $M\neq 2$ first. Using Lem. \ref{lemOuterNormal}, we yield
\begin{align*}
\iota_L^*\scal[g]{\nabla R^{2-M}}{\nu}=\iota_L^*\left(\nu(R^{2-M})\right)=(2-M)L^{1-M}
\end{align*}
and, therefore,
\begin{align*}
&(2-M)L^{1-M}\int_{(S_L,\gamma_L)}\dsvol_{\tilde{g}}\,\iota_L^*(f)\\
&\qquad=\int_{(S_L,\gamma_L)}\dsvol_{\tilde{g}}\,\iota_L^*\left(\scal[g]{f\nabla R^{2-M}}{\nu}\right)\\
&\qquad=S\int_{(B_L\setminus B_{\varepsilon},\gamma)}\dsvol_g\,\div(f\nabla R^{2-M})
+\int_{(S_{\varepsilon},\gamma_{\varepsilon})}\dsvol_{\tilde{g}}\,\iota_{\varepsilon}^*\left(\scal[g]{f\nabla R^{2-M}}{\nu}\right)\\
&\qquad=I_{\varepsilon}+II_{\varepsilon}
\end{align*}
by using the divergence Thm. \ref{thmSuperDivergence} for $0<\varepsilon<1$.
The first term vanishes due to the following calculation.
By (\ref{eqnLaplaceDifference}) together with Lem. \ref{lemFundamentalSolution} and the assumption $\laplace f=0$, we find
\begin{align*}
I_{\varepsilon}=S\int_{(B_L\setminus B_{\varepsilon},\gamma)}\dsvol_g\,\div(R^{2-M}\nabla f)=0
\end{align*}
which vanishes according to Lem. \ref{lemDivergenceRadius}.
For the second part, we have by Lem. \ref{lemEpsilonSphereIntegral}
\begin{align*}
II_{\varepsilon}&=(2-M)\varepsilon^{1-M}\int_{(S_{\varepsilon},\gamma_{\varepsilon})}\dsvol_{\tilde{g}}\,\iota_{\varepsilon}^*(f)\\
&=(2-M)\varepsilon^{1-M}2^n\int d^{2n}\theta\int d\Omega\,\varepsilon^{m-1}\left(1-\frac{\vartheta^2}{\varepsilon^2}\right)^{\frac{m-2}{2}}\iota_{\varepsilon}^*(f)\\
&=(2-M)\varepsilon^{2n}2^n\int d^{2n}\theta\int d\Omega\,\left(1-\frac{\vartheta^2}{\varepsilon^2}\right)^{\frac{m-2}{2}}\iota_{\varepsilon}^*(f)
\end{align*}
Consider the Taylor expansion (\ref{eqnPochhammerTaylor}) with $L=\varepsilon$.
The only summand that does not vanish in the limit $\varepsilon\rightarrow 0$
is that with $k=n$, where $\varepsilon^{2n}$ in the denominator cancels with the same term in front of the integral.
It follows, moreover, that all higher terms of the Graßmann expansion of $\iota_{\varepsilon}^*(f)$
cancel with the term $\vartheta^{2n}$ occuring in the previously mentioned $k=n$ summand.
In the limit, the remaining part of $\iota_{\varepsilon}^*(f)$ thus goes to $f(0)$,
and we may apply the calculations in the proof of Lem. \ref{lemVolumeSupersphere} to obtain
\begin{align*}
\lim_{\varepsilon\rightarrow 0}II_{\varepsilon}
=(2-M)(-1)^{s(m,2n)}\frac{2^{n+1}\pi^{m/2}}{\Gamma(\frac{M}{2})}\cdot f(0)=\frac{(2-M)}{L^{M-1}}\mathrm{vol}(S^{m-1|2n}_L)\cdot f(0)
\end{align*}
from which the statement for $M\neq 2$ follows.

The case $M=2$ is shown analogous with $R^{2-M}$ replaced by $\log R$.
\end{proof}

\begin{Thm}[Mean Value Theorem]
\label{thmMVT2}
Let $f\in\mO_{\bR^{m|2n}}(\Omega)$ be a function defined on a domain $\Omega_0\subseteq\bR^m$
with $\overline{B^m_L}\subseteq\Omega_0$ which is harmonic $\laplace f=0$. Then
\begin{align*}
\int_{(B_L,\gamma)}\dsvol_g\,f&=\mathrm{vol}(B_L^{m|2n})\cdot f(0)\\
\int_{(B_L,\gamma_{\mathrm{std}})}\dsvol_g\,f&=(-1)^{s(m,2n)}2^n\mathrm{vol}(B^m_L)\cdot f^{(1,\ldots,1)}(0)
\end{align*}
with $\mathrm{vol}(B_L^{m|2n})$ as in Cor. \ref{corVolumeSupersphere}.
\end{Thm}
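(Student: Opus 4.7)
The two formulas are essentially independent and I would prove them by completely separate arguments.

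For the first formula, the natural route is to slice via the Cavalieri identity of Cor. \ref{corEpsilonSphereIntegral} and apply the sphere mean value theorem Thm. \ref{thmMVT} on each slice. Since $f$ is harmonic on $\Omega$ and $\overline{B_R^m}\subseteq\overline{B_L^m}\subseteq\Omega_0$ for every $R\in(0,L]$, the hypothesis of Thm. \ref{thmMVT} is met at every radius, so
\begin{align*}
\int_{(B_L,\gamma)}\dsvol_g\,f=\int_0^L dR\int_{(S_R,\gamma_R)}\dsvol_{\tilde{g}}\,\iota_R^*(f)=f(0)\int_0^L dR\,\mathrm{vol}(S_R^{m-1|2n}).
\end{align*}
The remaining $R$-integral is $\mathrm{vol}(B_L^{m|2n})$, either directly from Cor. \ref{corVolumeSupersphere} or simply by reapplying Cor. \ref{corEpsilonSphereIntegral} to the constant superfunction $1$. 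This settles the first formula.

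For the formula with the standard retraction, I would unwind the definition of $\int_{(B_L,\gamma_{\mathrm{std}})}$ via Def. \ref{defRetractionIntegral} and (\ref{eqnSuperspaceVolumeform}) to reduce at once to a classical integral of the top Grassmann coefficient:
\begin{align*}
\int_{(B_L,\gamma_{\mathrm{std}})}\dsvol_g\,f=(-1)^{s(m,2n)}2^n\int_{B^m_L}d^m x\,f^{(1,\ldots,1)}(x).
\end{align*}
The crux is then to show that $f^{(1,\ldots,1)}$ is classically harmonic on $\Omega_0$. Writing $f=\sum_I\theta^I f_I(x)$ in standard coordinates and reading off (\ref{lemLaplaceSupersphere}), the Grassmann part $4\sum_{j\,\mathrm{odd}}\partial_{\theta^j}\partial_{\theta^{j+1}}$ of $\laplace$ strictly lowers the $\theta$-degree by two, hence cannot produce a $\theta^1\cdots\theta^{2n}$ term; the coefficient of $\theta^1\cdots\theta^{2n}$ in $\laplace f$ therefore collapses to $-\sum_k\partial_{x^k}^2 f^{(1,\ldots,1)}$. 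The assumption $\laplace f=0$ now forces $\sum_k\partial_{x^k}^2 f^{(1,\ldots,1)}=0$, and the classical mean value theorem (\ref{eqnClassicalMVT}) delivers the missing identity
\begin{align*}
\int_{B^m_L}d^m x\,f^{(1,\ldots,1)}(x)=\mathrm{vol}(B^m_L)\cdot f^{(1,\ldots,1)}(0),
\end{align*}
completing the proof.

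The only genuinely non-routine point is this classical harmonicity of the top Grassmann coefficient; without observing it one would be tempted to route the second formula through Thm. \ref{thmChangeOfRetraction} in order to compare the two retraction integrals by a boundary correction term, which would obscure matters considerably. With the degree-shifting observation in hand, both formulas are direct combinations of results already established in Secs. \ref{secIntegration}--\ref{secMVT}.
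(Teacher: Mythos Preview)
Your proposal is correct and follows essentially the same route as the paper: slice via Cor.~\ref{corEpsilonSphereIntegral}, apply Thm.~\ref{thmMVT} on each radius, and finish with Cor.~\ref{corVolumeSupersphere} for the first formula; for the second, reduce the $\gamma_{\mathrm{std}}$-integral to $\int_{B_L^m}f^{(1,\ldots,1)}$, observe that $f^{(1,\ldots,1)}$ is classically harmonic, and invoke (\ref{eqnClassicalMVT}). Your degree-shifting justification for the harmonicity of $f^{(1,\ldots,1)}$ is in fact more explicit than the paper's one-line assertion, but otherwise the arguments coincide.
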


\begin{proof}
By Cor. \ref{corEpsilonSphereIntegral} and Thm. \ref{thmMVT}, we yield
\begin{align*}
\int_{(B_L,\gamma)}\dsvol_g\,f
=\int_0^LdR\int_{(S_R,\gamma_R)}\dsvol_{\tilde{g}}\,\iota_R^*(f)
=\int_0^LdR\,\mathrm{vol}(S^{m-1|2n}_R)\cdot f(0)
\end{align*}
The first statement then follows as in Cor. \ref{corVolumeSupersphere}.
For the second, note that $\laplace f=0$ implies that $f^{(1,\ldots,1)}$ is harmonic in the classical sense.
We may thus apply the classical mean value theorem (\ref{eqnClassicalMVT}) after the following calculation.
\begin{align*}
\int_{(B_L,\gamma_{\mathrm{std}})}\dsvol_g\,f
=2^n\int_{B_L}d^mx\int d^{2n}\theta\,f
=(-1)^{s(m,2n)}2^n\int_{B_L}d^mx\,f^{(1,\ldots,1)}
\end{align*}
\end{proof}

As an immediate corollary, we find the following simple expression for the boundary term of
Thm. \ref{thmChangeOfRetraction}, which usually has the complicated form stated there.

\begin{Cor}
In the situation of Thm. \ref{thmMVT2}, the boundary term equals
\begin{align*}
b_{B,\gamma_{\mathrm{std}},\gamma}(\dsvol_g\,f)
=(-1)^{s(m,2n)}2^n\mathrm{vol}(B^m_L)\cdot f^{(1,\ldots,1)}(0)-\mathrm{vol}(B_L^{m|2n})\cdot f(0)
\end{align*}
\end{Cor}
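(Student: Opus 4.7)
The plan is to obtain this corollary as a direct algebraic consequence of Thm. \ref{thmChangeOfRetraction} together with the two evaluations furnished by Thm. \ref{thmMVT2}. Specifically, Thm. \ref{thmChangeOfRetraction} applied with $N:=B_L$, the retractions $\gamma$ from (\ref{eqnGamma}) and $\gamma':=\gamma_{\mathrm{std}}$, and the Berezinian $\omega:=\dsvol_g\,f$ yields
\begin{align*}
\int_{(B_L,\gamma_{\mathrm{std}})}\dsvol_g\,f
=\int_{(B_L,\gamma)}\dsvol_g\,f+b_{B_L,\gamma_{\mathrm{std}},\gamma}(\dsvol_g\,f).
\end{align*}
Solving for the boundary term and inserting the closed forms
\begin{align*}
\int_{(B_L,\gamma)}\dsvol_g\,f=\mathrm{vol}(B_L^{m|2n})\cdot f(0),\qquad
\int_{(B_L,\gamma_{\mathrm{std}})}\dsvol_g\,f=(-1)^{s(m,2n)}2^n\mathrm{vol}(B^m_L)\cdot f^{(1,\ldots,1)}(0)
\end{align*}
established in Thm. \ref{thmMVT2} gives the claimed identity at once.

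The only point deserving attention is that Thm. \ref{thmChangeOfRetraction} is stated for compactly supported Berezinians, whereas here $f$ is merely defined on a domain $\Omega_0\supseteq\overline{B^m_L}$. I would resolve this by choosing a smooth cutoff $\chi\in C^\infty_c(\Omega_0)$ that equals $1$ on a neighbourhood of $\overline{B^m_L}$ and applying the change-of-retraction formula to $\chi\cdot f$. Since both $\gamma$ and $\gamma_{\mathrm{std}}$ act trivially on the support of $f$ inside $\overline{B^m_L}$ in the sense that the integrals in Def. \ref{defRetractionIntegral} restricted to $B_L$ only see $f$ on a neighbourhood of $\overline{B^m_L}$, replacing $f$ by $\chi f$ leaves all three quantities in the above displayed equation unchanged. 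No genuine obstacle arises; the corollary is essentially a tautology once the two mean value identities of Thm. \ref{thmMVT2} are in hand, which is precisely what makes this a satisfying payoff of the development so far.
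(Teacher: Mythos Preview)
Your argument is correct and is exactly the intended one: the paper states this result as an ``immediate corollary'' of Thm.~\ref{thmMVT2} without giving a separate proof, and your derivation via Thm.~\ref{thmChangeOfRetraction} together with the two identities of Thm.~\ref{thmMVT2} is precisely what is meant. Your cutoff remark to handle the compact-support hypothesis of Thm.~\ref{thmChangeOfRetraction} is a reasonable (and harmless) addition, since both the boundary term and the two $B_L$-integrals depend only on $f$ in a neighbourhood of $\overline{B^m_L}$.
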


\section*{Acknowledgements}

I would like to thank Florian Hanisch for interesting discussions.

\addcontentsline{toc}{section}{References}

\bibliographystyle{alpha}

\end{document}